\newcommand{\cmark}{\ding{51}}%
\newcommand{\xmark}{\ding{55}}%
\newcommand{\halfcmark}{\cmark\textsuperscript{\kern-0.6em\raisebox{-0.48ex}{\xmark}}}
\newcommand{\expnumber}[2]{{#1}\mathrm{e}{#2}}
\newcommand{\first}[1]{$\mathbf{#1}$}
\newcommand{\second}[1]{\underline{#1}}
\renewcommand{\algorithmiccomment}[1]{\bgroup\hfill{$\triangleright$~#1}\egroup}
\newcommand{\comment}[1]{\text{\small(#1)}}
\newcolumntype{L}[1]{>{\raggedright\arraybackslash}p{#1}}
\newcolumntype{C}[1]{>{\centering\arraybackslash}p{#1}}
\newcolumntype{R}[1]{>{\raggedleft\arraybackslash}p{#1}}
\def\epsilon{\varepsilon}
\def\Id{Id}
\def\shortneg{\text{-}}
\def\dXt{{d\xbm_t}}
\def\dBt{{d\Bbm_t}}
\def\Xt{{\xbm_t}}
\def\Bt{{\Bbm_t}}
\def\nut{{\nu_t}}
\def\nuz{{\nu_0}}
\def\nuj{{\nu_{t,0}}}
\def\nutz{{\nu_{t|0}}}
\def\nuzt{{\nu_{0|t}}}
\def\Xk{{\xbm_k}}
\def\Xkga{{\xbm_{k\gamma}}}
\def\Zk{{\Zbm_k}}
\def\Bkga{{\Bbm_{k\gamma}}}
\def\Xkpo{{\xbm_{k+1}}}
\def\Xzero{{\xbm_{0}}}
\def\sigmak{{\sigma_k}}
\def\alphak{{\alpha_k}}
\def\psigma{p_\sigma}
\def\psigmak{p_\sigmak}
\def\Pcalk{\Pcal_k}
\def\Gcalk{\Gcal_k}
\def\LG{L_\mathcal{G}}
\def\LGk{L_{\mathcal{G}_k}}
\def\LP{L_\mathcal{P}}
\def\LPk{L_{\mathcal{P}_k}}
\def\Lf{L_{f}}
\def\Ltilde{\tilde{L}}
\def\Ltildek{\tilde{L}_k}
\def\Lmax{L_{\mathsf{\tiny max}}}
\def\Ltildemax{\tilde{L}_{\mathsf{\tiny max}}}
\def\sigmabar{\bar{\sigma}}
\def\epsilonbar{\bar{\epsilon}}
\def\anneal{\textsc{APMC}}
\definecolor{limegreen}{rgb}{0.2, 0.8, 0.2}
\title{Provable Probabilistic Imaging using Score-based Generative Priors}
\author
{Yu Sun \Envelope, Zihui Wu, Yifan Chen, Berthy Feng, and Katherine L. Bouman\\
\vspace{1em}
\normalfont{
\small Department of Computational and Mathematical Sciences \\
\small California Institute of Technology \\
\Envelope~Corresponding author: sunyu@caltech.edu
}\\
\vspace{2em}
}
\begin{document}

\date{}
\maketitle
\thispagestyle{firstpagestyle}

\begin{abstract}
Estimating high-quality images while also quantifying their uncertainty are two desired features in an image reconstruction algorithm for solving ill-posed inverse problems.
In this paper, we propose \emph{plug-and-play Monte Carlo (PMC)} as a principled framework for characterizing the space of possible solutions to a general inverse problem.
PMC is able to incorporate expressive score-based generative priors for high-quality image reconstruction while also performing uncertainty quantification via posterior sampling.
In particular, we develop two PMC algorithms that can be viewed as the sampling analogues of the traditional plug-and-play priors (PnP) and regularization by denoising (RED) algorithms.
To improve the sampling efficiency, we introduce weighted annealing into these PMC algorithms, further developing two additional annealed PMC algorithms (APMC).
We establish a theoretical analysis for characterizing the convergence behavior of PMC algorithms.
Our analysis provides non-asymptotic stationarity guarantees in terms of the Fisher information, fully compatible with the joint presence of weighted annealing, potentially non-log-concave likelihoods, and imperfect score networks.
We demonstrate the performance of the PMC algorithms on multiple representative inverse problems with both linear and nonlinear forward models.
Experimental results show that PMC significantly improves reconstruction quality and enables high-fidelity uncertainty quantification.
\end{abstract}

\section{Introduction}
\label{Sec:InvPro}

The problem of accurately reconstructing high-quality images $\xbm\in\R^n$ from a set of sparse and noisy measurements $\ybm\in\R^m$ is fundamental in computational imaging.
These measurements often do not contain sufficient information to losslessly specify the target image, making the inverse problem \emph{ill-posed}. Image reconstruction methods should account for ill-posedness in two ways: by \emph{1)} imposing \emph{prior} knowledge to regularize the final solution image, and by \emph{2)} characterizing the \emph{posterior} probability distribution of all possible solutions.
While much of the computational imaging research has centered on exploiting prior knowledge, it remains essential to characterize the full distribution of solutions to understand uncertainty, regardless of any imposed constraints.
Traditionally, either the posterior distribution is ignored or it is derived under simplified image priors to make the problem tractable, leading to biased results. 
The objective of this work is to leverage recent advances in learning-based generative models in order to provably specify the posterior distribution under expressive image priors.

The Bayesian framework is commonly used to infer the posterior distribution $\pi(\xbm|\ybm)$ from the prior distribution $p(\xbm)$ of the desired image by incorporating the \emph{likelihood} $\ell(\ybm|\xbm)$
\begin{equation}
\label{Eq:Posterior}
\pi(\xbm|\ybm) \propto \ell(\ybm|\xbm) p(\xbm).
\end{equation}
Here, $\ell(\ybm|\xbm)$ probabilistically relates the image to the measurements and is defined by the known imaging system.
The quality of the inference of $\xbm|\ybm$ directly depends on the expressivity of the prior.
Classic choices include the sparsity-promoting priors such as total variation (TV)~\cite{Rudin.etal1992, Beck.Teboulle2009a}. However, these hand-crafted priors are not expressive enough to characterize complicated image structures. 
The focus has recently shifted to exploring \emph{learning-based} priors due to their strong expressive power; such priors are often parameterized by deep neural networks.
Plug-and-play priors (PnP)~\cite{Venkatakrishnan.etal2013, Sreehari.etal2016} and regularization by denoising (RED)~\cite{Romano.etal2017} are two popular methods in this direction.
They consider a pre-trained denoiser as an implicit image prior and use it to replace the functional prior module in the traditional iterative algorithms.
Despite the practical success under deep denoisers~\cite{Wu.etal2020,Ahmad.etal2020,zhang2020plug}, PnP/RED methods are essentially based on the \emph{maximum a posteriori (MAP)} estimation which only produces a single image and cannot account for the full posterior distribution.

Sampling from the posterior distribution offers a principled approach to gain insight into the uncertainty and credibility of a reconstructed image.
Such \emph{uncertainty quantification (UQ)} is especially crucial in nonlinear inverse problems where multiple distinct solutions could exist and lead to very different interpretations.
Traditional UQ methods, which rely on simplified image assumptions, can perform reasonably well for certain types of images~\cite{Bardsley2012,Repetti.etal2019}. 
However, they often yield limited performance when dealing with images containing complex structures.
Recently, \emph{score-based generative models (SGM)} have emerged as a powerful deep learning (DL) tool for sampling from complex high-dimensional distributions. In particular, SGMs learn the \emph{score} of an image distribution and use it in a \emph{Markov chain Monte Carlo (MCMC)} algorithm to perform iterative sampling.
It has been shown that SGMs achieve state-of-the-art performance in unconditional image generation~\cite{Yang.etal2023diffusion}.
Recently, several works have explored using different SGMs for posterior sampling~\cite{kawar2021snips, Jalal.etal2021, Song2022solving, Chung.etal2023diffusion}, demonstrating promising results.
Unfortunately, these methods encountered a trade-off between the applicability to nonlinear systems, theoretical guarantees on convergence, and accelerated sampling procedure; pursuing any combination of two of these goals would lead to limited performance in the remaining one.
This indicates that leveraging a SGM to conduct \emph{provable} and \emph{efficient} posterior sampling subject to \emph{sophisticated physical constraints} is underexplored.

In this paper, we aim to bridge the gap by proposing \emph{plug-and-play Monte Carlo (PMC)} as a principled posterior sampling framework for solving inverse problems.
PMC is built on the fusion of PnP/RED and SGM: it leverages powerful score-based generative priors in a plug-and-play fashion, similar to PnP/RED, while also enabling provable posterior sampling by incorporating the MCMC formulation used in the SGM.
Specifically, the key contributions of this work are as follows:
\begin{itemize}

\item We demonstrate how PnP and RED can be generalized to sampling by studying their continuous limits.
We show that PnP and RED converge to the same \emph{gradient-flow ordinary differential equation (ODE)} of the posterior, which corresponds to the noise-free version of the \emph{Langevin stochastic differential equation (SDE)}.
Note that this observation also leads to new insights about the mathematical equivalence between PnP and RED.

\item We develop two PMC algorithms by discretizing the Langevin SDE according to the formulations of PnP and RED.
We name our algorithms \emph{PMC-PnP} and \emph{PMC-RED}, respectively.
To the best of our knowledge, PMC-PnP represents a new addition to the existing sampling algorithms, while PMC-RED aligns with the plug-and-play unadjusted Langevin algorithm~\cite{Laumont.etal2022}.
Inspired by recent advances in SGM, we employ \emph{weighted annealing} for PMC algorithms to improve sampling efficiency.
Experimental results show that it can accelerate the sampling speed and facilitate the exploration of multiple modes more efficiently.

\item We present a comprehensive theoretical analysis of all the PMC algorithms by leveraging the optimization interpretation of the Langevin SDE. 
Our analysis establishes the stationary-distribution convergence in terms of Fisher information, with an explicit rate of $O(1/N)$.
This result is in direct alignment with the fixed-point analysis for PnP/RED in optimization.
Additionally, our analysis is fully compatible with the joint presence of potentially \emph{non-log-concave} likelihoods, \emph{imperfect} score networks, and \emph{weighted annealing}, providing a theoretical characterization of the algorithmic behavior in these scenarios.

\item We validate our algorithms on three real-world imaging inverse problems: compressed sensing, magnetic resonance imaging, and black-hole interferometric imaging.
Note that the black-hole imaging problem corresponds to a non-log-concave likelihood.
Our experimental results demonstrate improved reconstruction quality and reliable UQ enabled by the proposed PMC algorithms.

\end{itemize}

\section{Background}
\label{Sec:Background}

\subsection{Solving inverse problems with Bayesian inference}
\label{Sec:InverseProblem}
Consider the general inverse problem
\begin{equation}
\label{Eq:Inverse}
\ybm = \Abm(\xbm)+ \ebm,
\end{equation}
where the goal is to recover $\xbm \in \R^n$ given the measurements $\ybm \in \R^m$. Here, the measurement operator $\Abm: \R^n \rightarrow \R^m$ models the response of the imaging system, and $\ebm \in \R^m$ represents the measurement noise, which is often assumed to be additive white Gaussian noise (AWGN).
One popular Bayesian inference framework for imaging inverse problems is based on the MAP estimation:
\begin{equation}
\label{Eq:MAP}
\xbmhat = \argmax_{\xbm\in\R^n} \ell(\ybm|\xbm) p(\xbm) 
= \argmin_{\xbm\in\R^n} \big\{ g(\xbm) + h(\xbm) \big\}.
\end{equation}
Here, $g(\xbm) = -\log \ell (\ybm|\xbm)$ is often known as the data-fidelity term and $h(\xbm) = -\log p(\xbm)$ as the regularizer\footnotemark.
Commonly, the data-fidelity is set to the least-square loss 
$g(\xbm) = \frac{1}{2\beta^2}\|\ybm - \Abm(\xbm)\|_2^2$, which corresponds to the Gaussian likelihood $\ybm|\xbm \sim \Ncal(\Abm(\xbm), \beta^2I)$ with $\beta^2>0$ controlling the variance.

\footnotetext{While the words \emph{`regularizer'} and \emph{`prior'} are often interchangeable in the literature, we here use \emph{`regularizer'} to explicitly denote function $h$ in~\eqref{Eq:MAP}.}

Many popular regularizers, including those based on sparsity, are not differentiable. This precludes using simple algorithms such as gradient descent to solve~\eqref{Eq:MAP}.
Proximal methods~\cite{Boyd.etal2011,Beck.Teboulle2009} are commonly employed to accommodate nonsmooth regularizers by leveraging a mathematical concept known as the \emph{proximal operator}
\begin{equation}
\label{Eq:ProximalOperator}
\prox_{\mu h}(\zbm) \defn \argmin_{\xbm \in \R^n} \left\{\frac{1}{2}\|\xbm-\zbm\|_2^2 + \mu h(\xbm)\right\}.
\end{equation}
Here, the squared error enforces the output $\xbm$ to be close to the input $\zbm$, while $h$ penalizes the solutions falling outside the constraint set with $\mu>0$ adjusting the strength.
For many regularizers, the sub-optimization in~\eqref{Eq:ProximalOperator} can be efficiently solved without differentiating $h$~\cite{Beck.Teboulle2009,Beck.Teboulle2009a}.
One commonly used proximal method is the \emph{proximal gradient method}, which pairs the proximal operator with gradient descent
\begin{align}
\label{Eq:PGM}
\Xkpo &= \prox_{\gamma h}\big(\Xk - \gamma \nabla g(\Xk) \big),
\end{align}
where $\mu$ is usually set to $\gamma$ for ensuring convergence to the minimizer of \eqref{Eq:MAP}.
We highlight that the prior information is only imposed via $\prox_{\gamma h}$. 
Since the squared loss corresponds to Gaussian likelihood in denoising problems involving AWGN, a comparison between~\eqref{Eq:MAP} and~\eqref{Eq:ProximalOperator} indicates that the proximal operator can be interpreted as a MAP denoiser for AWGN by letting $h=-\log p(\xbm)$.

\begin{table*}[t]
\newcolumntype{D}{>{\centering\arraybackslash}p{30pt}}
\newcolumntype{E}{>{\centering\arraybackslash}p{50pt}}
\newcolumntype{F}{>{\centering\arraybackslash}p{50pt}}
\newcolumntype{G}{>{\centering\arraybackslash}p{50pt}}
\newcolumntype{H}{>{\centering\arraybackslash}p{80pt}}
\newcolumntype{I}{>{\centering\arraybackslash}p{50pt}}
    \centering
    \scriptsize
    \caption{A conceptual comparison of existing posterior sampling methods for probabilistic imaging. Note that the \emph{``Annealing"} column is used for explaining the differences within the MCMC-based methods.}
    \begin{tabularx}{0.93\linewidth}{DEFGHHI}
        \toprule
        \textbf{Category} & \textbf{Reference} & \textbf{Generative prior} & \textbf{Model agnostic} & \textbf{Type of $A(\cdot)$} & \textbf{Convergence guarantees} & \textbf{Annealing} \\
        \midrule
        \multirow{2}{*}{\pbox{2cm}{Variational \\ Bayesian}} & \cite{SunHe.2021} & \xmark & \xmark & General & \xmark & - \\
         & \cite{Feng.etal2023scorebased} & \cmark & \xmark & General & \xmark & - \\ 
        \noalign{\vskip 0.6ex}\hdashline\noalign{\vskip 0.6ex}
        \multirow{3}{*}{\pbox{2cm}{DM-based}} & \cite{Song2022solving} & \cmark & \cmark & Linear & \xmark & - \\
         & \cite{Chung.etal2023diffusion} & \cmark & \cmark & General & \xmark & - \\
         & \cite{Liu.etal2023} & \cmark & \xmark & Linear & \xmark & - \\
         \noalign{\vskip 0.6ex}\hdashline\noalign{\vskip 0.6ex}
         \multirow{5}{*}{\pbox{2cm}{MCMC-based}} & \cite{Jalal.etal2021} & \cmark & \cmark & Linear & \cmark$^1$ & \cmark  \\
         & \cite{kawar2021snips} & \cmark & \cmark & Linear & \xmark & \cmark \\
         & \cite{Laumont.etal2022} & \xmark & \cmark & General & \cmark & \xmark \\
         & \cite{Coeurdoux.etal2023} & \cmark & \cmark & Linear & \xmark & \xmark \\
         & \cite{Bouman.etal2023generative} & \xmark & \cmark & Linear & \cmark$^2$ & \xmark \\  
         \noalign{\vskip 0.6ex}\hline\noalign{\vskip 0.6ex}
         \pbox{2cm}{MCMC-based} & \textbf{Ours} & \cmark & \cmark & General & \cmark & \cmark \\
        \bottomrule
    \end{tabularx}
    \label{Tab:Comparison}\\
    \vspace{0.1cm}
    \begin{minipage}{0.9\linewidth}
        $^1$Requires $\Abm(\cdot)$ to be Gaussian random matrix. \quad $^2$Gaurantees on asymptotic convergence. \\
    \end{minipage}
    \vspace{-0.1cm}
    \vspace{-10pt}
\end{table*}

\subsection{Using denoisers to impose image priors}
Inspired by the connection between proximal operator and MAP denoiser, PnP~\cite{Venkatakrishnan.etal2013, Sreehari.etal2016} emerged as a pioneering approach for imposing expressive yet implicit image priors through denoising. 
In this method, the $\prox_{\gamma h}$ step is replaced with an general image denoiser $\Dcal_\sigma:\R^n\rightarrow\R^n$ with $\sigma>0$ controlling the denoising strength.
For example, the formulation of PnP proximal gradient method~\cite{Kamilov.etal2017} is given by
\begin{align}
\label{Eq:PnP}
\Xkpo &= \Dcal_\sigma\big(\Xk - \gamma \nabla g(\Xk) \big)
\end{align}
where $\gamma>0$ denotes the step-size. 
In order to be backward compatible with~\eqref{Eq:PGM}, the strength parameter is often scaled with the step-size as $\sigma = \sqrt{\gamma\lambda}$ for some $\lambda>0$.
However, the denoiser may not correspond to any explicit regularizer $h$, making PnP generally lose the interpretation as proximal optimization.
Other PnP algorithms have also been proposed based on different formulations~\cite{Metzler.etal2016, Buzzard.etal2017, Sun.etal2019a, Sun.etal2020}, and we refer to~\cite{Kamilov.etal2023} for a comprehensive review.

RED~\cite{Romano.etal2017} is an alternative method for leveraging image denoiser within MAP optimization algorithms.
Different from PnP, RED does not originate from the proximal-denoiser connection. Instead, it uses the noise residual to approximate the gradient of an implicit regularizer.
One widely-used example is the RED gradient descent~\cite{Romano.etal2017}
\begin{equation}
\label{Eq:RED}
\Xkpo = \Xk - \gamma \Big( \nabla g(\Xk) + \tau \big( \Xk-\Dcal_\sigma(\Xk) \big) \Big),
\end{equation}
where $\tau>0$ is the regularization parameter. 
In some special cases, RED is able to link the noise residual $\xbm-\Dcal_\sigma(\xbm)$ to an explicit regularizer $h(\xbm) = \frac{1}{2}\xbm^\Tsf(\xbm-\Dcal_\sigma(\xbm))$~\cite{Romano.etal2017, Reehorst.Schniter2019}; however, such RED regularizers do not exist for general denoisers, including those parameterized by deep neural networks.
On the other hand, RED can be connected with the Moreau envelope~\cite{Sun.etal2019c, Laumont.etal2023}, which allows RED to be interpreted as a generalization of proximal optimization.

The success of PnP and RED has motivated theoretical studies to understand their convergence under general denoisers.
Due to the nonexistence of explicit regularizers, PnP and RED are commonly interpreted as fixed-point iterations, and their convergence to a fixed point has been shown for various algorithms~\cite{Metzler.etal2016, Chan.etal2016, Buzzard.etal2017, Reehorst.Schniter2019, Sun.etal2019a, Ryu.etal2019, Sun.etal2020, Gavaskar.Chaudhury2020, Xu.etal2020, Nair.etal2021, Cohen.etal2021, Wu.etal2020, Sun.etal2021, Hu.etal2022, Kamilov.etal2023}.
As such a fixed-point analysis often relies on the denoiser to be (firmly) nonexpansive, prior works have studied how to impose such conditions in training DL-based denoisers~\cite{Ryu.etal2019, Sun.etal2019c, Pesquet.etal2021}.
Considerable interest has also been devoted to investigating the mathematical equivalence between PnP and RED~\cite{Reehorst.Schniter2019, Cohen.etal2021, Liu.2021}.
Nevertheless, a direct algorithmic equivalence between \eqref{Eq:PnP} and \eqref{Eq:RED} is still absent in the literature.

\subsection{Score-based generative models (SGM)}
\label{Sec:SGM}
SGMs have been recently proposed as a powerful generative method for drawing samples from a complex high-dimensional distribution. 
The main idea of SGMs is to use the score $\nabla \log p(\xbm)$ of the desired distribution $p$ in either MCMC algorithms~\cite{Song.etal2019} or diffusion models (DM)~\cite{Ho.etal2020,Song.etal2021score}.

As the true score is almost impossible to obtain for complex distributions $p$, a SGM resorts to finding a computable approximation of $\nabla \log p$ in the form of a neural network; this technique is often known as \emph{score matching}~\cite{Hyvarien2005, Vincent.etal2011, Alain.etal2014}.
One elegant approximation is inspired by Tweedie's formula~\cite{Efron2011}.
Let $\zbm \sim p$ be a random sample from the desired distribution, $\ebm\sim\Ncal(0,\sigma^2 I)$ the AWGN, and $\xbm = \zbm + \ebm$ the noisy image. 
Then, we have
\begin{equation}
\label{Eq:Tweedie}
\nabla \log p_\sigma(\xbm) = \frac{\E[\zbm | \xbm] - \xbm}{\sigma^2},
\end{equation}
where $\E[\zbm|\xbm]$ is the \emph{minimum mean squared error (MMSE)} estimator of $\zbm$ given $\xbm$, and $\psigma$ is the distribution of the noisy image, which is distinct from the distribution of $p$ itself.
As the sum of random variables results in a distribution equal to the convolution of their individual distributions, $\psigma$ is given by 
$$\psigma(\xbm) = \int_{\R^n} p(\zbm) \phi_\sigma(\zbm-\xbm)d\zbm,$$
where $\phi_\sigma$ denotes the probability density function of $\Ncal(0,\sigma^2 I)$.
The distribution $\psigma$ can be intuitively interpreted as a smoothed version of the true distribution.

An extension of~\eqref{Eq:Tweedie} leads to the denoising score matching (DSM) loss~\cite{Vincent.etal2011} that enables learning the score of the smoothed prior from data
\begin{equation}
\label{Eq:DSM}
\textsf{DSM}(\theta) = \E\left[ \left\|\frac{\zbm - \xbm}{\sigma^2} - \Scal_\theta(\xbm, \sigma) \right\|_2^2 \right],
\end{equation}
where $\Scal_\theta(\xbm, \sigma)$ is a score network parameterized by $\theta$ that is meant to approximate $\nabla \log p_\sigma(\xbm)$.
For a review of recent SGM algorithms, we refer to the survey presented in~\cite{Yang.etal2023diffusion}.
Lastly, we note that the use of Tweedie’s formula and score matching has also been studied in the context of PnP and RED~\cite{Bigdeli.etal2017, Reehorst.Schniter2019, Xu.etal2020, Kamilov.etal2023}.

\subsection{Related probabilistic imaging methods}
We here review the related methods that leverage DL models to perform posterior sampling defined in~\eqref{Eq:Posterior}.
Table~\ref{Tab:Comparison} provides a conceptual comparison of the methods discussed in this section.
Note that this table should not be interpreted as an indicator of the practical performance of these methods.

\subsubsection{Variational Bayesian methods}
Deep variational Bayesian methods often parameterize the posterior distribution as a deep neural network, which can be trained by minimizing the evidence lower bound (ELBO).
It has been shown that ELBO is related to the Kullback-Leibler divergence with respect to (w.r.t.) the posterior distribution~\cite{Kingma.etal2013}.
Recent works have leveraged this methodology for probabilistic imaging under either hand-crafted~\cite{SunHe.2021} or score-based priors~\cite{Feng.etal2023scorebased}.
While these methods are easy to use, the evaluation of the ELBO requires access to the log prior.
Due to this requirement, \cite{Feng.etal2023scorebased} needs to solve an expensive high-dimensional ODE to obtain the log prior associated with the score network, making itself unscalable to large-scale imaging problems. 
A follow-up work~\cite{Feng.etal2023} addresses the computation limit at the cost of using an inexact log prior.

\subsubsection{DM-based methods}
A DM is one type of SGM that learns to sample a distribution $p$ by reversing a diffusion process from $p$ to some simple distribution~\cite{Yang.etal2023diffusion}.
The two processes can be mathematically formulated as a diffusion SDE and its reverse-time SDE that relies on the time-varying score $\nabla\log p_t(\xbm_t)$.
To adapt DMs for posterior sampling, recent works focus on constraining the reverse process by integrating the likelihood. 
One line of work uses the proximal operator~\cite{Song2022solving, Liu.etal2023}, while others resort to leveraging the gradient~\cite{Chung.etal2023diffusion}.
These methods are empirically attractive due to their short inference time, but generally lack theoretical guarantees on their convergence to the posterior distribution.
Recent experimental evidence~\cite{Feng.etal2023scorebased,Cardoso.eal2023} shows that they lead to inaccurate sampling even in simple cases.

\subsubsection{MCMC-based methods}
MCMC methods recently received significant interest due to the emergence of SGMs.
The majority of MCMC-based methods for imaging inverse problems are based on the Langevin MCMC.
Recent works have studied the incorporation of linear forward models via singular value decomposition (SVD)~\cite{kawar2021snips} or the recovery analysis under random Gaussian matrix~\cite{Jalal.etal2021}.
Despite their empirical performance, these works are often restricted to linear inverse problems and lack theoretical guarantees on their algorithmic convergence to the posterior distribution; 
see also~\cite{Guo.etal2019, Kadkhodaie.etal2021}.
A recent work~\cite{Laumont.etal2022} has analyzed the convergence of Langevin MCMC for posterior sampling,  under potentially non-log-concave likelihoods and DL-based image denoisers. 
The analysis in \cite{Laumont.etal2022} obtains convergence guarantees in terms of the Wasserstein metric or total variations (TV), which is stronger than the Fisher information criterion used in the proposed analysis. 
On the other hand, ours obtains an explicit convergence rate and is fully compatible with annealing, covering aspects that were not addressed by~\cite{Laumont.etal2022}.
We further note that our analysis can be interpreted as a sampling analogue to the fixed-point analysis established for PnP/RED.
Other MCMC algorithms have also been investigated for posterior sampling.
A line of concurrent works combines the Gibbs splitting with either denoising~\cite{ Bouman.etal2023generative} or generative priors~\cite{Coeurdoux.etal2023}.
The key idea of this approach is to alternately sample from the likelihood and prior distributions conditioned on the previous iterate.
These methods differ from the Langevin-based methods, including the proposed PMC, by not relying on the score function.

Before concluding this section, we also highlight some pioneering works that investigated probabilistic imaging before the emergence of SGMs. 
One notable line of work combines Langevin MCMC with proximal techniques for sampling log-concave yet nonsmooth distributions~\cite{Pereyra2016, Brose.etal2017, Durmus.etal2018}.
By leveraging convex analysis, convergence guarantees can be established for those methods under different setups, including Moreau-Yosida envelope~\cite{Pereyra2016, Brose.etal2017, Durmus.etal2018} and proximal gradient~\cite{Pereyra2016, Eftekhari.etal2023}.
We note that these methods serve as the foundation for the development of \emph{plug-and-play unadjusted Langevin algorithm (PnP-ULA)}~\cite{Laumont.etal2022}.

\section{Method: Plug-and-Play Monte Carlo}
\label{Sec:Method}
We start the presentation of PMC by deriving the gradient-flow ODE as the continuous limit of both PnP and RED.
By extending the gradient-flow ODE to Langevin SDE, we then formulate the PMC algorithms. 
Lastly, we present the annealed PMC algorithms by introducing weighted annealing.

\begin{algorithm}[t]
\setstretch{1.05}
\caption{Plug-and-play Monte Carlo (PMC)}\label{Alg:PMC}
\begin{algorithmic}[1]
\STATE \textbf{input: } $\xbm_0 \in \R^n$, $\gamma > 0$, and $\sigma>0$.
\FOR{$k = 0, 1, \dots, N-1$}
\STATE $\Zk \leftarrow \Ncal(0, I)$
\STATE \textbf{switch} \emph{discretization}
\STATE \quad \textbf{case} PnP: \hfill (PMC-PnP)
\STATE \quad \quad $\Pcal(\Xk) \leftarrow \nabla g(\Xk) - \Scal_\theta \big( \Xk - \gamma \nabla g(\Xk), \sigma \big)$
\STATE \quad \quad $\Xkpo \leftarrow \Xk - \gamma \Pcal(\Xk) + \sqrt{2\gamma} \Zk$
\vspace{0.5em}
\STATE \quad \textbf{case} RED: \hfill (PMC-RED)
\STATE \quad \quad  $\Gcal(\Xk) \leftarrow \nabla g(\Xk) - \Scal_\theta \big( \Xk, \sigma \big)$
\STATE \quad \quad $\Xkpo \leftarrow \Xk - \gamma \Gcal(\Xk) + \sqrt{2\gamma} \Zk$
\ENDFOR\label{euclidendwhile}
\end{algorithmic}
\end{algorithm}%

\subsection{Continuous-time interpretation of PnP/RED}
Let $\Rcal_\sigma\defn\Id-\Dcal_\sigma$ be the residual predictor. Consider the following reformulation of the PnP algorithm in~\eqref{Eq:PnP}
\begin{align}
\label{Eq:PnP2}
\Xkpo &= \Dcal_\sigma\big(\Xk - \gamma \nabla g(\Xk) \big) \nonumber\\
&= \Xk - \gamma \Big( \nabla g(\Xk) + \frac{1}{\gamma}\Rcal_\sigma \big( \Xk - \gamma \nabla g(\Xk) \big) \Big),
\end{align}
and the RED algorithm in~\eqref{Eq:RED}
\begin{align}
\label{Eq:RED2}
\Xkpo &= \Xk - \gamma \Big( \nabla g(\Xk) + \tau \big( \xbm-\Dcal_\sigma(\Xk) \big) \Big) \nonumber\\
\ 
&= \Xk - \gamma \Big( \nabla g(\Xk) + \tau\Rcal_\sigma \big( \Xk \big) \Big).
\end{align}
If $\Dcal_\sigma(\xbm)$ is an MMSE image denoiser, by invoking Tweedie's formula in~\eqref{Eq:Tweedie} we can derive the residual predictor as
$\Rcal(\xbm)=-\sigma^2\nabla\log \psigma(\xbm)$.
Plugging the residual predictor into~\eqref{Eq:PnP2} and~\eqref{Eq:RED2} and rearranging the terms yields the following discrete gradient-flow formulation for PnP
\begin{align}
\label{Eq:PnPscore}
&\frac{\Xkpo - \Xk}{\gamma}= - \Pbf(\Xk)\quad\text{where}\nonumber\\
&\Pbf(\xbm)\defni \nabla g(\xbm) - (\sigma^2/\gamma) \nabla \log \psigma \big( \xbm - \gamma\nabla g(\xbm) \big)
\end{align}
and for RED
\begin{align}
\label{Eq:REDscore}
&\frac{\Xkpo - \Xk}{\gamma} = - \Gbf(\Xk)\quad\text{where}\nonumber\\
&\Gbf(\xbm)\defni \nabla g(\xbm) - \tau\sigma^2 \nabla \log \psigma(\xbm).\quad\quad\quad\quad\quad\;
\end{align}
Note that the dynamics of PnP and RED are fully characterized by $\Pbf(\xbm)$ and $\Gbf(\xbm)$, respectively.
Let $\gamma = \sigma^2$ and $\tau = 1/\sigma^2$. We have
\begin{align}
\lim_{\sigma\rightarrow 0}\Pbf(\xbm)
&= \nabla g(\xbm) - \lim_{\sigma\rightarrow 0} \big\{\nabla \log \psigma \big( \xbm - \sigma^2\nabla g(\Xk) \big) \big\} \nonumber\\
&= \nabla g(\xbm) - \lim_{\sigma \rightarrow 0} \big\{ \nabla \log \psigma \vphantom{\big(\big)} ( \xbm ) \big\} = \lim_{\sigma\rightarrow 0}\Gbf(\xbm) \nonumber\\
&= -\nabla \log \ell(\ybm | \xbm) - \nabla\log p(\xbm) \nonumber\\
&= -\nabla \log \pi(\xbm | \ybm).
\end{align}
Here, we need $\nabla\log \psigma(\xbm)$ to converge uniformly to $\nabla \log p(\xbm)$, which holds under Assumption~\ref{As:Prior} and \ref{As:Mismatch} made in Section~\ref{Sec:ConvPMCRED}.
Note that $\sigma^2 = \gamma$ resembles the requirement in proximal gradient method that the strength of the proximal operator $\mu$ is set to $\gamma$.
By plugging the limits of $\Pbf(\xbm)$ and $\Gbf(\xbm)$ into~\eqref{Eq:PnPscore} and~\eqref{Eq:REDscore} and taking $\gamma$ to zero, it follows that PnP and RED both converge to the gradient-flow ODE given by
\begin{equation}
\label{Eq:GFODE}
\dXt = \nabla \log \pi(\Xt | \ybm)dt.
\end{equation}
Recall that $\pi(\xbm|\ybm) \propto \ell(\ybm|\xbm) p(\xbm)$ is the posterior distribution.
With this continuous picture, we can draw a few insights about PnP and RED that are not clear through the discrete formulations.
First, both PnP and RED are governed by the exact same gradient-flow ODE under the MMSE denoiser.
Although the two algorithms apply the residual predictor to different iterates, they can be viewed as simply using different discretization strategies for numerically computing the ODE.
Second, the gradient-flow ODE links PnP/RED to the Langevin diffusion described by the following SDE 
\begin{equation}
\label{Eq:Langevin}
\dXt = \nabla\log\pi(\Xt | \ybm)dt + \sqrt{2}\,\dBt,
\end{equation}
where $\{\Bt\}_{t\geq0}$ denotes the $n$-dimensional Brownian motion.
As shown in the seminal work~\cite{Jordan.etal1998}, Langevin diffusion can be interpreted as the gradient flow in the space of probability distributions.
The symmetry between~\eqref{Eq:GFODE} and~\eqref{Eq:Langevin} inspires us to develop the parallel MCMC algorithms of PnP/RED for posterior sampling.

\subsection{Formulation of PMC-PnP/RED}

The derivation of PMC-PnP and PMC-RED is based on discretizing the Langevin diffusion in~\eqref{Eq:Langevin} according to the discretization strategies used in PnP and RED, respectively.
To approximate $\nabla\log \psigma(\xbm)$, one can use an MMSE image denoiser $\Dcal_\sigma(\xbm)=\E[\zbm|\xbm]$ as in PnP/RED; however, it is difficult to obtain an exact MMSE denoiser in practice. 
As the score is the key, we instead bypass this difficulty by leveraging a score network to obtain a direct approximation $\Scal_\theta(\xbm,\sigma)\approx\nabla\log\psigma(\xbm)$, which is trained by minimizing the DSM loss in~\eqref{Eq:DSM}.
The formulation of PMC-PnP is given by following~\eqref{Eq:PnPscore} and~\eqref{Eq:Langevin}
\begin{align}
\label{Eq:PMCPnP}
&\Xkpo = \Xk - \gamma \Pcal(\Xk) + \sqrt{2\gamma}\Zk \nonumber\\
&\text{where}\quad \Pcal(\xbm) = \nabla g(\xbm) - \Scal_\theta \big( \xbm - \gamma\nabla g(\xbm), \sigma \big),
\end{align}
where $\Zk=\int_{k}^{k+1}\dBt$ follows the $n$-dimensional i.i.d standard normal distribution.
Similarly, we can derive PMC-RED
\begin{align}
\label{Eq:PMCRED}
&\Xkpo = \Xk - \gamma \Gcal(\Xk) + \sqrt{2\gamma}\Zk \nonumber\\
&\text{where}\quad \Gcal(\xbm) = \nabla g(\xbm) - \Scal_\theta(\xbm, \sigma). \quad\quad\quad\quad\;
\end{align}
Algorithms~\ref{Alg:PMC} summarizes the details of PMC-PnP and PMC-RED.
When $\sigma$ is sufficiently small, PMC-PnP/RED approximately samples from the true posterior distribution.
In section~\ref{Sec:Analysis}, we present a detailed analysis of both PMC algorithms.
We note that PMC-RED has been proposed as the PnP-ULA in~\cite{Laumont.etal2022}, while PMC-PnP has not been explicitly studied in the existing literature.

\begin{figure}[t!]
\centering
\includegraphics[width=0.6\linewidth]{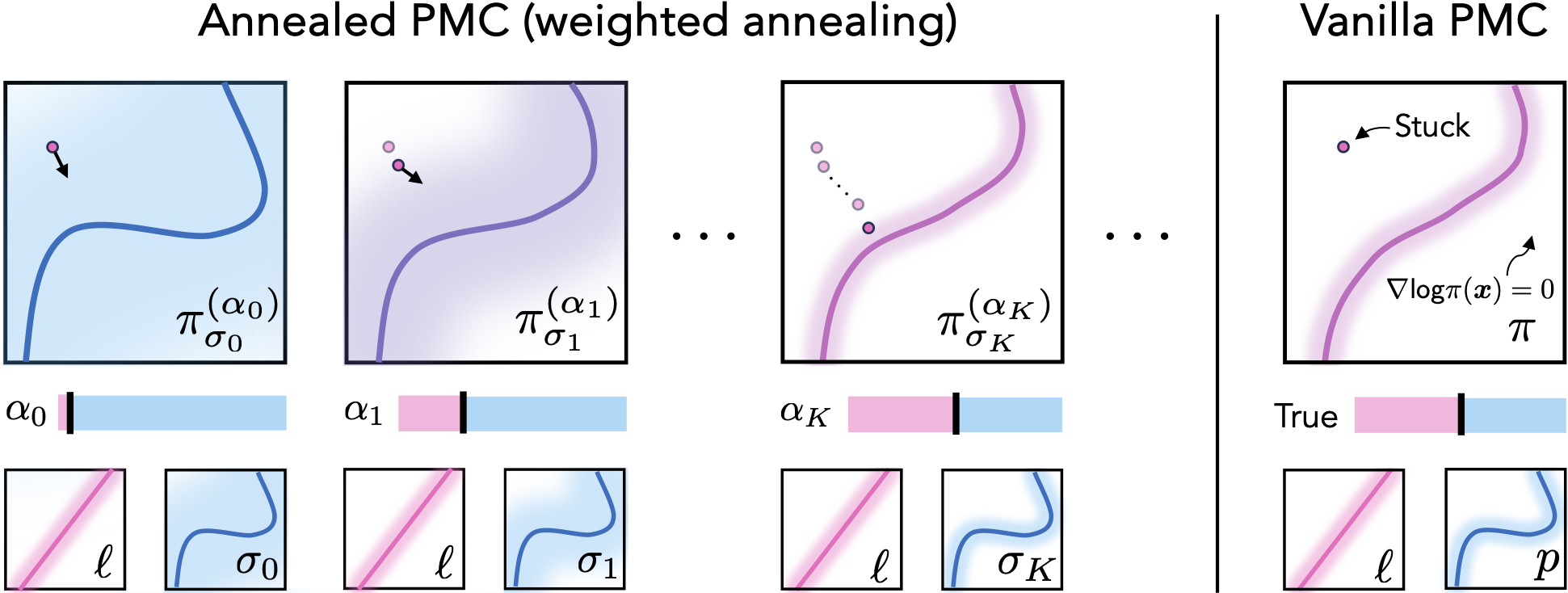}
\caption{
Conceptual illustration of how weighted annealing improves the convergence of APMC algorithms by introducing the weighted posteriors $\{\pi^{(\alphak)}_\sigmak\}$.
The solid curves and shade respectively denote the mean and probability density of the distribution; the white area means $\nabla \log p(\xbm)=0$.
In order to facilitate the \emph{vanilla PMC algorithm} to escape from plateaus in $\nabla \log p(\xbm)$, weighted annealing progressively decreases \emph{1)} the smoothing strength ($\sigmak$) of the prior and \emph{2)} its relative weights ($\alphak$) to the likelihood.
}
\vspace{-10pt}
\label{Fig:Annealing}
\end{figure}

\subsection{Weighted annealing for enhanced performance}

Langevin algorithms are known to suffer from slow convergence and mode collapse when sampling high-dimensional multimodal distributions.
Inspired by the annealed importance sampling~\cite{Neal.2001}, we propose \emph{weighted annealing} for alleviating these problems.
Our strategy considers a sequence of weighted posterior distributions 
\begin{align}\nonumber
&\pi^{(\alphak)}_\sigmak(\xbm | \ybm) \propto \ell(\ybm | \xbm)\psigmak^\alphak(\xbm), \\
\text{where} \quad &\alpha_0 > \alpha_1 > \dots > \alpha_K = \dots =\alpha_{N-1}=1, \nonumber\\
&\sigma_0 > \sigma_1 > \dots > \sigma_K = \dots = \sigma_{N-1}\approx0, \nonumber
\end{align}
where $\{\alphak\}_{k=0}^{N-1}$ and $\{\sigmak\}_{k=0}^{N-1}$ decays from large initial values to one and almost zero, respectively.
Note that $\alpha$ adjusts the relative weights of the likelihood and prior via powering.
At the beginning, the weighted posterior is dominated by the smoothed prior $\psigmak$, whose density is less concentrated and can facilitate the algorithm to escape from the plateaus in $\nabla \log p(\xbm)$. 
As the iteration number increases, the likelihood starts to contribute more strongly, and the smoothed prior $p_{\sigma_k}$ becomes similar to the true prior $p$, forcing the iterates to converge to the desired posterior.
Fig.~\ref{Fig:Annealing} conceptually illustrates this evolution. 
Intuitively, this annealing procedure speeds up the initial burn-in process by first flatting the distribution landscape and then gradually adding back complex structures. 
In practice, we observe that the weighted annealing works the best when $\{\alpha\}_{k=0}^{N-1}$ and $\{\sigma\}_{k=0}^{N-1}$ share the same schedule.

Algorithm~\ref{Alg:APMC} summarizes the details of the \emph{annealed PMC-PnP (APMC-PnP) and PMC-RED (APMC-RED)}.
A powering-free strategy has been proposed in~\cite{Jalal.etal2021},
which applies $\sigma$-smoothing to both likelihood and prior under heterogenous schedules.
Ours differs from~\cite{Jalal.etal2021} by using a shared schedule and combining the $\alpha$-powering and $\sigma$-smoothing mechanisms.
In the next section, we also analyze the convergence of the APMC algorithms.

\begin{algorithm}[t!]
\setstretch{1.05}
\caption{Annealed Plug-and-play Monte Carlo (APMC)}\label{Alg:APMC}
\begin{algorithmic}[1]
\STATE \textbf{input: } $\xbm_0 \in \R^n$, $\gamma > 0$, $\alpha_0 > 0$, and $\sigma_0 > 0$.
\FOR{$k = 0, 1, \dots, N-1$}
\STATE $\Zk \leftarrow \Ncal(0, I)$
\STATE $\sigma_k, \alpha_k \leftarrow \mathsf{WeightedAnnealing}(\sigma_0, \alpha_0, k)$.
\STATE \textbf{switch} \emph{discretization}
\STATE \quad \textbf{case} PnP:  \hfill (APMC-PnP)
\STATE \quad\quad $\Pcalk(\Xk) \leftarrow \nabla g(\Xk) - \alpha_k\Scal_\theta \big( \Xk - \gamma \nabla g(\Xk), \sigma_k \big)$
\STATE \quad \quad $\Xkpo \leftarrow \Xk - \gamma \Pcalk(\Xk) + \sqrt{2\gamma} \Zk$
\vspace{0.5em}
\STATE \quad \textbf{case} RED:  \hfill (APMC-RED)
\STATE \quad \quad $\Gcalk(\Xk) \leftarrow \nabla g(\Xk) - \alphak \Scal_\theta \big( \Xk, \sigmak \big)$
\STATE \quad \quad $\Xkpo \leftarrow \Xk - \gamma \Gcalk(\Xk) + \sqrt{2\gamma} \Zk$
\ENDFOR\label{euclidendwhile}
\end{algorithmic}
\end{algorithm}%

\section{Stationary-Distribution Analysis}
\label{Sec:Analysis}
Inspired by the fixed-point analysis of PnP/RED, we seek to establish the stationary-distribution convergence for the proposed PMC algorithms.
We start our analysis by first introducing the optimization interpretation of Langevin diffusion~\cite{Jordan.etal1998}.
Then, we present the assumptions and main results for the vanila and annealed PMC algorithms, respectively.

\subsection{Langevin diffusion as optimization}
Consider the following optimization of \emph{Kullback-Leibler (KL)} divergence in the space of probability distributions equipped with the Wasserstein metric
\begin{align}
\label{Eq:KLOptimization}
&\argmin_{\nu} \; \KL(\nu \,\|\, \pi) \nonumber\\
&\text{where}\quad\KL(\nu \,\|\, \pi) = \int_{\R^n} \nu(\xbm) \log\frac{\nu(\xbm)}{\pi(\xbm)} d\xbm.
\end{align}
where $\pi$ denotes the desired posterior distribution and $\nu$ the iterate.
Similar to the gradient concept in the Euclidean space, the gradient under the Wasserstein metric can be defined~\cite{villani2009optimal}.
In particular, the Wasserstein gradient of $\KL(\nu \,|\, \pi)$ is given by $\nabla_\nu \KL(\nu \,|\, \pi) = \nabla \log\frac{\nu(\xbm)}{\pi(\xbm)}$~\cite{Ambrosio2008}, and its expected norm is known as the \emph{relative Fisher information (FI)}
\begin{equation}
\label{Eq:FI}
\FI(\nu \,\|\, \pi)
= \int_{\R^n} \left\| \nabla \log\frac{\nu(\xbm)}{\pi(\xbm)} \right\|^2_2 \nu(\xbm) d\xbm.
\end{equation}
If $\nut$ denotes the distribution obtained by Langevin diffusion at time $t$, then the time derivative of $\KL(\nut \,\|\, \pi)$ is the negative FI, i.e. $\frac{d}{dt}\KL(\nut \,\|\, \pi) = -\FI(\nut \,\|\, \pi)$~\cite{Ambrosio2008,villani2009optimal},
which shows that the Langevin diffusion is a gradient flow in the probability space.
From an optimization point of view, $\FI(\nu_t \,\|\, \pi)$ is analogous to the $\ell$-2 norm of the gradient in $\R^n$\cite{Balasubramanian.etal2022}.
To leverage this, we derive the convergence of $\FI(\nu_t \,\|\, \pi)$ under a ``linear interpolation" of the distributions obtained by PMC algorithms (see Supplement I for more details), which implies the stationarity of the discrete algorithms.
Different from optimization, if $\nu$ and $\pi$ have positive and smooth densities, $\FI(\nu \,\|\, \pi)=0$ indicates that $\nu$ and $\pi$ are equal.
This implies that the value of FI can serve as a criterion to measure the convergence of distributions.
We refer to~\cite{Balasubramanian.etal2022} for more discussions on this topic.

\subsection{Convergence of stationary PMC algorithms}
\label{Sec:ConvPMCRED}

We begin our analysis by first considering the stationary PMC algorithms without weighted annealing.
\begin{assumption}
\label{As:Likelihood}
We assume that the log-likelihood $\log \ell(\ybm|\xbm)$ is differentiable and has a Lipschitz continuous gradient with constant $L_g>0$ for any $\xbm_1, \xbm_2\in\R^n$
\begin{equation}\tag{A1}
\big\| \nabla \log \ell(\ybm|\xbm_1) - \nabla \log \ell(\ybm|\xbm_2) \big\|_2 \leq L_g \big\|\xbm_1 - \xbm_2 \big\|_2.
\end{equation}
This is equivalent to assuming $\nabla g(\xbm) = -\nabla\log\ell(\ybm|\xbm)$ to be Lipschitz continuous with $L_g$.
\end{assumption}
Note that Assumption~\ref{As:Likelihood} does not assume the log-concavity of the likelihood (i.e. convexity of the data-fidelity term), meaning that our analysis is compatible with nonlinear inverse problems.

\begin{assumption}
\label{As:Prior}
We assume that the log-prior $\log p(\xbm)$ is differentiable and has a Lipschitz continuous gradient with a finite constant $L_p>0$ for any $\xbm_1, \xbm_2\in\R^n$
\begin{equation}\tag{A2}
\big\| \nabla \log p(\xbm_1) - \nabla \log p(\xbm_2) \big\|_2 \leq L_p \big\|\xbm_1 - \xbm_2 \big\|_2.
\end{equation}
\end{assumption}
This assumption is general and only poses the basic regularity condition for the underlying true prior. 

\begin{assumption}
\label{As:Score}
We assume the score network $\Scal_\theta(\xbm, \sigma)$ satisfies the following conditions
\begin{enumerate}[ref=\theassumption.\theenumi, label=(\alph*)]
\item\label{As:Score.a} For any $\sigma>0$, $\Scal_\theta(\xbm, \sigma)$ is Lipschitz continuous with $L_\sigma>0$ for any $\xbm_1, \xbm_2\in\R^n$
\begin{equation}\tag{A3.a}
\label{ScoreA}
\|\Scal_\theta(\xbm_1, \sigma) - \Scal_\theta(\xbm_2, \sigma)\|_2 \leq L_\sigma \|\xbm_1 - \xbm_2\|_2
\end{equation}
\item\label{As:Score.b} For any $\sigma>0$, 
$\Scal_\theta(\xbm, \sigma)$ has a bounded error $\epsilon_\sigma<+\infty$ for any $\xbm\in\R^n$
\begin{equation}\tag{A3.b}
\label{ScoreB}
\|\Scal_\theta(\xbm, \sigma) - \nabla\log\psigma(\xbm)\|_2 \leq \epsilon_\sigma
\end{equation}
\end{enumerate}
\end{assumption}
We highlight that Assumption~\ref{As:Score}(b) accounts for the network approximation error that is inevitable in practice.

\begin{assumption}
\label{As:Mismatch}
Let $\psigma(\xbm) = \int_{\R^n} p(\zbm) \phi_\sigma(\xbm-\zbm)d\zbm$ denote the smoothed prior, where $\phi_\sigma$ is the probability density function of $\Ncal(0,\sigma^2 I)$.
We assume $\nabla \log p_\sigma(\xbm)$ has a bounded error from $\nabla \log p(\xbm)$, that is, for any $\sigma>0$ and $\xbm\in\R^n$
\begin{equation}\tag{A4}
\| \nabla \log p_\sigma(\xbm) - \nabla \log p(\xbm) \|_2 \leq \sigma C.
\end{equation}
Note that $\log\psigma(\xbm)$ is continuously differentiable as $\psigma(\xbm)$ is a convolution of $p(\xbm)$ and $\phi_\sigma(\xbm)$.
\end{assumption}
Under Assumption~\ref{As:Prior} and \ref{As:Mismatch}, it holds that $\nabla\log\psigma(\xbm)$ converges to $\nabla\log p(\xbm)$ as $\sigma$ decreases to zero.
In special cases, such as $p$ being a Gaussian, the bound of the score mismatch can be derived analytically. However, it is generally difficult to derive a closed-form expression of the mismatch for an arbitrary distribution.

Under these assumptions, we now derive the convergence of PMC-RED.
\begin{theorem}[PMC-RED]
\label{Th:PMCRED} 
Let $\{\nu_t\}_{t\geq0}$ denote the law for the continuous interpolation of $\{\Xk\}_{k=0}^{N}$ generated by PMC-RED and $N>0$ the total number of iterations.
Assume that Assumptions~\ref{As:Likelihood}-\ref{As:Mismatch} hold. 
Then, for any $\gamma$ such that $\gamma L \leq 1/\sqrt{32}$, we have 
\begin{align}
&\frac{1}{N\gamma}\int_{0}^{N\gamma} \FI(\nu_t \,\|\, \pi) \,dt \tag{T1}\\
&\leq \frac{4\KL(\nu_{0} \,\|\, \pi)}{N\gamma}
+ \underbrace{\vphantom{\Big|} \Asf_1\gamma}_{\substack{\text{Discretization} \\ \text{Error}}}
+ \underbrace{\vphantom{\Big|} \Asf_2 \sigma^2}_{\substack{\text{Score Mismatch} \\ \text{Error}}}
+ \underbrace{\vphantom{\Big|} \Asf_3\epsilon_\sigma^2}_{\substack{\text{Approximation} \\ \text{Error}}} \nonumber
\end{align}
where $L=L_g + \max\{L_\sigma,L_p\}$ and the constants are given by 
\begin{gather}
\Asf_1=24nL^2, \quad \Asf_2=24 C^2, \quad \Asf_3 = 24. \nonumber
\end{gather}
\end{theorem}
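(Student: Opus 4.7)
The plan is to adapt the ``Langevin diffusion as gradient flow in KL'' framework of~\cite{Balasubramanian.etal2022} to accommodate the three simultaneous sources of inexactness in PMC-RED: time discretization, the smoothed-prior mismatch $\nabla\log\psigma \neq \nabla\log p$, and the score-network approximation error. I would first construct the continuous-time interpolation stated in the theorem by setting $\dXt = -\Gcal(\Xkga)\,dt + \sqrt{2}\,\dBt$ for $t\in[k\gamma,(k+1)\gamma)$ with $k=\lfloor t/\gamma\rfloor$, so that the PMC-RED iterates $\Xk$ coincide in distribution with $\Xkga$. Letting $\nut$ denote the law of $\Xt$, the Fokker--Planck equation for this interpolated process together with a standard de Bruijn-type computation yields
\begin{align*}
\tfrac{d}{dt}\KL(\nut \,\|\, \pi) = -\FI(\nut \,\|\, \pi) - \mathbb{E}\!\left[\bigl\langle \nabla \log\tfrac{\nut}{\pi}(\Xt),\ \Gcal(\Xkga) + \nabla\log\pi(\Xt) \bigr\rangle\right],
\end{align*}
where $-\FI$ is the ideal Langevin dissipation and the inner product captures the entire cost of the three inexactness sources.

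Next I would split the drift mismatch $\Gcal(\Xkga)+\nabla\log\pi(\Xt)$ into four additive pieces that isolate each error source,
\begin{align*}
&\bigl[\nabla g(\Xkga)-\nabla g(\Xt)\bigr] + \bigl[\nabla\log p(\Xt) - \nabla\log p(\Xkga)\bigr] \\
&\quad + \bigl[\nabla\log p(\Xkga) - \nabla\log\psigma(\Xkga)\bigr] + \bigl[\nabla\log\psigma(\Xkga) - \Scal_\theta(\Xkga,\sigma)\bigr],
\end{align*}
and apply Young's inequality $|\langle u,v\rangle| \le \tfrac{1}{4}\|u\|^2 + \|v\|^2$ together with $\|\sum_{i=1}^{4}v_i\|^2 \le 4\sum_{i=1}^{4}\|v_i\|^2$. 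Assumptions~\ref{As:Likelihood} and~\ref{As:Prior} bound the first two pieces pointwise by $L_g\|\Xt-\Xkga\|$ and $L_p\|\Xt-\Xkga\|$; Assumption~\ref{As:Mismatch} bounds the third by $C\sigma$; and Assumption~\ref{As:Score.b} bounds the fourth by $\epsilon_\sigma$. This produces the skeleton
\begin{align*}
\tfrac{d}{dt}\KL(\nut \,\|\, \pi) \le -\tfrac{3}{4}\FI(\nut \,\|\, \pi) + 4(L_g^2+L_p^2)\,\mathbb{E}\|\Xt-\Xkga\|^2 + 4C^2\sigma^2 + 4\epsilon_\sigma^2,
\end{align*}
and the remaining quantity $\mathbb{E}\|\Xt-\Xkga\|^2$ is estimated via the one-step representation $\Xt - \Xkga = -(t-k\gamma)\Gcal(\Xkga) + \sqrt{2}(\Bt-\Bkga)$, yielding $2\gamma^2\,\mathbb{E}\|\Gcal(\Xkga)\|^2 + 4n\gamma$.

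The main obstacle I expect lies in closing the loop on $\mathbb{E}\|\Gcal(\Xkga)\|^2$ without log-concavity of $g$. My plan is to write $\Gcal = -\nabla\log\pi + [\nabla\log\pi + \Gcal]$, bound $\|\nabla\log\pi\|^2 \le 2\|\nabla\log(\nut/\pi)\|^2 + 2\|\nabla\log\nut\|^2$, and re-use the four-way decomposition on the bracket; the term $\|\nabla\log(\nut/\pi)\|^2$ reintroduces a controlled multiple of $\FI(\nu_{k\gamma}\,\|\,\pi)$, while $\|\nabla\log\nut\|^2$ is handled using the smoothing property of the Brownian noise in the interpolated process. Under the stated step-size condition $\gamma L\le 1/\sqrt{32}$ with $L=L_g+\max\{L_\sigma,L_p\}$, the resulting $O(\gamma^2 L^2)\FI$ contribution is absorbed into the existing $-\tfrac{3}{4}\FI$, leaving $-\tfrac{1}{4}\FI$ alongside the three error terms whose accumulated constants track to exactly $\Asf_1=24nL^2$, $\Asf_2=24C^2$, $\Asf_3=24$. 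Integrating the resulting differential inequality from $0$ to $N\gamma$, using $\KL(\nu_{N\gamma}\,\|\,\pi)\ge 0$ on the left, dividing by $N\gamma$, and multiplying through by $4$ then produces the stated bound. Crucially, nonconvexity of $g$ plays no role beyond entering the constant $L$, which is precisely why Assumption~\ref{As:Likelihood} does not need to be strengthened to log-concavity.
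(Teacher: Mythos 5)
Your overall skeleton is the same as the paper's: the same piecewise-linear interpolation, a differential inequality of the form $\tfrac{d}{dt}\KL(\nut\,\|\,\pi)\le-\tfrac34\FI(\nut\,\|\,\pi)+\E\|\nabla\log\pi(\Xt)+\Gcal(\Xkga)\|_2^2$ (the paper's Lemma~1), a one-step displacement bound $\E\|\Xt-\Xkga\|_2^2\lesssim\gamma^2\E\|\Gcal\|_2^2+n\gamma$, and then integrating and telescoping. The place where your plan has a genuine gap is the closing of the loop on the drift second moment. Because you keep $\E\|\Gcal(\Xkga)\|_2^2$ under the law $\nu_{k\gamma}$, the Fisher-information term you reintroduce is $\FI(\nu_{k\gamma}\,\|\,\pi)$, evaluated at the left endpoint of the interval, whereas the dissipation you want to absorb it into is $-\tfrac34\FI(\nut\,\|\,\pi)$ at time $t$. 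These are not comparable without extra control on the time-regularity of $\FI(\nut\,\|\,\pi)$ (it is neither monotone nor uniformly continuous in $t$ in general), and after integrating over the interval the leftover $O(\gamma)\,\FI(\nu_{k\gamma}\,\|\,\pi)$ terms are of the same order as the quantity being bounded, so the argument does not close as written. The paper resolves exactly this issue by adding and subtracting $\Gcal(\Xt)$: using the Lipschitz continuity of $\Gcal$ (constant $L_g+L_\sigma$) and the step-size restriction $\gamma\le 1/(2(L_g+L_\sigma))$, implied by $\gamma L\le1/\sqrt{32}$, it converts the displacement bound into one involving $\E\|\Gcal(\Xt)\|_2^2$ under $\nu_t$, and then applies its Lemma~2, $\E_{\nu_t}\|\Gcal\|_2^2\le 2\FI(\nut\,\|\,\pi)+4n\Lf+2(\sigma C+\epsilon_\sigma)^2$, so the reintroduced FI is at the \emph{same} time $t$ and is absorbed by the step-size condition.

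A second, related problem is your plan to control $\E_{\nu}\|\nabla\log\nu\|_2^2$ ``using the smoothing property of the Brownian noise.'' That Gaussian-convolution bound (Fisher information at most $n/(2\gamma)$) is only available for $k\ge1$; at $k=0$ the law $\nu_0$ is the user-chosen initialization (uniform on a box in the experiments), whose own Fisher information need not be finite, so the theorem as stated would not follow without an additional assumption on $\nu_0$. The paper's Lemma~2 sidesteps this entirely: in its proof the $\|\nabla\log\nu\|_2^2$ term is dropped by sign and the cross term is integrated by parts, $\E_\nu\langle\nabla\log\pi,\nabla\log\nu\rangle=\int\Delta f\,\nu\le n\Lf$, using only the Lipschitz continuity of $\nabla f$ and no regularity of $\nu$. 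Your four-way decomposition of the drift mismatch (with $L_g$, $L_p$, $C\sigma$, $\epsilon_\sigma$) is a fine, slightly different bookkeeping from the paper's two-way split via the Lipschitz constant of $\Gcal$, and would track to the stated constants; but to make the proof correct you need the time-$t$ conversion (or an equivalent device) and Lemma~2's integration-by-parts bound in place of the smoothing heuristic.
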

\begin{proof}
See Supplement~I.B for a detailed proof.
\end{proof}

In order to derive the result for PMC-PnP we need one additional assumption on the likelihood:
\begin{assumption}
\label{As:LikelihoodExtra}
We assume that the $\ell$-2 norm of the gradient of log-likelihood is bounded, namely $\|\nabla g(\xbm)\|_2 \leq R_g$ with $R_g>0$.
\end{assumption}
The boundness of the $\nabla g(\xbm_k)$ is practical and can be achieved by imposing gradient clipping at every iteration.
In practice, we observe that the algorithm converges under a Gaussian likelihood without clipping.

\begin{theorem}[PMC-PnP]
\label{Th:PMCPnP} 
Let $\{\nu_t\}_{t\geq0}$ denote the law for the continuous interpolation of $\{\Xk\}_{k=0}^{N}$ generated by PMC-PnP and $N>0$ the total number of iterations.
Assume that Assumptions~\ref{As:Likelihood}-\ref{As:LikelihoodExtra} hold. 
Then, for any $\gamma$ such that $\gamma L \leq 1/\sqrt{32}$, we have
\begin{align}
&\frac{1}{N\gamma}\int_{0}^{N\gamma} \FI(\nu_t \,\|\, \pi) \,dt \tag{T2}\\
&\leq \frac{4\KL(\nu_{0} \,\|\, \pi)}{N\gamma}
+ \underbrace{\vphantom{\Big|} \Bsf_1\gamma}_{\substack{\text{Discretization} \\ \text{Error}}}
+ \underbrace{\vphantom{\Big|} \Bsf_2\sigma^2}_{\substack{\text{Score Mismatch} \\ \text{Error}}}
+ \underbrace{\vphantom{\Big|} \Bsf_3\epsilon_\sigma^2}_{\substack{\text{Approximation} \\ \text{Error}}} \nonumber
\end{align}
where $L=L_g + \max\{L_\sigma + \gamma L_g L_\sigma, L_p\}$ and the constants are given by 
\begin{gather}
\Bsf_1=24nL^2 + 7 L_\sigma R_g^2, \quad \Bsf_2=36C^2, \quad \Bsf_3 = 36. \nonumber
\end{gather}
\end{theorem}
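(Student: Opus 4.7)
The strategy mirrors that of Theorem~\ref{Th:PMCRED}: I would introduce a continuous-time SDE whose law $\{\nut\}_{t\geq 0}$ interpolates the PMC-PnP iterates at times $t_k = k\gamma$, and then track the evolution of the KL divergence $\KL(\nut \,\|\, \pi)$ along this flow. Using the Fokker--Planck calculation established in Supplement~I for Theorem~\ref{Th:PMCRED}, the time derivative admits a decomposition of the form
\[
\frac{d}{dt}\KL(\nut \,\|\, \pi) \leq -\tfrac{1}{4}\FI(\nut \,\|\, \pi) + E_t,
\]
where $E_t$ is essentially the expected squared discrepancy between the discretized PMC-PnP drift $-\Pcal(\Xk)$ and the ideal posterior score $\nabla\log\pi(\Xt\,|\,\ybm)$. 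Integrating over $[0, N\gamma]$, telescoping with $\KL(\nu_{N\gamma}\|\pi)\geq 0$, and dividing by $N\gamma$ will produce the target inequality, provided each contribution to $E_t$ is bounded by a multiple of $\gamma$, $\sigma^2$, or $\epsilon_\sigma^2$.

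Next I would decompose the per-step drift error at iterate $\Xk$ as
\begin{align*}
\Scal_\theta(\Xk - \gamma\nabla g(\Xk),\sigma) - \nabla\log p(\Xk)
&= \bigl[\Scal_\theta(\Xk - \gamma\nabla g(\Xk),\sigma) - \Scal_\theta(\Xk,\sigma)\bigr] \\
&\quad + \bigl[\Scal_\theta(\Xk,\sigma) - \nabla\log\psigma(\Xk)\bigr] \\
&\quad + \bigl[\nabla\log\psigma(\Xk) - \nabla\log p(\Xk)\bigr].
\end{align*}
The second and third brackets are handled exactly as in Theorem~\ref{Th:PMCRED} via Assumptions~\ref{As:Score}(b) and~\ref{As:Mismatch}, generating the terms $\Bsf_3\epsilon_\sigma^2$ and $\Bsf_2\sigma^2$ (the slight inflation of the constants from $24$ to $36$ relative to Theorem~\ref{Th:PMCRED} reflects one additional Young-inequality split needed to peel off the new bracket). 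The time-discretization part of $E_t$, arising from replacing $\Xt$ by $\Xk$, is controlled by the Lipschitz constant of the full PMC-PnP drift, which is $L_g + L_\sigma(1+\gamma L_g)$ because $\Pcal$ composes the score network with one gradient step of $g$; this is precisely the $L$ that enters the theorem statement.

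The main new obstacle, and the only place Assumption~\ref{As:LikelihoodExtra} is invoked, is the first bracket: the score network is queried at the shifted point $\Xk - \gamma\nabla g(\Xk)$ rather than $\Xk$. Using Assumption~\ref{As:Score}(a) together with Assumption~\ref{As:LikelihoodExtra},
\[
\bigl\|\Scal_\theta(\Xk - \gamma\nabla g(\Xk),\sigma) - \Scal_\theta(\Xk,\sigma)\bigr\|_2 \leq L_\sigma\gamma\|\nabla g(\Xk)\|_2 \leq L_\sigma\gamma R_g,
\]
so the squared shift error is $O(L_\sigma^2\gamma^2 R_g^2)$. After the $1/\gamma$ normalization implicit in the per-step bookkeeping, this contributes an additive term of order $L_\sigma R_g^2\cdot\gamma$, which I expect is the origin of the extra $7\,L_\sigma R_g^2$ summand in $\Bsf_1$. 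Boundedness of $\|\nabla g\|_2$ is indispensable here because the shift amplitude is dictated by the likelihood gradient, which no property of the score network can constrain.

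Assembling the three buckets, applying Young's inequality to decouple $E_t$ from $\FI(\nut\|\pi)$, and using the step-size condition $\gamma L\leq 1/\sqrt{32}$ exactly as in Theorem~\ref{Th:PMCRED} to ensure the negative coefficient on $\FI$ survives absorbing all second-order remainders, will then yield the stated bound with the claimed constants $\Bsf_1,\Bsf_2,\Bsf_3$. The hardest step in practice is setting up the Girsanov-style comparison at the continuous level so that the shifted-argument discrepancy and the ordinary time-discretization error can be treated within a single unified identity; once that bookkeeping is in place, the rest is careful tracking of constants.
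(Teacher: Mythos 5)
Your proposal is correct and follows essentially the same route as the paper's proof: the interpolated process with the Fokker--Planck/KL-descent lemma, the three-way split of the drift error $\Scal_\theta(\xbm-\gamma\nabla g(\xbm),\sigma)-\nabla\log p(\xbm)$ bounded by $\gamma L_\sigma R_g+\sigma C+\epsilon_\sigma$ via Assumptions~\ref{As:Score}, \ref{As:Mismatch}, and \ref{As:LikelihoodExtra}, the enlarged Lipschitz constant $L_g+L_\sigma+\gamma L_g L_\sigma$, the $24\to36$ inflation from the three-term square, and the absorption of the $\gamma^2 L_\sigma^2 R_g^2$ term into $7L_\sigma R_g^2\,\gamma$ using $\gamma L\leq 1/\sqrt{32}$. (The paper needs no Girsanov argument — the interpolation lemma you already cite does all the continuous-time bookkeeping — and it additionally records the $\gamma$-dependent quadratic step-size restriction arising because $L_{\mathcal{P}}$ itself contains $\gamma$.)
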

\begin{proof}
See Supplement~I.C for a detailed proof and the range of step-size.
\end{proof}
We note that the two theorems resemble each other, indicating that PMC-RED and PMC-PnP possess similar convergence behaviors. This resemblance is consistent with the fact that both algorithms originate from~\eqref{Eq:Langevin}, albeit employing different discretization rules.
Furthermore, the theorems demonstrate that the averaged $\FI(\nu_t \,\|\, \pi)$ is bounded by a diminishing term as $N$ goes to infinity, up to some errors respectively proportional to step-size $\gamma$, squared smoothing strength $\sigma^2$, and squared approximation error $\epsilon_\sigma^2$.
Additionally, due to the convexity of FI, it follows that 
$$\FI(\bar{\nu}_{N\gamma} \,\|\, \pi) \leq \frac{1}{N\gamma}\int_{0}^{N\gamma} \FI(\nu_t \,\|\, \pi) \,dt,$$
where $\bar{\nu}_{N\gamma} \defn (N\gamma)^{-1}\int_0^{N\gamma} \nu_t dt$ denotes the averaged distribution of $\{\nut\}_{t\geq0}$.
This averaged distribution can be computed given the iterates of the algorithm; for more discussions we refer to \cite{Balasubramanian.etal2022}.
In this regard, Theorem 1 and 2 establish the stationary-distribution convergence of $\bar{\nu}_{N\gamma}$ generated by PMC-PnP and PMC-RED.

\subsection{Convergence of annealed PMC algorithms}
\label{Sec:ConvPMCPnP}

Characterizing the convergence of the annealed PMC algorithms is more challenging as the intermediate posterior varies over iterations.
In this section, we tackle this problem by deriving explicit bounds of FI for APMC algorithms.

\begin{assumption}
\label{As:Posterior}
We assume that the output of the score network $\Scal_\theta(\xbm, \sigma)$ is bounded in $\ell$-2 norm, namely $\|\Scal_\theta(\xbm,\sigma)\|_2\leq R_s$.
\end{assumption}
This assumption assumes that the output of the deep score network is bounded. Similar to Assumption~\ref{As:LikelihoodExtra}, we can implement this condition in practice by clipping the output of the score network in the inference time. 
In the experiments, we observe that APMC algorithms also converge without imposing this condition.

Note that Assumption~\ref{As:Prior} ensures the Lipschitz constant of the true score is not exploding.
This is necessary for the existence of $\sup\{L_\sigmak\}_{k=0}^{N-1}$ as $N$ goes to infinity. Here, $L_\sigmak$ denotes the Lipschitz constant of the score network at iteration $k$.

\begin{theorem}[APMC-RED]
\label{Th:PMCREDanneal}
Let $\{\alphak\}_{k=0}^{N-1}$ and $\{\sigmak\}_{k=0}^{N-1}$ be decreasing sequences where $\alpha_{K,\dots,N-1}=1$.
Let $\{\nu_t\}_{t\geq0}$ denote the law for the continuous interpolation of $\{\Xk\}_{k=0}^{N}$ generated by APMC-RED and $N>0$ the total number of iterations.
Assume that Assumptions~\ref{As:Likelihood}-\ref{As:Mismatch} and~\ref{As:Posterior} hold. 
Then, for any $\gamma$ such that $\gamma \Lmax \leq 1/\sqrt{32}$, we have
\begin{align}
&\frac{1}{N\gamma}\int_{0}^{N\gamma} \FI(\nu_t \,\|\, \pi) \,dt \tag{T3}\\
&\leq \frac{4\KL(\nu_{0} \,\|\, \pi) + \gamma\zeta}{N\gamma}
+ \underbrace{\vphantom{\Big|} \Csf_1\gamma}_{\substack{\text{Discretization} \\ \text{Error}}}
+ \underbrace{\vphantom{\Big|} \Csf_2 \sigmabar^2}_{\substack{\text{Score Mismatch} \\ \text{Error}}}
+ \underbrace{\vphantom{\Big|} \Csf_3\epsilonbar^2}_{\substack{\text{Approximation} \\ \text{Error}}} \nonumber
\end{align}
where $L_k = L_g + \max\{\alphak L_\sigmak, L_p\}$, $\Lmax = \sup\,\{L_k\}_{k=0}^{N-1}$, $\epsilonbar^2 = \frac{1}{N}\sum_{k=0}^{N-1}\epsilon_\sigmak^2$, and $\bar{\sigma}^2 = \frac{1}{N}\sum_{k=0}^{N-1}\sigma^2_k$.
Here, the constants are given by 
\begin{gather}
\Csf_1=24n\Lmax^2, \quad\Csf_2=36C^2, \quad\Csf_3=36, \nonumber\\
\zeta=36\sum_{k=0}^{K-1}(\alphak-1)^2R_s^2, \nonumber
\end{gather}
where we recall $0\leq K < N$ is the iteration index such that $\alpha_{K,\dots,N-1}=1$.
\end{theorem}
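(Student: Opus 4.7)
The plan is to follow the blueprint of Theorem~\ref{Th:PMCRED} and extend it to the annealed setting by carefully isolating the additional bias introduced by the $\alpha_k$-powering schedule. I would work throughout with the continuous interpolation $\{\nu_t\}_{t\geq 0}$ of the APMC-RED iterates; on each interval $[k\gamma,(k+1)\gamma)$ the interpolation solves the SDE
\begin{equation*}
d\xbm_t = -\bigl[\nabla g(\Xk) - \alphak\Scal_\theta(\Xk,\sigmak)\bigr]\,dt + \sqrt{2}\,d\Bt,
\end{equation*}
so that the drift uses the frozen iterate $\Xk$ rather than the running state $\xbm_t$. The master identity, as in the stationary case, is the time-derivative calculation
\begin{equation*}
\tfrac{d}{dt}\KL(\nu_t\,\|\,\pi) = -\FI(\nu_t\,\|\,\pi) + \langle \text{drift mismatch},\, \nabla\log(\nu_t/\pi)\rangle,
\end{equation*}
after which Young's inequality absorbs one factor of $\nabla\log(\nu_t/\pi)$ into $\tfrac{1}{2}\FI(\nu_t\,\|\,\pi)$ and the residual $\tfrac{1}{2}\FI$ remains on the left to drive the convergence.

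The key new work is the four-way decomposition of the drift-mismatch field
\begin{align*}
&\alphak\Scal_\theta(\Xk,\sigmak) - \nabla\log p(\xbm_t) \\
&= \underbrace{(\alphak-1)\Scal_\theta(\Xk,\sigmak)}_{\text{annealing bias}} + \underbrace{\bigl[\Scal_\theta(\Xk,\sigmak) - \nabla\log p_{\sigmak}(\Xk)\bigr]}_{\text{network error}} \\
&\quad + \underbrace{\bigl[\nabla\log p_{\sigmak}(\Xk) - \nabla\log p(\Xk)\bigr]}_{\text{score mismatch}} + \underbrace{\bigl[\nabla\log p(\Xk) - \nabla\log p(\xbm_t)\bigr]}_{\text{discretization}},
\end{align*}
together with the analogous split for $\nabla g(\Xk) - \nabla g(\xbm_t)$. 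Assumption~\ref{As:Posterior} bounds the first term by $(\alphak-1)R_s$, Assumption~\ref{As:Score.b} controls the second by $\epsilon_{\sigmak}$, Assumption~\ref{As:Mismatch} bounds the third by $C\sigmak$, and the last is handled by $L_p\|\Xk-\xbm_t\|_2$ together with the standard one-step moment bound for the interpolated Brownian-plus-drift motion, which brings in $\Lmax$ and contributes the $24n\Lmax^2\gamma$ term. Squaring, taking expectations, and using $(a+b+c+d)^2 \leq 4(a^2+b^2+c^2+d^2)$ isolates the four error sources cleanly.

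Integrating the resulting differential inequality over $[k\gamma,(k+1)\gamma)$, summing over $k=0,\dots,N-1$, and telescoping the KL terms yields
\begin{equation*}
\tfrac{1}{2}\int_{0}^{N\gamma}\FI(\nu_t\,\|\,\pi)\,dt \leq \KL(\nu_0\,\|\,\pi) + \gamma\,\Bigl[\tfrac{\Csf_1}{4}N\gamma + \tfrac{\Csf_2}{4}\!\sum_k \sigmak^2 + \tfrac{\Csf_3}{4}\!\sum_k \epsilon_{\sigmak}^2 + \tfrac{\zeta}{4}\Bigr],
\end{equation*}
after which dividing by $N\gamma/4$ and recognising the definitions of $\sigmabar^2$, $\epsilonbar^2$, and $\zeta$ gives (T3). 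The crucial step is that because $\alphak=1$ for $k\geq K$, the annealing-bias contribution $\sum_k (\alphak-1)^2 R_s^2$ truncates at $k=K-1$, so it appears as a finite additive constant $\gamma\zeta$ in the numerator and hence decays like $1/N$, rather than contaminating the $\gamma$-order term.

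The main obstacle I anticipate is bookkeeping, not novel ideas: the Lipschitz constant $\Lmax$ must dominate every per-iteration $L_k = L_g+\max\{\alphak L_{\sigmak}, L_p\}$ uniformly (which is why Assumption~\ref{As:Prior} is invoked to keep $\sup_k L_{\sigmak}$ finite as $N\to\infty$), and one must be careful that the Young-inequality constants chosen to absorb the four error terms into $\tfrac{1}{2}\FI$ exactly reproduce the coefficients $\Csf_1,\Csf_2,\Csf_3,\zeta$ stated in the theorem. Because the target $\pi$ is fixed while the drift tracks the moving $\pi^{(\alphak)}_{\sigmak}$, one cannot use the logarithmic-Sobolev machinery of the stationary setting; instead the proof stays entirely at the level of FI-valued energy inequalities, which is exactly the reason the $\zeta$ term appears only as an additive bias rather than as a relaxation rate.
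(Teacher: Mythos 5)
Your proposal follows essentially the same route as the paper's proof: interpolate the iterates, apply the KL time-derivative lemma, decompose the drift mismatch so that Assumption~\ref{As:Posterior} yields the $(\alphak-1)R_s$ annealing bias alongside the $\epsilon_\sigmak$, $C\sigmak$, and Lipschitz-discretization terms, control the one-step moment via the Fisher-information bound, and then integrate, telescope, and observe that $\sum_k(\alphak-1)^2R_s^2$ truncates at $K$, producing the $\gamma\zeta/(N\gamma)$ term. The only cosmetic difference is that you apply the Lipschitz comparison to $\nabla f$ between $\Xk$ and $\xbm_t$ with the bias evaluated at $\Xk$, whereas the paper compares $\Gcalk$ between those points and evaluates the bias at $\xbm_t$; both yield $L_k = L_g+\max\{\alphak L_\sigmak, L_p\}$ and the stated constants up to the bookkeeping you acknowledge.
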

\begin{proof}
See Supplement~I.D for a detailed proof.
\end{proof}

\begin{theorem}[APMC-PnP]
\label{Th:PMCPnPanneal}
Let $\{\alphak\}_{k=0}^{N-1}$ and $\{\sigmak\}_{k=0}^{N-1}$ be decreasing sequences where $\alpha_K =\alpha_{K+1}=\dots=1$.
Let $\{\nu_t\}_{t\geq0}$ denote the law for the continuous interpolation of $\{\Xk\}_{k=0}^{N}$ generated by APMC-PnP and $N>0$ the total number of iterations.
Assume that Assumptions~\ref{As:Likelihood}-\ref{As:Posterior} hold. 
Then, for any $\gamma$ such that $\gamma\Lmax\leq1/\sqrt{32}$, we have
\begin{align}
&\frac{1}{N\gamma}\int_{0}^{N\gamma} \FI(\nu_t \,\|\, \pi) \,dt \tag{T4}\\
&\leq \frac{4\KL(\nu_{0} \,\|\, \pi)+\gamma\zeta}{N\gamma}
+ \underbrace{\vphantom{\Big|} \Dsf_1\gamma}_{\substack{\text{Discretization} \\ \text{Error}}}
+ \underbrace{\vphantom{\Big|} \Dsf_2 \sigmabar^2}_{\substack{\text{Score Mismatch} \\ \text{Error}}}
+ \underbrace{\vphantom{\Big|} \Dsf_2\epsilonbar^2}_{\substack{\text{Approximation} \\ \text{Error}}} \nonumber
\end{align}
where $L_k = L_g + \max\{\alpha_k L_\sigmak + \alpha_k\gamma L_g L_\sigmak, L_p\}$, 
$\Lmax = \sup\,\{L_k\}_{k=0}^{N-1}$, 
$\epsilonbar^2 = \frac{1}{N}\sum_{k=0}^{N-1}\epsilon_\sigmak^2$, and $\bar{\sigma}^2 = \frac{1}{N}\sum_{k=0}^{N-1}\sigma_k^2$. 
Here, the constants are given by
\begin{gather}
\Dsf_1=24n\Lmax^2 + 9L_{\sigma_\mathsf{max}} R_g^2, \quad\Dsf_2=48C^2, \quad\Dsf_3=48, \nonumber\\
\zeta=48\sum_{k=0}^{K-1}(\alphak-1)^2R_s^2, \nonumber
\end{gather}
where $L_{\sigma_\mathsf{max}} = \sup\,\{L_\sigmak\}_{k=0}^{N-1}$ and we recall $0\leq K < N$ is the iteration index such that $\alpha_{K,\dots,N-1}=1$.
\end{theorem}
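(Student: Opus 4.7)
The proof strategy combines the PnP-specific analysis of Theorem~\ref{Th:PMCPnP} with the annealing framework of Theorem~\ref{Th:PMCREDanneal}. First, I would construct the continuous-time interpolation $\{\nu_t\}_{t\geq 0}$ of the APMC-PnP iterates by defining, for $t \in [k\gamma, (k+1)\gamma)$, the diffusion $d\xbm_t = -\Pcalk(\Xkga)\,dt + \sqrt{2}\,\dBt$, so that $\nu_{(k+1)\gamma}$ matches the law of $\Xkpo$. The Fokker--Planck equation for this piecewise-constant-drift diffusion, combined with the standard identity $\frac{d}{dt}\KL(\nut \,\|\, \pi) = -\FI(\nut \,\|\, \pi) + \E\langle \nabla \log(\nut/\pi), -\Pcalk(\Xkga) + \nabla\log\pi(\Xt|\ybm)\rangle$, yields after Young's inequality
\begin{equation*}
\frac{d}{dt}\KL(\nut \,\|\, \pi) \leq -\tfrac{3}{4}\FI(\nut \,\|\, \pi) + \E\bigl\|-\Pcalk(\Xkga) + \nabla\log\pi(\Xt \mid \ybm)\bigr\|_2^2 .
\end{equation*}

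The key step is decomposing the drift mismatch $-\Pcalk(\Xkga) + \nabla\log\pi(\Xt|\ybm)$ into five pieces, each bounded separately: \emph{(i)} the time-discretization gap $\nabla\log\pi(\Xt|\ybm) - \nabla\log\pi(\Xkga|\ybm)$, controlled by $(L_g + L_p)\|\Xt - \Xkga\|_2$ and the standard one-step moment bound $\E\|\Xt - \Xkga\|_2^2 = O(\gamma^2 + \gamma n)$; \emph{(ii)} the PnP-shift term $\alphak\bigl[\Scal_\theta(\Xkga - \gamma\nabla g(\Xkga),\sigmak) - \Scal_\theta(\Xkga,\sigmak)\bigr]$, whose squared norm is at most $\alphak^2 L_\sigmak^2 \gamma^2 R_g^2$ by Assumptions~\ref{As:Score}(a) and~\ref{As:LikelihoodExtra}; \emph{(iii)} the score-network approximation error, bounded by $\alphak^2 \epsilon_\sigmak^2$ through Assumption~\ref{As:Score}(b); \emph{(iv)} the smoothing mismatch $\alphak\bigl[\nabla\log\psigmak(\Xkga) - \nabla\log p(\Xkga)\bigr]$, at most $\alphak^2 \sigmak^2 C^2$ by Assumption~\ref{As:Mismatch}; and \emph{(v)} the annealing bias $(\alphak - 1)\nabla\log p(\Xkga)$, bounded for $k < K$ by $(\alphak-1)^2 R_s^2$ via Assumption~\ref{As:Posterior} and a triangle inequality against the score network, and vanishing for $k \geq K$.

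After collecting these bounds with appropriate Young's-inequality weights, which must be reallocated relative to the APMC-RED argument in order to accommodate the extra PnP-shift piece (this is what causes the constants to bump from $(36,36,36)$ in $\Csf$ to $(48,48,48)$ in $\Dsf$, and introduces the additional $9L_{\sigma_\mathsf{max}} R_g^2$ in $\Dsf_1$), I would integrate from $0$ to $N\gamma$, telescope the resulting $\KL$-drop, and rearrange to obtain
\begin{equation*}
\frac{1}{N\gamma}\int_{0}^{N\gamma} \FI(\nu_t \,\|\, \pi)\,dt \leq \frac{4\KL(\nu_0 \,\|\, \pi) + \gamma\zeta}{N\gamma} + \Dsf_1\gamma + \Dsf_2 \sigmabar^2 + \Dsf_3 \epsilonbar^2,
\end{equation*}
where the contribution $\gamma\zeta / (N\gamma)$ accumulates from summing the $k<K$ annealing-bias terms. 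The step-size condition $\gamma\Lmax \leq 1/\sqrt{32}$ ensures the $-\tfrac{3}{4}\FI(\nut\|\pi)$ term can absorb the $\gamma$-factors introduced by the one-step moment expansion and by the PnP shift, so that a constant fraction ($1/4$) of $\FI$ survives on the left-hand side, matching the prefactor in the theorem.

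The main obstacle will be isolating the extra PnP-shift contribution cleanly in the presence of time-varying $(\alphak, \sigmak)$. The shift term scales as $\alphak^2 L_\sigmak^2 \gamma^2 R_g^2$, and since $\alphak$ can exceed $1$ during the annealing phase, it must be dominated uniformly by a $\gamma$-linear quantity using $\gamma\Lmax \leq 1/\sqrt{32}$ and $\alphak L_\sigmak \leq \Lmax$, reducing one factor of $\gamma^2$ to $\gamma$ and yielding the constant $9 L_{\sigma_\mathsf{max}} R_g^2$ in $\Dsf_1$; simultaneously, the Young-inequality weights must be distributed so that the score-mismatch, network-approximation, and annealing-bias errors land in the separate buckets $\Dsf_2 \sigmabar^2$, $\Dsf_3 \epsilonbar^2$, and $\zeta$ without double-counting. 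Once this accounting is in place, the integration and averaging steps are essentially identical to those in the proof of Theorem~\ref{Th:PMCREDanneal}.
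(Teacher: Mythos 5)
Your proposal retraces the paper's argument: the same piecewise-linear interpolation of the iterates, the same differential inequality $\frac{d}{dt}\KL(\nu_t\,\|\,\pi)\le-\tfrac{3}{4}\FI(\nu_t\,\|\,\pi)+\E\big[\|\nabla f(\Xt)-\Pcalk(\Xkga)\|_2^2\big]$ (Lemma~\ref{Le:KL}), the same one-step moment bound combined with the Fisher-information bound on $\E\big[\|\Pcalk(\xbm)\|_2^2\big]$ (Lemma~\ref{Le:Bound}), the same decomposition of the drift error into discretization, PnP-shift, approximation, smoothing-mismatch, and annealing pieces, the same absorption of the $\FI$ contribution via $\gamma\Lmax\le1/\sqrt{32}$, and the same telescoping/averaging with the $48/\sqrt{32}\le 9$ bookkeeping that produces the $9L_{\sigma_\mathsf{max}}R_g^2$ term; measuring the discretization gap through the Lipschitz constant of $\nabla\log\pi$ rather than that of $\Pcalk$ is an immaterial variation.

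The one substantive deviation is where you attach the annealing weight $\alphak$, and as written it does not deliver the stated bound. You keep $\alphak$ on the approximation and smoothing pieces (squared bounds $\alphak^2\epsilon_\sigmak^2$ and $\alphak^2\sigmak^2C^2$) and isolate the bias as $(\alphak-1)\nabla\log p(\Xkga)$, which Assumption~\ref{As:Posterior} does not bound directly: your triangle inequality against the score network gives $(\alphak-1)(R_s+\epsilon_\sigmak+\sigmak C)$, not $(\alphak-1)R_s$, and since $\alphak$ is of order $10^3$--$10^4$ early in the schedule, your error buckets come out as $\frac{1}{N}\sum_k\alphak^2\sigmak^2C^2$ and $\frac{1}{N}\sum_k\alphak^2\epsilon_\sigmak^2$ rather than $\Dsf_2\sigmabar^2$ and $\Dsf_3\epsilonbar^2$ (the $k<K$ excess can be pushed into a modified constant, but not the stated $\zeta$). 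The paper avoids this by grouping
\begin{align*}
\alphak\Scal_\theta\big(\xbm-\gamma\nabla g(\xbm),\sigmak\big)-\nabla\log p(\xbm)
&=\alphak\big[\Scal_\theta\big(\xbm-\gamma\nabla g(\xbm),\sigmak\big)-\Scal_\theta(\xbm,\sigmak)\big]
+(\alphak-1)\,\Scal_\theta(\xbm,\sigmak)\\
&\quad+\big[\Scal_\theta(\xbm,\sigmak)-\nabla\log\psigmak(\xbm)\big]
+\big[\nabla\log\psigmak(\xbm)-\nabla\log p(\xbm)\big],
\end{align*}
so that Assumption~\ref{As:Posterior} applies verbatim to the $(\alphak-1)$ piece while the $\epsilon_\sigmak$ and $\sigmak C$ pieces enter with coefficient one; the four-term sum-of-squares expansion then yields exactly $\Dsf_2=\Dsf_3=48$ and $\zeta=48\sum_{k<K}(\alphak-1)^2R_s^2$. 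This is a one-line regrouping rather than a missing idea, but it is needed to land on the theorem as stated.
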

\begin{proof}
See Supplement~I.E for a detailed proof and the range of step-size.
\end{proof}
Overall, the bounds of APMC algorithms are consistent with their stationary counterparts; similarly, the theorems also establish the stationary-distribution convergence of $\bar{\nu}_{N\gamma}$ generated by APMC-PnP and APMC-RED.
The key difference in the results of Theorems~\ref{Th:PMCREDanneal}-\ref{Th:PMCPnPanneal} is that the last two errors are proportional to the averaged squared values $\sigmabar^2$ and $\epsilonbar^2$, respectively.
As $\{\sigmak\}_{k=0}^{N-1}$ is pre-defined by the user, we can make the sequence square-summable such that $\sigmabar^2$ decreases to zero as $N$ goes to infinity. Thus, the score mismatch error can be asymptotically removed.
We note that the approximation error can also be eliminated if $\epsilonbar^2$ is squared-summable, which is a weaker condition than assuming the score network to be error-free.
Nevertheless, it is quite challenging to precisely control $\epsilon_\sigmak$ of score networks.

\section{Numerical Validations of Theory}
The objective of this section is to verify the capability of APMC algorithms to correctly sample from posterior distributions.
We focus on the annealed algorithms as they subsume the stationary variants.
We construct two experiments: a numerical validation of the proposed convergence analysis and a statistical study on sampling for images.

\subsection{Numerical validation of convergence}

One key insight of our analysis is that the averaged FI values obtained by annealed PMC algorithms w.r.t. posterior $\pi$ are proportional to step-size $\gamma$, averaged smoothing strength $\sigmabar^2$, and averaged approximation error $\epsilonbar^2$.
In order to efficiently verify the dependency, we consider a two-dimensional (2D) posterior distribution that is characterized by the Gaussian likelihood described in Section~\ref{Sec:InverseProblem} and a bimodal Gaussian mixture prior.
The measurement $\ybm$ is obtained by evaluating~\eqref{Eq:Inverse} at $x=(0,0)$, with AWGN corruption of variance $\beta^2=1$.
We construct twenty test posterior distributions by generating random realizations of $\Abm$ while keeping the rest unchanged.
Rather than use a learned score model, we simulate an imperfect score by adding AWGN to the analytical score of the prior. 
This allows us to control $\epsilonbar^2$ by adjusting the maximal norm $\epsilon_\mathsf{max}$ of the noise.
Similarly, we adjust the minimal smoothing strength $\sigma_\mathsf{min}$ to control $\sigmabar^2$.
Each PMC algorithm is run to infer a batch of $1000$ samples that are initialized randomly. This is equivalent of running $1000$ independent chains and taking the outputs of their last iteration as samples.
We refer to Supplement~II.A for additional technical details.

Table~\ref{Tab:FIKL} empirically evaluates the evolution of the minimal $\FI(\nu_k \,\|\, \pi)$ and $\KL(\nu_k \,\|\, \pi)$ obtained by \anneal-PnP and \anneal-RED with different $\gamma$, $\sigma_\mathsf{min}$, and $\epsilon_\mathsf{max}$ values.
The values of $\FI(\nu_k \,\|\, \pi)$ and $\KL(\nu_k \,\|\, \pi)$ are averaged over all testing distributions.
We observed convergence in both metrics for the APMC algorithms in all tests.
The table clearly illustrate the improvement in $\FI(\nu_k \,\|\, \pi)$ by reducing the value of these parameters as illustrated in Theorem~\ref{Th:PMCREDanneal} and~\ref{Th:PMCPnPanneal}.
Although FI can not be interpreted as a direct proxy of KL~\cite{Balasubramanian.etal2022}, remarkably a similar trend is also observed for $\KL(\nu_k \,\|\, \pi)$.
The convergence plots of APMC-PnP and APMC-RED are shown in Supplement~III.

\begin{table*}[t]
\centering
\scriptsize
\caption{Minimal values of the averaged relative Fisher information (FI) and Kullback-Leibler (KL) divergence over the testing posterior distributions.
The convergence in both metrics are observed for the annealed PMC algorithms in each test.}
\begin{tabular*}{455pt}{L{40pt} C{25pt} C{30pt}C{25pt}C{25pt} C{0pt} C{25pt}C{25pt}C{25pt} C{0pt} C{25pt}C{25pt}C{25pt}} \toprule
\multirow{2}{*}{\shortstack[l]{\textbf{Annealed} \\ \textbf{algorithms}}} & \multirow{2}{*}{\textbf{Metrics}} & \multicolumn{3}{c}{$\gamma$} & & \multicolumn{3}{c}{$\sigma_\mathsf{min}$} & & \multicolumn{3}{c}{$\epsilon_\mathsf{max}$} \\
\cmidrule{3-5} \cmidrule{7-9} \cmidrule{11-13}
& & $1.6$ & $0.8$ & $0.4$ & & $0.4$ & $0.2$ & $0.1$ & & $5.0$ & $2.5$ & $1.25$ \\
\cmidrule{1-13}
\multirow{2}{*}{APMC-PnP}   & FI & $0.9455$ & $0.0985$ & $0.0260$ & & $0.2658$ & $0.1148$ & $0.0565$ & & $0.3018$ & $0.0260$ & $0.0076$ \\
						& KL & $3.1782$ & $0.6336$ & $0.5690$ & & $1.0616$ & $0.7429$ & $0.6259$ & & $1.1187$ & $0.5690$ & $0.5651$ \\
\noalign{\vskip 0.5ex}
\hdashline\noalign{\vskip 0.5ex}
\multirow{2}{*}{APMC-RED}  & FI & $0.9247$ & $0.0818$ & $0.0218$ & & $0.2723$ & $0.1191$ & $0.0577$ & & $0.2923$ & $0.0218$ & $0.0051$ \\
                                                   & KL & $3.2423$ & $0.6186$ & $0.5632$ & & $1.0792$ & $0.7541$ & $0.6315$ & & $1.1097$ & $0.5632$ & $0.5614$ \\
\bottomrule
\end{tabular*}
\label{Tab:FIKL}
\vspace{-10pt}
\end{table*}

\subsection{Statistical validation of image posterior sampling}

We now validate the \anneal~algorithms for sampling from a posterior distribution of images.
To demonstrate this, we compare the sample statistics generated by APMC and the grounth-truth posterior.
All images used in this validation are taken from the CelebA dataset~\cite{Liu.etal2015-CelebA} with normalization to $[-1,1]$ and rescaling to $32\times32$ pixels for efficient computation.
Since the two APMC algorithms are symmetric, we focus the discussion on APMC-PnP, deferring the results of APMC-RED to Supplement~III for brevity.

We adopt the same setup of the posterior as the previous validation. 
We set $\Abm$ to a $307\times1024$ random Gaussian matrix, and $\ybm$ is simulated by evaluating~\eqref{Eq:Inverse} at a test image under AWGN with variance $\beta^2=0.01$.
The two modes of the bimodal Gaussian prior are set to correspond to the female and male images, respectively.
We manually shift the means of the two modes by $-2$ and $+2$; otherwise, they are indistinguishable under the Gaussian assumption.
Rather than using the true score corrupted by noise, we instead pre-train a deep score network (see Section~\ref{Sec:ScoreNet} for details) to approximate the score of the prior. 
The training images are obtained by drawing random samples from the Gaussian prior.
We observe that the Gaussian images are less structured, and thus the performance of the score network on this synthetic dataset should not be interpreted as an indication of its performance on natural images.
We also run the algorithms with the analytical score to show the performance of APMC-PnP in the ideal case.
Each PMC algorithm is run to infer a batch of $1000$ samples that are initialized randomly.
We refer to Supplement~II.B for the additional technical details.

\begin{figure}[t!]
\centering
\includegraphics[width=0.5\linewidth]{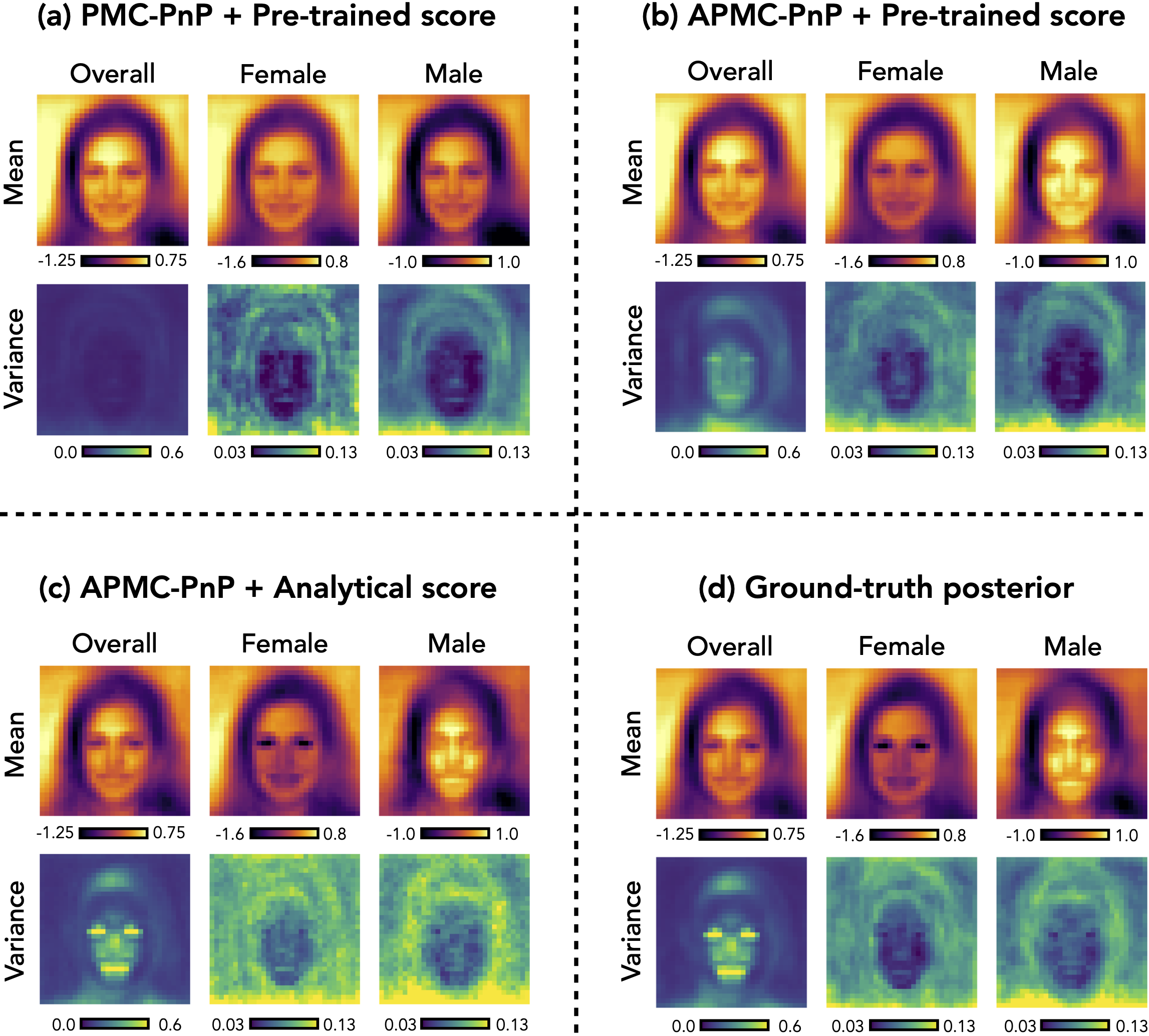}
\caption{
Comparison of the sample statistics obtained by PMC-PnP and APMC-PnP versus the ground-truth posterior distribution.
Each test algorithm is run to infer a batch of $1000$ samples, which are then classified into two modes according to their distance and angle with respect to the ground-truth modes.
PMC-PnP, which does not use annealing, fails to identify the bimodal distribution, as can be seen by the two groups of classified images being nearly indistinguishable.
Under the pre-trained score, APMC-PnP significantly improves in performance over PMC-PnP by distinguishing the female and male modes.
These two modes look similar to the ground truth posterior modes, although some differences remain due to the use of an approximate learned score.
In the ideal case of the analytical score, APMC-PnP recovers a distribution that closely resembles the ground-truth posterior.}
\vspace{-10pt}
\label{Fig:Face_PMCPnP}
\end{figure}

Fig.~\ref{Fig:Face_PMCPnP} compares the sampling performance of \anneal-PnP and PMC-PnP against ground-truth posterior.
All algorithms are run until convergence to sample $1000$ images, and we use a large non-divergent step-size to accelerate the procedure.
Fig.~\ref{Fig:Face_PMCPnP}(a) and~\ref{Fig:Face_PMCPnP}(b) compares the PMC-PnP and APMC-PnP under the pre-trained score.
It is clear that the PMC-PnP struggles in sampling the two modes, leading to a unitary cluster of samples that are inseparable.
On the contrary, APMC-PnP can successfully capture the two modes by leveraging weighted annealing. 
Note the improvement in the overall and per-mode statistics led by APMC-PnP.
When the analytical score is used, Fig.~\ref{Fig:Face_PMCPnP}(c) and~\ref{Fig:Face_PMCPnP}(d) display almost identical overall statistics, demonstrating the high-quality recovery of the posterior distribution by \anneal-PnP.
Note the closeness between the means of APMC-PnP and groundtruth.
However, we notice that \anneal-PnP tends to get larger sample variances for each mode.
This might be attributed to the large step-size ($\gamma = 10^{-3}$) and could be mitigated by using smaller values.

\section{Experiments on Imaging Inverse Problems}

In this section, we demonstrate the performance of APMC algorithms on real-world imaging inverse problems.
Our experiments consist of three high-dimensional image recovery tasks: compressed sensing (CS), magnetic resonance imaging (MRI), and black-hole interferometric imaging (BHI).
Together, these tasks provide coverage of both log-concave and non-log-concave likelihoods.

\subsection{Score-based generative priors}
\label{Sec:ScoreNet}
To fully leverage the latest advances in SGM, we implement our score network by customizing the state-of-the-art U-Net architecture in guided diffusion~\cite{Dhariwal.etal2021}.
We introduce a key modification to make the network take the smoothing strength $\sigma$ as input by leveraging techniques in~\cite{Song.etal2020b}.
We train individual networks for approximating the scores of three image prior distributions, that is, human face images, brain MRI images, and black-hole images, which are used in the experiments of CS, MRI, and BHI, respectively.
Note that the training of the score network is \emph{agnostic} to the forward models of these imaging tasks.
We train the score network over the smoothing ranges $\sigma\in[0.01, 348]$ and $\sigma\in[0.01, 192]$ for $256\times256$ and $64\times64$ images, respectively.
Note that these ranges are suggested by~\cite{Song.etal2020b} for achieving the optimized performance.
We refer to Supplement II for additional technical details on the architecture and training.

\subsection{Linear Inverse Problems: CS and MRI}
The inverse problems of CS and MRI that we tackle can be described by a linear version of the system in~\eqref{Eq:Inverse}, where $\Abm$ corresponds to either a i.i.d random Gaussian matrix or the Fourier transform with subsampling, and $\ebm$ corresponds to 40 dB input signal-to-noise ratio (SNR).
Both tasks are performed on a dataset of $40$ images with the spatial resolution of $256\times256$ pixels, where $10$ held-out images are used for finetuning the hyperparameters, and the remaining $30$ images are used for collecting the test results.
All images are normalized to $[-1,1]$.
We include four baseline methods for comparison: PnP, RED, PnP-ULA~\cite{Laumont.etal2022}, and DPS~\cite{Chung.etal2023diffusion}.
The first two algorithms correspond to MAP-based methods, while the rest are designed for posterior sampling.
In the test, PnP and RED are equipped with the pre-trained DnCNN denoisers from~\cite{Zhang.etal2017}, while PnP-ULA and DPS use the score networks.
We note that PnP-ULA and PMC-RED are essentially the same algorithm, with the sole difference being that PnP-ULA additionally projects every iterate to a convex and compact set. However, since the set is user-defined~\cite{Laumont.etal2022}, PMC-RED can be viewed as PnP-ULA equipped with a non-activated projection to a large set such as $[\shortneg10^4 , 10^4]^n$; hence, we refer to PMC-RED as \emph{PnP-ULA-NP} in this section. 
We additionally employed \emph{PnP-ULA-Ori}, which adopts a smaller set $[\shortneg1,2]^n$ and $\sigma=5/255$ originally used in~\cite{Laumont.etal2022}.
We refer to Supplement II for more discussion on PnP-ULA and implementation details of the baseline methods.
In each experiment, we run algorithms for a maximum number of $10,000$ iterations to collect the final results.
Each sampling algorithm is run to infer a batch of $50$ image samples that are initialized randomly.
The final reconstructed image is obtain by averaging these samples.

\afterpage{
\begin{table*}[t]
\centering
\tiny
\caption{
Averaged PSNR and MSE values obtained by the proposed PMC and baseline algorithms for the CS ($m/n=0.1$ \& $m/n=0.3$) and MRI ($\text{Accel.}=8\times$ \& $\text{Accel.}=4\times$) tasks. The best values of ours and baselines are highlighted in \textbf{bold} and \underline{underline}, respectively.}
\begin{tabular*}{470pt}{L{50pt} C{40pt}C{35pt} C{40pt}C{35pt} C{0pt} C{40pt}C{35pt} C{40pt}C{35pt}} \toprule
\multirow{2}{*}{\textbf{Method}} & \multicolumn{2}{c}{$m/n=0.1$} & \multicolumn{2}{c}{$m/n=0.3$} & & \multicolumn{2}{c}{$\text{Accel.} = 8\times$} & \multicolumn{2}{c}{$\text{Accel.} = 4\times$}\\
\cmidrule{2-5} \cmidrule{7-10}
& PSNR (dB) $\uparrow$ & MSE $\downarrow$ & PSNR (dB) $\uparrow$ & MSE $\downarrow$ & & PSNR (dB) $\uparrow$ & MSE $\downarrow$ & PSNR (dB) $\uparrow$ & MSE $\downarrow$\\
\cmidrule{1-10}
PnP & \second{$24.90$} & \second{$\expnumber{3.42}{\shortneg3}$} & $32.36$ & $\expnumber{6.11}{\shortneg4}$ & & $28.98$ & $\expnumber{1.40}{\shortneg3}$ & $33.81$ & $\expnumber{5.06}{\shortneg4}$ \\
RED & $24.74$ &  $\expnumber{3.55}{\shortneg3}$ & $32.18$ & $\expnumber{6.36}{\shortneg4}$ & & $28.96$ &  $\expnumber{1.41}{\shortneg3}$ & $33.39$ & $\expnumber{6.11}{\shortneg4}$ \\
\noalign{\vskip 0.5ex}
\hdashline\noalign{\vskip 0.5ex}
DPS & $24.37$ & $\expnumber{4.18}{\shortneg3}$ & $31.45$ & $\expnumber{8.04}{\shortneg4}$ & & $31.71$ & $\expnumber{7.43}{\shortneg4}$ & $34.88$ & $\expnumber{3.55}{\shortneg4}$ \\ 
PnP-ULA-Ori & $8.56$ & $\expnumber{1.51}{\shortneg1}$ & \second{$33.58$} & \second{$\expnumber{4.99}{\shortneg4}$} & & \second{$32.03$} & \second{$\expnumber{6.87}{\shortneg3}$} & $34.43$ & $\expnumber{4.33}{\shortneg4}$ \\ 
PnP-ULA-NP (PMC-RED) & $8.75$ & $\expnumber{1.43}{\shortneg1}$ & $18.18$ & $\expnumber{1.67}{\shortneg2}$ & & $27.51$ & $\expnumber{2.35}{\shortneg3}$ & \second{$35.98$} & \second{$\expnumber{2.84}{\shortneg4}$} \\ 
\noalign{\vskip 0.5ex}
\hdashline\noalign{\vskip 0.5ex}
PMC-PnP (ours) & $8.75$ & $\expnumber{1.43}{\shortneg1}$ & $18.18$ & $\expnumber{1.67}{\shortneg2}$ & & $27.52$ & $\expnumber{2.35}{\shortneg3}$ & $35.99$ & $\expnumber{2.83}{\shortneg4}$\\ 
APMC-RED (ours) & $28.37$ & $\expnumber{1.67}{\shortneg3}$ & $34.52$ & $\expnumber{4.06}{\shortneg4}$ & & $32.67$ & $\expnumber{6.43}{\shortneg4}$ & $36.02$ & $\expnumber{2.82}{\shortneg4}$\\ 
APMC-PnP (ours) & \first{28.46} & \first{\expnumber{1.63}{\shortneg3}} & \first{34.55} & \first{\expnumber{3.77}{\shortneg4}} & & \first{32.71} & \first{\expnumber{6.25}{\shortneg4}} & \first{36.03} & \first{\expnumber{2.81}{\shortneg4}} \\
\bottomrule
\end{tabular*}
\label{Tab:Recon}
\end{table*}
\begin{figure*}[t!]
\centering
\includegraphics[width=0.95\textwidth]{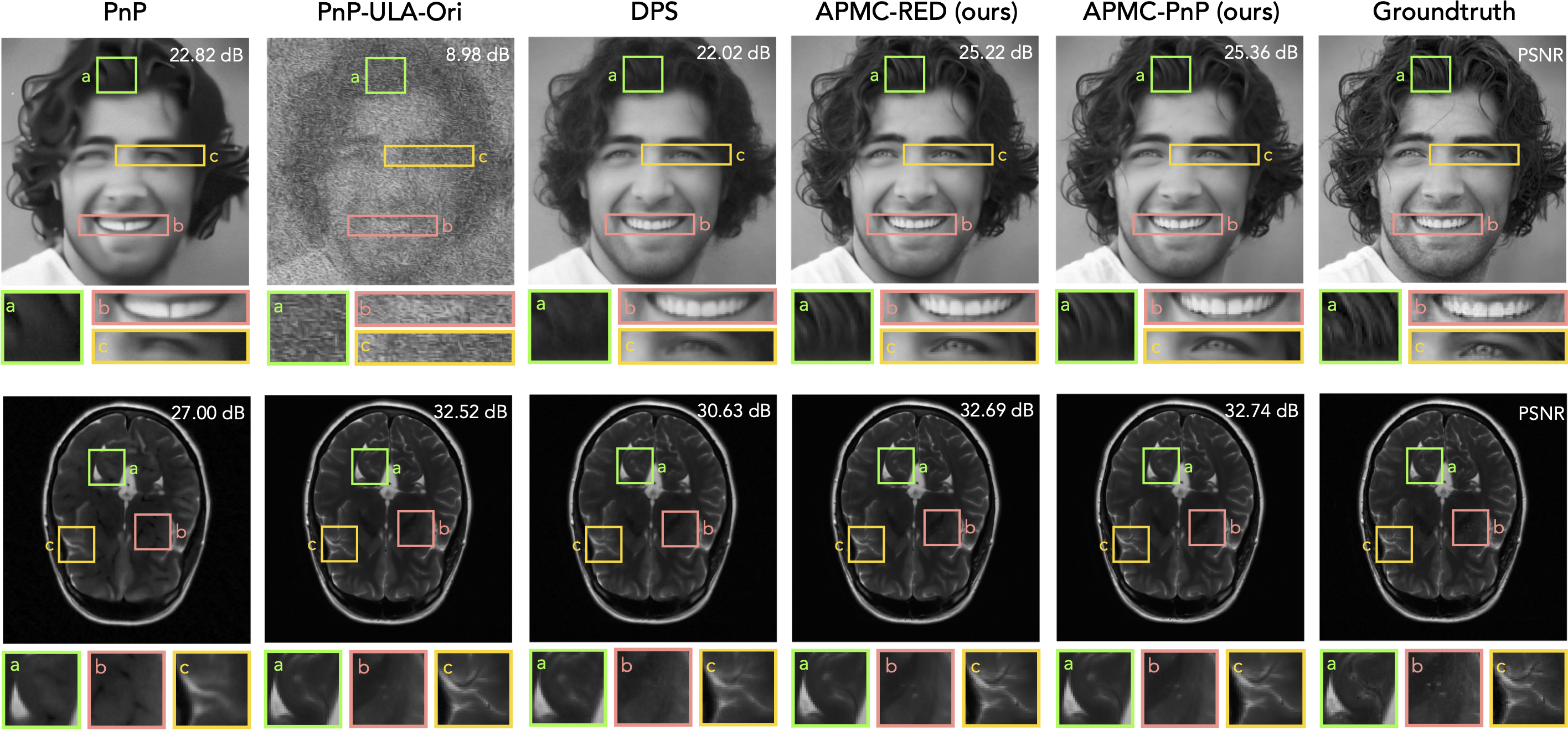}
\caption{
Visual comparison of the reconstructions obtained by APMC algorithms and baseline algorithms for $10\%$ CS (\emph{1st row}) and $8\times$ MRI (\emph{2nd row}) tasks.
The final images of the sampling algorithms are obtained by averaging $50$ image samples. 
The visual difference is highlighted in the zoom-in images.
Note how APMC algorithms restore the fine details.
}
\vspace{-10pt}
\label{Fig:Recon}
\end{figure*}
}

\afterpage{
\begin{table*}[t]
\centering
\tiny
\caption{Averaged NLL values obtained by the proposed PMC and baseline sampling algorithms for the CS ($m/n=0.1$ \& $m/n=0.3$) and MRI ($\text{Accel.}=8\times$ \& $\text{Accel.}=4\times$) tasks. The values of absolute error ($|\xbmbar-\xbm|$) and standard deviation (SD) that jointly determine NLL are also included. The best NLL values of ours and baselines are highlighted in \textbf{bold} and \underline{underline}, respectively.}
\begin{tabular*}{500pt}{L{50pt} C{20pt}C{30pt}C{20pt} C{20pt}C{30pt}C{20pt} C{0pt} C{20pt}C{30pt}C{20pt} C{20pt}C{30pt}C{20pt}} \toprule
\multirow{2}{*}{\textbf{Method}} & \multicolumn{3}{c}{$m/n=0.1$} & \multicolumn{3}{c}{$m/n=0.3$} & & \multicolumn{3}{c}{$\text{Accel.} = 8\times$} & \multicolumn{3}{c}{$\text{Accel.} = 4\times$}\\
\cmidrule{2-7} \cmidrule{9-14}
& NLL $\downarrow$ & $|\xbmbar-\xbm|$ $\downarrow$ & SD $\downarrow$ & NLL $\downarrow$ & $|\xbmbar-\xbm|$ $\downarrow$ & SD $\downarrow$ & & NLL $\downarrow$ & $|\xbmbar-\xbm|$ $\downarrow$ & SD $\downarrow$ & NLL $\downarrow$ & $|\xbmbar-\xbm|$ $\downarrow$ & SD $\downarrow$\\
\cmidrule{1-14}
DPS & \second{$\shortneg1.613$} & \second{$0.0433$} & \second{$0.0590$} & $\shortneg2.366$ & $0.0198$ & $0.0273$ & & \second{$\shortneg2.536$} & \second{$0.0172$} & \second{$0.0224$} & $\shortneg2.730$ & $0.0131$ & $0.0169$\\ 
PnP-ULA-Ori & $1.414$ & $0.3205$ & $0.1886$ & \second{$\shortneg2.474$} & \second{$0.0161$} & \second{$0.0249$} & & $\shortneg2.345$ & $0.0166$ & $0.0405$ & $\shortneg2.529$ & $0.0135$ & $0.0352$\\ 
PnP-ULA-NP (PMC-RED) & $1.366$ & $0.3122$ & $0.1873$ & $\shortneg0.622$ & $0.1006$ & $0.1083$ & & $\shortneg2.056$ & $0.0290$ & $0.0407$ & \second{$\shortneg2.883$} & \second{$0.0116$} & \second{$0.0166$}\\
\noalign{\vskip 0.5ex}
\hdashline\noalign{\vskip 0.5ex}
PMC-PnP (ours) & $1.366$ & $0.3121$ & $0.1874$ & $\shortneg0.620$ & $0.1007$ & $0.1085$ & & $\shortneg2.057$ & $0.0289$ & $0.0406$ & $\shortneg2.885$ & $0.0116$ & $0.0166$\\
APMC-RED (ours) & $\shortneg2.005$ & $0.0278$ & $0.0348$ & $\shortneg2.667$ & $0.0140$ & $0.0179$ & & $\shortneg2.689$ & $0.0152$ & $0.0207$ & $\shortneg2.884$ & $0.0115$ & $0.0165$\\
APMC-PnP (ours) & \first{\shortneg2.023} & \first{0.0275} & \first{0.0346} & \first{\shortneg2.674} & \first{0.0140} & \first{0.0177} & & \first{\shortneg2.690} & \first{0.0152} & \first{0.0206} & \first{\shortneg2.885} & \first{0.0115} & \first{0.0165}\\
\bottomrule
\end{tabular*}
\label{Tab:UQ}
\end{table*}
\begin{figure*}[t!]
\centering
\includegraphics[width=0.97\linewidth]{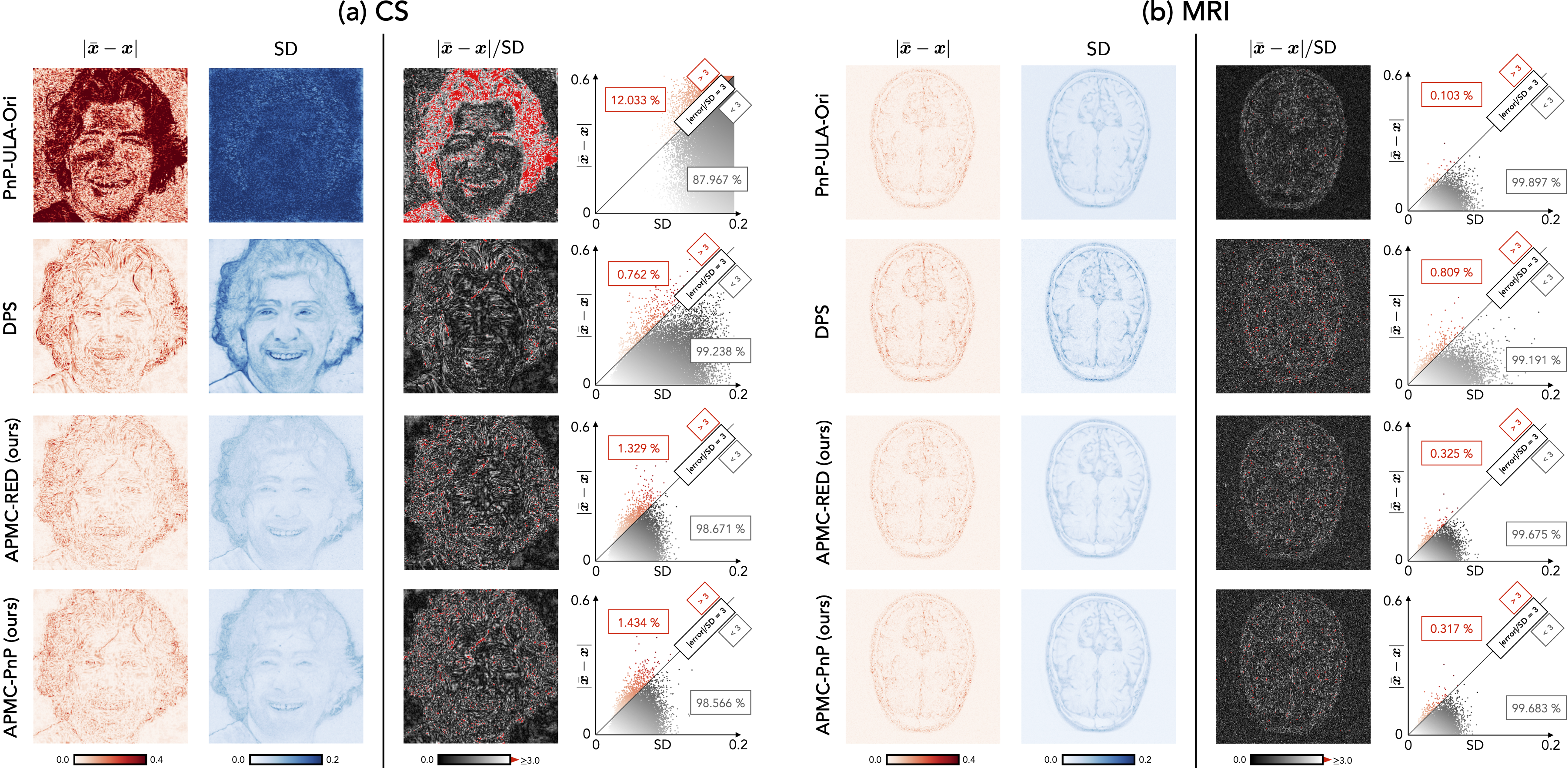}
\caption{
Visualization of the pixel-wise statistics associated with the CS and MRI reconstructions shown in Fig.~\ref{Fig:Recon}.
Figure (a) corresponds to CS, and figure (b) to MRI.
In each figure, the left columns plot the absolute error ($|\xbmbar-\xbm|$) and standard deviation (SD), and the right columns plot the 3-SD credible interval with the outlying pixels highlighted in red.
Note that APMC algorithms lead to a better UQ performance than the baselines by recovering an accurate mean and thus avoiding the need for an arbitrarily large SD.
}
\label{Fig:UQ}
\vspace{-10pt}
\end{figure*}
}

We evaluate the reconstruction quality by using the \emph{peak signal-to-noise ratio (PSNR)}, which is inversely related to the \emph{mean squared error (MSE)}.
We use PSNR as the criterion to finetune the algorithmic hyperparameters.
To quantitatively measure the quality of UQ, we compute the normalized \emph{negative log-likelihood (NLL)}~\cite{Lakshminarayanan.etal2017} of the groundtruth $\xbm$ 
under independent pixel-wise Gaussian distributions characterized by the sample mean $\xbmbar$ and standard deviation $\mathsf{SD}$
\begin{equation}
\mathsf{NLL}(\xbmbar, \xbm) = \frac{1}{n}\sum_{i=1}^{n}\frac{1}{2\mathsf{SD}^2_i}(\xbmbar_i-\xbm_i)^2 + \frac{1}{2}\log(2\pi \mathsf{SD}^2_i),
\end{equation}
where $i$ is the pixel index and $\mathsf{SD}_i$ the $i$th estimated standard deviation.
It follows that better UQ algorithms minimize NLL by producing a high-fidelity $\xbmbar$ and avoiding the need for an arbitrarily large $\mathsf{SD}$.
We additionally calculate the pixel-wise \emph{3-SD credible interval} to measure the coverage of the groundtruth.
If the distribution were truly Gaussian, $99\%$ of the ground-truth pixels should lie within the 3-SD interval; however, as our posterior distributions are not Gaussian with a score-based prior, we do not expect to reach $99\%$ coverage in our experiments.

\subsubsection{Compressed Sensing}
Two different CS ratios $(m/n)$ of $\{10\%, 30\%\}$ are considered in the experiments.
We train the score network using the FFHQ dataset~\cite{Karras.etal2019}, while randomly selecting the test images from the separate CelebA dataset~\cite{Liu.etal2015-CelebA}.

Table~\ref{Tab:Recon} ($m/n=0.1$ \& $m/n=0.3$) summarize the averaged PSNR and MSE values obtained by all algorithms.
First, the table clearly shows the substantial improvement led by APMC algorithms over the baselines in all settings.
In particular, when the compression is severe ($m/n=0.1$), APMC algorithms achieve around $28.40$ dB in averaged PSNR, outperforming the best baseline by $3.5$ dB.
The enhancement is further demonstrated in the visual comparison shown in Fig.~\ref{Fig:Recon} (\emph{1st row}).
We observe that APMC algorithms can faithfully recover fine details preserved in the groundtruth.  
In particular, note the hair, teeth, and eye shape highlighted in the zoom-in regions.
Second, PnP-ULA-Ori, PnP-ULA-NP, and PMC-PnP achieve poor results in the scenario of $m/n=0.1$ due to their slow convergence. 
We run all three algorithms on two test image for more iterations and fail to observe convergence within $150,000$ iterations.
This suggests that stationary Langevin algorithms may converge slowly for certain inverse problems. 
Equipped with weighted annealing, APMC algorithms generally converge within $4,000$ iterations, showing a significant acceleration in speed in this CS task.
On the other hand, PnP-ULA-Ori successfully converges within $10,000$ when the compression is moderate ($m/n=0.3$), while PnP-ULA-NP and PMC-PnP still do not achieve the same level of convergence.
This acceleration is also reflected in the enhanced PSNR and NLL values obtained by PnP-ULA-Ori.
We refer to Supplement~III for a detailed discussion on the speed of convergence.

Table~\ref{Tab:UQ} ($m/n=0.1$ \& $m/n=0.3$) summarize the averaged NLL values obtained by the APMC algorithms and sampling baselines.
We additionally summarize the averaged pixel-wise absolute error ($|\xbmbar-\xbm|$) and standard deviation (SD) as they are the two factors that jointly determine the final value of NLL.
Beyond high-quality reconstruction, the results show that APMC algorithms also achieve better UQ performance than the baselines, by simultaneously producing a high-fidelity mean and avoiding the need for a large SD.
Fig.~\ref{Fig:UQ}(a) visualizes the pixel-wise statistics associated with the reconstruction in Fig~\ref{Fig:Recon} (\emph{1st row}).
In the left columns, we plot the 3-SD credible interval where the outside pixels are highlighted in red.
Note that around $97\%$ of the pixels in the ground-truth image lie in the 3-SD interval from the reconstructions produced by APMC algorithms.
Despite DPS also achieving a high coverage ratio, it yields inaccurate $\xbmbar$ and thus large SD which leads to poor NLL performance.
In the scenario of $m/n=0.3$, we observe that PnP-ULA-Ori outperforms DPS and achieves the best NLL values among the baselines. 
However, we note that PnP-ULA-Ori generates samples with larger SD, thus achieving worse NLL values than AMPC algorithms.

\subsubsection{Magnetic resonance imaging}
In this task, we consider the radial subsampling mask corresponding to $\{4\times, 8\times\}$ acceleration.
We use the FastMRI dataset~\cite{Zbontar.etal2018} to train the score network and to form the testing dataset.

Table~\ref{Tab:Recon} and~\ref{Tab:UQ} ($\text{Accel.}=8\times$ \& $\text{Accel.}=4\times$) summarize the numerical performance of all algorithms in image reconstruction and UQ, respectively.
Overall, the APMC algorithms still yield the best PSNR and NLL values, showing consistent performance across different inverse problems.
On the other hand, we observe that PnP-ULA-Ori converges in both acceleration settings,
which is corroborated in its close numerical performance to APMC algorithms.
Fig.~\ref{Fig:Recon} (\emph{2nd row}) provides a visual comparison of the reconstructed images under $8\times$ acceleration, and Fig.~\ref{Fig:UQ}(b) visualizes the associated pixel-wise statistics.
Note how APMC algorithms restore the fine features in different brain areas and cover about $99\%$ of the ground-truth pixels with a narrower 3-SD credible interval.
Also note the close performance of PnP-ULA-Ori for this example image.
Furthermore, PnP-ULA-NP (PMC-RED) and PMC-PnP also converge under $4\times$ acceleration, achieving numerical performance on par with their APMC variants. This observation shows that the stationary PMC algorithms are able to sample from the posterior at the expense of more iterations.

We additionally compare APMC algorithms with the state-of-the-art end-to-end VarNet~\cite{Sriram.etal2020} in terms of the reconstruction quality.
By just leveraging the model-agnostic priors, both APMC algorithms yield results on par with those of VarNet that requires model-specific training. 
We refer to Supplement III for a detailed discussion.

\begin{figure}[t!]
\centering
\includegraphics[width=0.5\linewidth]{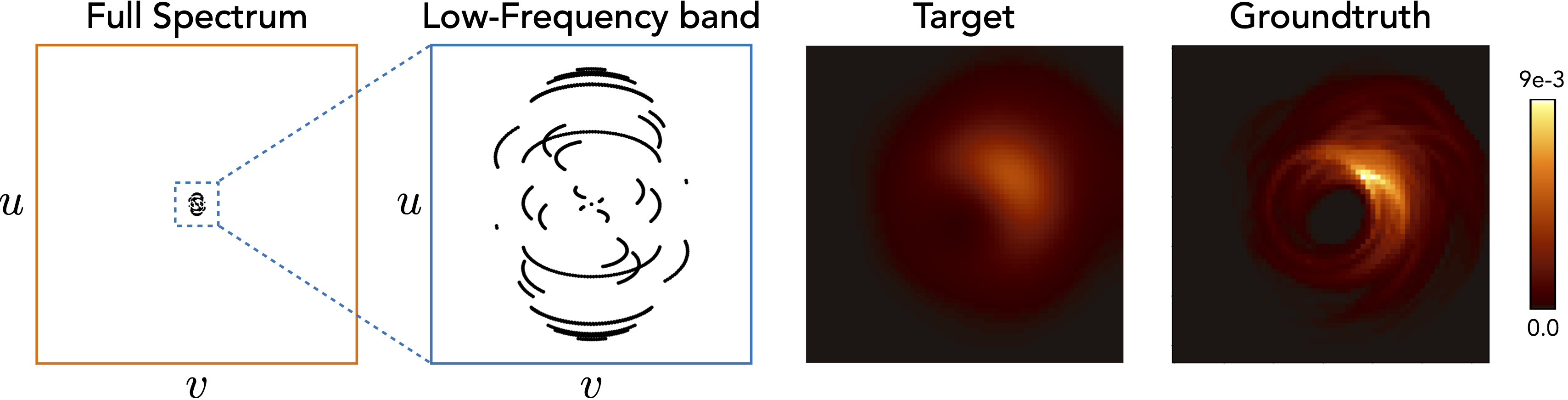}
\caption{
Visual illustration of BHI.
The left two images together demonstrate the subsampling pattern in the Fourier spectrum. The \emph{Groundtruth} image shows the ground-truth black hole simulation image used in this experiment. The \emph{Target} image corresponds to the scenario where the low-frequency band is fully sampled, resembling a single-dish telescope the size of the Earth. This target image represents the intrinsic resolution of our telescope; 
an effort to recover sharper features would be classified as attempting superresolution.
}
\label{Fig:BHI_Setup}
\vspace{-10pt}
\end{figure}

\begin{figure}[t!]
\centering
\includegraphics[width=0.5\linewidth]{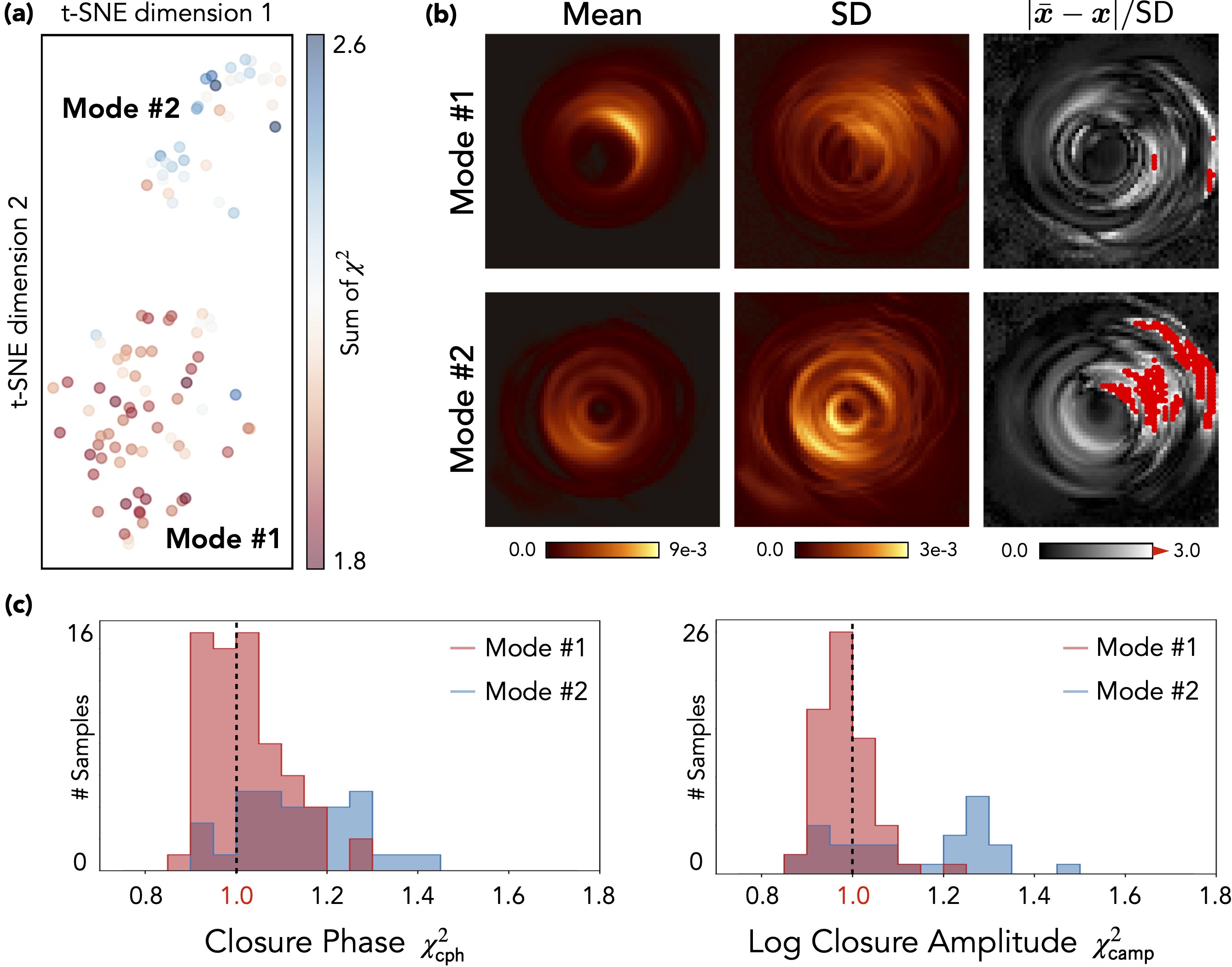}
\caption{
Visualization of the sampling results obtained by APMC-PnP. 
In total 100 samples were drawn. (a) The t-SNE plot (perplexity=20) shows the distribution of the samples. Note that this t-SNE plot shows there are two distinct image modes.
(b) Pixel-wise statistics of each mode. 
(c) The distribution of the closure phase $\chi^2_\mathsf{cph}$ and log closure amplitude $\chi^2_\mathsf{camp}$ statistics for each mode. 
Note that APMC-PnP successfully recovers the two modes of the posterior distribution, with both modes resulting in $\chi^2$ statistics close to 1. 
}
\label{Fig:BHI_PMCPnP}
\vspace{-10pt}
\end{figure}

\subsection{Nonlinear black-hole interferometric imaging}

We now validate APMC algorithms on the nonlinear BHI task.
The imaging system of BHI can be mathematically characterized by the van Cittert-Zernike theorem, which links each Fourier component, or so-called \emph{visibility}, of the black-hole image to the coherence measured by a pair of telescopes.
The measurement equation for each visibility is given by
\begin{equation}
V_{a,b}^t = g_a^t g_b^t \cdot e^{-i(\phi_a^t-\phi_b^t)} \cdot \tilde{\Ibm}_{a,b}^t(\xbm) + \eta_{a,b}
\end{equation}
where $a$ and $b$ index the telescopes, $t$ represents time, and $\tilde{\Ibm}_{a,b}^t(\xbm)$ is the ideal Fourier component of image $\xbmast$ corresponding to the baseline between telescopes $a$ and $b$ at time $t$.
In practice, the visibility is corrupted by three types of noise: telescope-based gain error $g_a$ and $g_b$, telescope-based phase error $\phi_a^t$ and $\phi_a^t$, and baseline-based AWGN $\eta_{a,b}$.
The first two types of noise are usually caused by atmospheric turbulence and instrument miscalibration, while the last one corresponds to thermal noise.
To mitigate the gain and phase error, multiple noisy visibilities are combined into noise-canceling data products termed \emph{closure phases} and \emph{log closure amplitudes}
\begin{align}
&\ybm^\mathsf{cph}_{t,(a,b,c)} = \angle(V_{a,b}V_{b,c}V_{a,c}) \defn \Abm^\mathsf{cph}_{t,(a,b,c)}(\xbm) \\
&\ybm^\mathsf{camp}_{t,(a,b,c,d)} = \log\left( \frac{|V_{a,b}^t| |V_{c,d}^t|}{|V_{a,c}| |V_{b,d}^t|} \right) \defn \Abm^\mathsf{camp}_{t,(a,b,c,d)}(\xbm)
\end{align}
where $\angle$ computes the angle of a complex number.
Given total $M>0$ telescopes, the number of combined measurements $\ybm^{\mathsf{cph}}_t$ and $\ybm^{\mathsf{camp}}_t$ at time $t$ are given by $\frac{(M-1)(M-2)}{2}$ and $\frac{M(M-3)}{2}$, respectively, after excluding repetitive measurements.
Here, we adopt a 9-telescope array ($M=9$) consisting of telescopes that currently participate in the \emph{Event Horizon Telescope (EHT)}.
In summary, the forward problem of BHI is formulated as
\begin{align}
\begin{cases}
\ybm^\mathsf{cph}_{t} = \Abm^\mathsf{cph}_{t}(\xbm) \\
\ybm^\mathsf{camp}_{t} = \Abm^\mathsf{camp}_{t}(\xbm)
\end{cases}
\hspace{-10pt}
\text{where}\;
\begin{cases}
\Abm^\mathsf{cph}_{t} = ( \Abm^\mathsf{cph}_{t, \mathsf{c}_1}, \cdots, \Abm^\mathsf{cph}_{t, \mathsf{c}_{k_1}} )^T \\
\Abm^\mathsf{camp}_{t} = ( \Abm^\mathsf{camp}_{t, \mathsf{c}_1}, \cdots, \Abm^\mathsf{camp}_{t, \mathsf{c}_{k_2}} )^T
\end{cases}\nonumber
\end{align}
Note that the inverse problem of BHI is severely ill-posed: even if we have an Earth-size telescope, the high-frequency visibilities are still immeasurable; in practice, the low-frequency band is further subsampled.
A visual illustration of BHI is provided in Fig.~\ref{Fig:BHI_Setup}.
Additionally, the gain and phase errors result in significant information loss; for instance, the absolute phase of the image can never be recovered and the total flux (i.e. summation of the pixel values) of the image is not constrained by either of the closure quantities. Since this is the case we include an additional constraint in the likelihood to constrain the total flux:
\begin{equation}
\label{Eq:bhi_likelihood}
\ell(\ybm|\xbm) = 
\underbrace{\sum_{t,\mathsf{c}} \frac{\|\Abm^\mathsf{cph}_{t, \mathsf{c}}(\xbm) - \ybm^\mathsf{cph}_{t,\mathsf{c}}\|^2_2}{2\beta_\mathsf{cph}^2}}_{\chi^2_\mathsf{cph}} + 
\underbrace{\sum_{t,\mathsf{c}}\frac{\|\Abm^\mathsf{camp}_{t, \mathsf{c}}(\xbm) - \ybm^\mathsf{camp}_{t,\mathsf{c}}\|_2^2}{2\beta_\mathsf{camp}^2}}_{\chi^2_\mathsf{camp}} +\rho\frac{\left\|\sum_{i}\xbm_i-\ybm^\mathsf{flux}\right\|_2^2}{2},
\end{equation}
where $\beta_\mathsf{cph}$ and $\beta_\mathsf{camp}$ are assumed to be known, $\ybm^\mathsf{flux}$ can be accurately measured, and $\rho>0$ is an optimization parameter.
The first two terms in~\eqref{Eq:bhi_likelihood} are referred to as the $\chi^2$ errors.

\paragraph{Experimental setup}
We test our APMC algorithms on a synthetic BHI problem that has previously been shown to lead to a bimodal posterior distribution~\cite{SunHe.2021}.
The ground-truth test image is shown in Fig.~\ref{Fig:BHI_Setup}.
One of the primary objectives of this experiment is to determine if the algorithms can successfully reconstruct this bimodal distribution.
We additionally assess the fidelity of the reconstruction using the $\chi^2$ errors.
Note that $\chi^2_\mathsf{cph}=\chi^2_\mathsf{camp}= 1$ indicates that the measurements and prior are ideally balanced.

We use the GRMHD dataset~\cite{Wong.etal2022} to train the score network.
All images used in the experiment are resized to $64\times64$ pixels.
We observe that APMC-PnP and APMC-RED yield identical results for BHI as the step-size allowed for convergence is small ($10^{-6}$); hence, we restrict our discussion to the results of APMC-PnP for brevity.
We include DPI with a score-based prior (scoreDPI)~\cite{Feng.etal2023scorebased,Feng.etal2023} as the baseline method.
We run each algorithm to draw $100$ samples for computing the final results.
Note that scoreDPI is a variational method and needs to be retrained for different test cases.
Additionally, due to computational and memory constraints, an approximation of the prior log-density must be used in scoreDPI in order to  handle $64\times64$ images.

\begin{figure}[t!]
\centering
\includegraphics[width=0.5\linewidth]{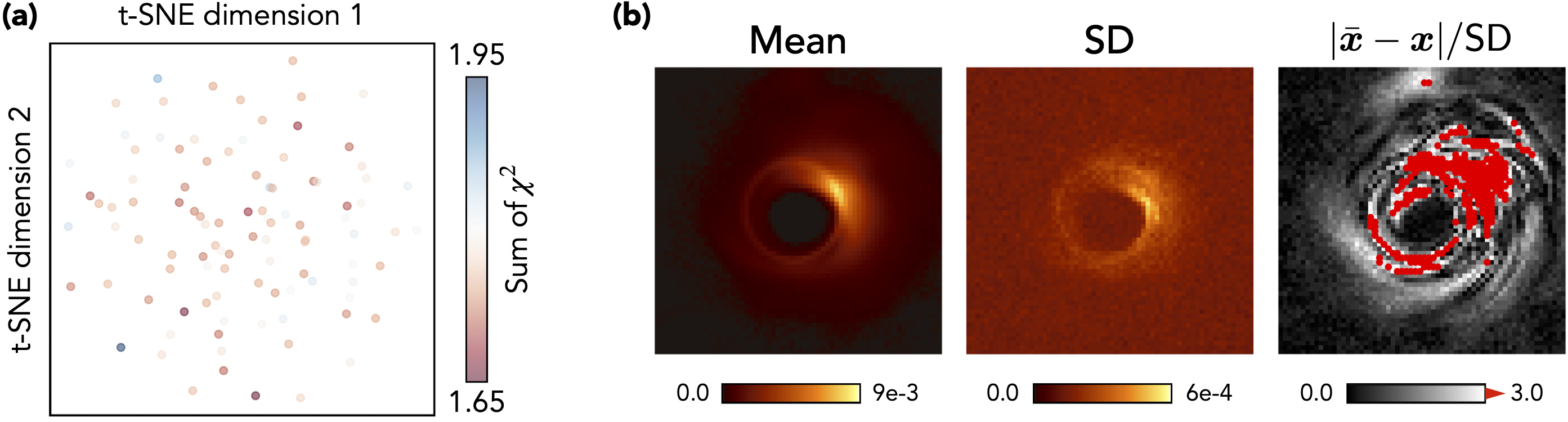}
\caption{
Visualization of the sampling results obtained by scoreDPI.
(a) shows the t-SNE plot (perpelxity=20), and (b) plots the sample statistics.
Note that scoreDPI recovers a single-mode distribution rather than a bimodal distribution.
}
\vspace{-10pt}
\label{Fig:BHI_ScoreDPI}
\end{figure}

\paragraph{Results}
Fig.~\ref{Fig:BHI_PMCPnP} visualizes the statistics of the samples generated by APMC-PnP.
We use the t-distributed stochastic neighbor embedding (t-SNE) to plot the distribution of the samples.
As shown in Fig.~\ref{Fig:BHI_PMCPnP}(a), the samples can be classified into two modes: \emph{Mode \#1} is composed of the black holes with their flux concentrating in the top-right corner (as is also the case in the groundtruth), while the black holes from \emph{Mode \#2} have the flux concentration at the bottom-left corner.
Our result is aligned with the previous experiment that recovered two similar modes (see Fig. 6 in~\cite{SunHe.2021}).
The uncertainty of each mode is shown in Fig.~\ref{Fig:BHI_PMCPnP}(b).
Note the almost full coverage of the groundtruth by the 3-SD credible interval of \emph{Mode \#1}.
Fig.~\ref{Fig:BHI_PMCPnP}(c) illustrates the data-fidelity for each sample by plotting the histogram of the $\chi^2_\mathsf{cph}$ and $\chi^2_\mathsf{camp}$ errors.
We highlight that the values obtained by both modes are distributed close to 1.
Fig.~\ref{Fig:BHI_ScoreDPI} visualizes the statistics for scoreDPI.
The t-SNE plot shows that scoreDPI recovers a single-mode distribution rather than a bimodal distribution.
The inability of scoreDPI to recover the bimodal distribution could be caused by a few factors that potentially lead to inaccurate posterior estimation: \emph{1)} the posterior is approximated by a normalizing flow network that likely struggles to model the rich posterior of $64\times 64$ pixel images, and \emph{2)} to handle this image size an approximation to the prior log-density was used. 
In contrast, our method does not suffer from these factors and can handle large image sizes without making additional approximations.
We also note that there is less coverage of the ground-truth pixels within 3-SD interval of scoreDPI when compared with \emph{Mode \#1}'s result obtained by APMC-PnP.

\section{Conclusion}
In this paper, we develop PMC as a principled posterior sampling framework for solving general imaging inverse problems.
PMC jointly leverages the expressive score-based generative priors and physical constraints while also enabling the UQ of the reconstructed image via posterior sampling.
In particular, we introduce two PMC algorithms which can be backward-related to the traditional PnP and RED algorithms.
A comprehensive convergence analysis for both stationary and annealed variants of the PMC algorithms is presented.
Our results show that all algorithms converge at the rate of $O(1/N)$.
Related experiments are also presented to empirically confirm the proposed theorems and to elucidate the capability of PMC in various representative inverse problems, including a non-convex problem that leads to a bimodal posterior distribution.

\bibliographystyle{ieeetr}


\newpage
\beginsupplement

{\LARGE\noindent \textbf{Supplementary Material}}

\section{Technical Proofs}
In this section, we present the technical proofs.
The general idea of our proof is inspired by the interpolation technique used in~\cite{Vempala.etal2019, Balasubramanian.etal2022}.
In order to analyze the convergence of a PMC algorithm (which is discrete in time), we follow~\cite{Balasubramanian.etal2022} to ``linearly" interpolate the iterations of the algorithm, forming a corresponding continuous-time diffusion process.
By doing so, we are able to use tools in continuous mathematics such as differential inequalities to study the change of the Kullback–Leibler (KL) divergence along the interpolated diffusion process. 
Indeed, the interpolated diffusion process can be understood as a ``piece-wise constant'' gradient flow of the KL divergence in the space of probability distributions. 
Therefore, we can follow the practice in optimization to derive the stationary-distribution convergence. 
Here, the stationary condition will be given in terms of the relative Fisher information (FI), which measures the norm of the ``gradient'' of KL divergence under the Wasserstein metric.

This section is organized as follows. In Section \ref{Sec:Lemmas}, we present two lemmas that we adapted from \cite{Balasubramanian.etal2022}; we also introduce some simpler proof of these lemmas. With the two lemmas, we prove the main theorem of our paper in Sections \ref{Sec:Proof1}, \ref{Sec:Proof2}, \ref{Sec:Proof3}, and \ref{Sec:Proof4}.

\medskip\noindent
\textbf{Notations\quad}
Throughout the proof, we consider the probability space $(\Omega,\Fscr,\mathbb{P})$, where $\Omega$ denotes the sample space, $\Fscr$ the $\sigma$-algebra, and $\mathbb{P}$ the probability measure. 
For the random variable $\zeta: \Omega \rightarrow \R^n$, we denote its expectation by $\E[\zeta] = \int_\Omega\zeta(\omega)\mathbb{P}(d\omega)$.

We recall that the posterior distribution that we are interested in takes the form $\pi(\xbm|\ybm) \propto \ell(\ybm|\xbm) p(\xbm)$. We define $g(\xbm) = -\log \ell (\ybm|\xbm)$ .
For ease of notation, we will omit the presence of $\ybm$ and simply write $\pi(\xbm|\ybm)$ by $\pi(\xbm)$ in our proof. As a reminder, we have the following relations:
\begin{equation}
\pi\propto e^{-f}, \quad \nabla f = -\nabla \log \pi = \nabla g - \nabla \log p \, .\nonumber
\end{equation}
Here, we assume $\nabla f \in C^1$ is differentiable and has a continuous derivative. 
Note $\nabla f$ is also Lipschitz continuous according to Assumption 1 and 2.
We will use the above notations and relations constantly in the proof. For reader's convenience, we recall the definition of the KL divergence for two probability densities $\nu$ and $\pi$ is
$$\KL\left( \nu \,\|\, \pi \right) 
= \int_{\R^n} \nu(\xbm) \log \frac{\nu(\xbm)}{\pi(\xbm)} \,d\xbm\, ,$$
and the FI is
\begin{align}
\FI\left( \nu \,\|\, \pi \right) 
&= \int_{\R^n} \left\|\nabla \log\frac{\nu(\xbm)}{\pi(\xbm)}\right\|^2_2 \nu(\xbm)\,d\xbm \nonumber\\
&= \int_{\R^n} \big\|\nabla \log \nu(\xbm) - \nabla \log \pi(\xbm) \big\|^2_2 \, \nu(\xbm)\, d\xbm . \nonumber
\end{align}

\subsection{Lemmas}
\label{Sec:Lemmas}
The first lemma concerns the density evolution of an interpolated diffusion process.
\begin{lemma}[\cite{Balasubramanian.etal2022}]
\label{Le:KL}
Consider the stochastic process defined by
\begin{equation}\tag{L1.1}
\label{Eq:L1.1}
\Xt := \xbm_0 - t w_0+ \sqrt{2}\Bt, \quad\text{with}\quad w_0 = w(\Xzero), \;\Xzero\sim\nu_0 
\end{equation}
where $w_0$ is integrable and $\{\Bt\}_{t\geq0}$ is a standard Brownian motion in $\R^n$ which is independent of $(\xbm_0, w_0)$. Then, writing $\nu_t$ for the probability density of $\Xt$, we have
\begin{align}\tag{L1.2}
\label{Eq:L1.2}
\frac{d}{d t}\KL(\nu_t \,\|\, \pi) \leq -\frac{3}{4}\FI(\nu_t \,\|\, \pi) + \E\big[ \left\| \nabla f(\xbm_t) - w_0 \right\|^2_2 \big],
\end{align}
where we recall that $\pi\propto e^{-f}$, and the expectation in the last term is taken over $x_0 \sim \nu_0$ and $x_t \sim \nu_t$.
\end{lemma}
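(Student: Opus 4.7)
The plan is to derive a Fokker--Planck equation for the marginal law $\nu_t$, then differentiate $\KL(\nu_t \,\|\, \pi)$ along this flow, and finally absorb the error term with a Young/Jensen-type estimate. First I would condition on $\Xzero = x_0$: given $\Xzero$, the process $\Xt = \Xzero - t\, w(\Xzero) + \sqrt{2}\Bt$ is an $\R^n$-valued Brownian motion with constant drift $-w(x_0)$, so the conditional density $\nu_{t|0}(\cdot \mid x_0)$ satisfies $\partial_t \nu_{t|0} = \nabla \cdot (w(x_0)\,\nu_{t|0}) + \Delta \nu_{t|0}$. Integrating against $\nu_0(x_0)$ yields a Fokker--Planck equation for the marginal with an \emph{effective drift} equal to a conditional expectation:
\begin{equation*}
\partial_t \nu_t = \nabla \cdot (b\,\nu_t) + \Delta \nu_t, \qquad b(x) := \E[\, w_0 \,\mid\, \Xt = x\,].
\end{equation*}

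Second, I would differentiate the KL divergence. Using $\int \partial_t \nu_t \, dx = 0$, integration by parts (standard decay assumptions on $\nu_t$), and the identity $\nabla \log \pi = -\nabla f$, one obtains
\begin{equation*}
\frac{d}{dt}\KL(\nu_t \,\|\, \pi) = -\int \nu_t \bigl(b + \nabla \log \nu_t\bigr) \cdot \nabla \log\tfrac{\nu_t}{\pi} \, dx.
\end{equation*}
Writing $b + \nabla \log \nu_t = \nabla \log(\nu_t/\pi) + (b - \nabla f)$ splits the right side into $-\FI(\nu_t \,\|\, \pi)$ plus a cross term $-\int \nu_t (b - \nabla f) \cdot \nabla \log(\nu_t/\pi)\, dx$. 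Applying Cauchy--Schwarz pointwise followed by Young's inequality $|uv| \le u^2 + v^2/4$ bounds the cross term by $\E[\|b(\Xt) - \nabla f(\Xt)\|_2^2] + \tfrac{1}{4}\FI(\nu_t \,\|\, \pi)$, which absorbs a quarter of the Fisher information and leaves the desired $-\tfrac{3}{4}\FI$.

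Third, I would replace the effective drift $b(\Xt)$ by the original $w_0$ via conditional Jensen: since $b(\Xt) - \nabla f(\Xt) = \E[\,w_0 - \nabla f(\Xt) \mid \Xt\,]$ (note $\nabla f(\Xt)$ is $\Xt$-measurable), tower plus Jensen gives $\E[\|b(\Xt) - \nabla f(\Xt)\|_2^2] \le \E[\|w_0 - \nabla f(\Xt)\|_2^2]$, yielding exactly the bound in~\eqref{Eq:L1.2}. The main obstacle is the identification of the effective drift as a conditional expectation and the subsequent Jensen step -- without it, one would be stuck comparing $\nabla f(\Xt)$ with $w(\Xzero)$ through an intermediate object that is a function of $\Xt$ only. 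Minor technical issues (integration by parts without boundary terms, smoothness of $\nu_t$ under the Gaussian smoothing from $\sqrt{2}\Bt$) are standard and follow from the fact that $\nu_t$ is a Gaussian convolution for every $t>0$, so derivatives and decay are well controlled.
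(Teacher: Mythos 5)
Your proposal is correct and takes essentially the same route as the paper's proof: conditioning on $\xbm_0$ to get the Fokker--Planck equation whose effective drift is the conditional expectation $\E[w_0\mid\Xt=\xbm]$, differentiating the KL divergence and splitting off $-\FI(\nu_t\,\|\,\pi)$, absorbing the cross term via Young's inequality with constant $1/4$, and finishing with the tower property and conditional Jensen to replace $\E[w_0\mid\Xt]$ by $w_0$. No gaps.
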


\begin{proof}
This lemma is based on Lemma 12 in~\cite{Balasubramanian.etal2022}, which is adapted from~\cite{Vempala.etal2019}. For readers' convenience, we also present 
a detailed proof here.
The main idea is to derive the evolution of the density of $\Xt$, and then plug it into the the derivative formula for the KL divergence.

\vspace{-0.5em}
\paragraph{Step 1: Deriving the density evolution equation}
For each $t>0$, let $\nuj$ denote the density of the joint distribution of $(\Xt, \Xzero)$. Let $\nutz$ be the conditional distribution of $\Xt$ conditioned on $\Xzero$, and $\nuzt$ be the conditional distribution of $\Xzero$ conditioned on $\Xt$. We have the relation
\begin{equation}
\nuj({\bm x}, \Xzero) = \nutz({\bm x} | \Xzero)\nu_0(\Xzero) = \nuzt(\Xzero | {\bm x})\nu_t({\bm x})\, .
\end{equation}
Conditioning on $\Xzero$, we have that $w_0$ is a constant vector. Then, the conditional distribution $\nutz$ evolves according to the following Fokker-Planck equation, which is also known as Kolmogorov Forward equation:
\begin{align}
\label{Eq:CondFP}
\frac{\partial}{\partial t} \nutz({\bm x} | \Xzero) &= \div\left( \nutz({\bm x} | \Xzero)w_0 \right) + \Delta \nutz({\bm x} | \Xzero) \nonumber\\
& = \div \left( \nutz({\bm x} | \Xzero)w_0 + \nabla\nutz({\bm x} | \Xzero) \right),
\end{align}
where $\Delta=\div(\nabla\cdot)$ denotes the Laplace operator. Now, our goal is to derive the evolution equation for the marginal distribution $\nut(\xbm)$. To achieve so, we need to take the expectation over $\Xzero\sim\nuz$. Multiplying both sides of \eqref{Eq:CondFP} by $\nuz(\Xzero)$ and integrating over $\Xzero$, we have
\begin{align}
\label{Eq:FP}
\frac{\partial}{\partial t} \nu_{t}({\bm x}) &= \int_{\R^n} \bigg(\frac{\partial}{\partial t}\nutz({\bm x} | \Xzero)\bigg) \nu_0(\Xzero) \,d\xbm_0 \nonumber\\
& = \int_{\R^n} \div\left( \nutz({\bm x} | \Xzero)w_0 + \nabla\nutz({\bm x} | \Xzero) \right) \nu_0(\Xzero) \,d\xbm_0  \nonumber\\
& = \int_{\R^n} \div\left( \nuj({\bm x}, \Xzero)w_0 + \nabla\nuj({\bm x}, \Xzero) \right) \,d\xbm_0  \nonumber\\
& = \int_{\R^n} \div\left( \nuj({\bm x}, \Xzero)w_0 \right) \,d\xbm_0 + \int_{\R^n} \Delta \nuj({\bm x}, \Xzero) \,d\xbm_0  \nonumber\\
& = \div\left( \nu_t({\bm x}) \int_{\R^n}  \nuzt(\Xzero | {\bm x})w_0 \,d\xbm_0 \right) + \Delta \nut({\bm x})  \nonumber\\
& = \div\big( \nu_t({\bm x}) \E_\nuzt\left[ w_0 | \Xt=\xbm \right] \big) + \Delta \nut(\Xt)\, ,
\end{align}
where in the last two eqaulities, we use the definition of conditional distributions. As a consequence, we obtain the evolution equation for $\nu_t$.

\vspace{-0.5em}
\paragraph{Step 2: Calculating the derivatives of the KL divergence} The time derivative of the KL divergence with respect to $\pi$ is given by
\begin{align}
\frac{d}{d t}\KL\left( \nut \,\|\, \pi \right) 
&= \frac{d}{dt} \int_{\R^n} \nut(\xbm) \log \frac{\nut(\xbm)}{\pi(\xbm)} \,d\xbm \nonumber \\
&= \int_{\R^n} \frac{\partial \nut(\xbm)}{\partial t} \log \frac{\nut(\xbm)}{\pi(\xbm)} \,d\xbm
	\;+\; \int_{\R^n} \nut(\xbm) \frac{\partial}{\partial t}\log \frac{\nut(\xbm)}{\pi(\xbm)} \,d\xbm \nonumber \\
&= \int_{\R^n} \frac{\partial \nut(\xbm)}{\partial t} \log \frac{\nut(\xbm)}{\pi(\xbm)} \,d\xbm
	\;+\; \int_{\R^n} \nut(\xbm) \, \frac{\pi(\xbm)}{\nut(\xbm)} \, \frac{1}{\pi(\xbm)} \, \frac{\partial\nut(\xbm)}{\partial t} \,d\xbm \nonumber\\
&= \int_{\R^n} \frac{\partial \nut(\xbm)}{\partial t} \log \frac{\nut(\xbm)}{\pi(\xbm)} \,d\xbm
	\;+\; \int_{\R^n} \frac{\partial\nut(\xbm)}{\partial t} \,d\xbm \nonumber\\
&= \int_{\R^n} \frac{\partial \nut(\xbm)}{\partial t} \log \frac{\nut(\xbm)}{\pi(\xbm)} \,d\xbm
	\;+\; \frac{d}{dt} \int_{\R^n} \nut(\xbm) \,d\xbm \nonumber\\
&= \int_{\R^n} \frac{\partial \nut(\xbm)}{\partial t} \log \frac{\nut(\xbm)}{\pi(\xbm)} \,d\xbm\, . \nonumber
\end{align}
By using~\eqref{Eq:FP}, we can derive
\begin{align}
\label{Eq:KL}
\frac{d}{dt}\KL\left( \nut \,\|\, \pi \right)
&= \int_{\R^n} \frac{\partial \nut(\xbm)}{\partial t} \log \frac{\nut(\xbm)}{\pi(\xbm)} \,d\xbm, \nonumber\\
&= \int_{\R^n} \Big( \div\big( \nu_t(\xbm) \E_\nuzt\left[ w_0 | \Xt=\xbm \right] \big) + \Delta \nut(\xbm) \Big) \log \frac{\nut(\xbm)}{\pi(\xbm)} d\xbm \nonumber\\
&= \int_{\R^n} \Big( \div\big( \nu_t(\xbm) \E_\nuzt\left[ w_0 | \Xt=\xbm \right] + \nabla \nut(\xbm) \big) \Big) \log \frac{\nut(\xbm)}{\pi(\xbm)} d\xbm \nonumber\\
&= -\int_{\R^n} \Big\langle \nu_t(\xbm) \E_\nuzt\left[ w_0 | \Xt=\xbm \right] + \nabla \nut(\xbm), \nabla\log \frac{\nut(\xbm)}{\pi(\xbm)} \Big\rangle d\xbm \nonumber\\
& \comment{Apply chain rule and $\nabla\log\pi(\xbm) = -\nabla f(\xbm)$} \nonumber\\
&= -\int_{\R^n} \Big\langle \nu_t(\xbm) \left( \E_\nuzt\left[ w_0 | \Xt=\xbm \right] + \nabla \log \frac{\nut(\xbm)}{\pi(\xbm)} - \nabla f(\xbm)\right), \nabla\log \frac{\nut(\xbm)}{\pi(\xbm)} \Big\rangle d\xbm \nonumber\\
&= -\FI(\nut \,\|\, \pi) + \int_{\R^n} \Big\langle \nabla f(\xbm) - \E_\nuzt\left[ w_0 | \Xt=\xbm \right], \nabla\log \frac{\nut(\xbm)}{\pi(\xbm)} \Big\rangle \nu_t(\xbm) d\xbm\, .
\end{align}
To handle the second term, we apply Peter-Paul inequality ($ab \leq a^2 / (2 \epsilon) + \epsilon b^2 / 2$ for some $\epsilon>0$)
\begin{align}
\label{Eq:Young}
&\int_{\R^n} \Big\langle \nabla f(\xbm) - \E_\nuzt\left[ w_0 | \Xt=\xbm \right], \nabla\log \frac{\nut(\xbm)}{\pi(\xbm)} \Big\rangle \nu_t(\xbm) d\xbm \nonumber\\
&\leq \frac{1}{4}\FI(\nut \,\|\, \pi) \;+\; \E \big[ \left\|\nabla f(\Xt) - \E_\nuzt\left[ w_0 | \Xt \right]\right\|^2_2 \big] \nonumber\\
&\leq \frac{1}{4}\FI(\nut \,\|\, \pi) \;+\; \E \big[ \left\|\nabla f(\Xt) - w_0 \right\|^2_2\big]\, ,
\end{align}
where in the last inequality, we use the property of conditional expectations.

Plugging \eqref{Eq:Young} into \eqref{Eq:KL} and rearranging terms yields the result \eqref{Eq:L1.2}. The proof is complete.
\end{proof}

The second lemma concerns the bound on the Fisher information. One can prove this lemma based on Lemma 16 from~\cite{Chewi22a}. Here, we also present a simpler, direct proof without introducing the infinitesimal generator of the Langevin diffusion.
\begin{lemma}
\label{Le:Bound}
Let $\Fcal: \R^n \rightarrow \R^n$ denote a measurable function. Assume $\nabla f$ is Lipschitz continuous with a Lipschitz constant $\Lf>0$. For any probability density $\nu$, it holds that
\begin{equation}\tag{L2}
\E_\nu\left[ \| \Fcal(\xbm) \|^2_2\right] \leq 2\FI(\nu \,\|\, \pi) + 4n\Lf + 2\E_\nu \big[ \|\Fcal(\xbm) - \nabla f(\xbm)\|^2_2 \big]\, ,
\end{equation}
where we recall that $\pi\propto e^{-f}$.
\end{lemma}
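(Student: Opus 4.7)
The plan is to split $\|\mathcal{F}(\xbm)\|^2_2$ into a ``modeling error'' part and a ``score of $\pi$'' part, and then absorb the latter into $\FI(\nu \,\|\, \pi)$ plus a dimensional term coming from the Lipschitz constant $L_f$. The modeling error part will be exactly the last expectation appearing on the right-hand side.

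The first step I would take is the elementary Young-type inequality
\begin{equation}
\|\mathcal{F}(\xbm)\|^2_2 \leq 2\|\mathcal{F}(\xbm) - \nabla f(\xbm)\|^2_2 + 2\|\nabla f(\xbm)\|^2_2,
\nonumber
\end{equation}
and take expectation under $\nu$. The first term on the right is already the desired ``approximation'' term in the statement. The entire remaining work is therefore to show the key auxiliary estimate
\begin{equation}
\mathbb{E}_\nu\big[\|\nabla f(\xbm)\|^2_2\big] \;\leq\; \FI(\nu \,\|\, \pi) + 2n L_f,
\nonumber
\end{equation}
since plugging this back in immediately yields the stated bound with constants $2$, $4n L_f$, $2$.

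To prove this auxiliary estimate, I would use the identity $\nu\,\nabla f = \nu(\nabla f + \nabla \log \nu) - \nabla \nu$, which follows from $\nabla \nu = \nu \nabla \log \nu$ and lets us ``convert'' the score of $\nu$ into the score of the ratio $\nu/\pi$ (recall $\nabla \log \pi = -\nabla f$). Taking the inner product with $\nabla f$ and integrating gives
\begin{equation}
\int_{\mathbb{R}^n}\!\!\nu\,\|\nabla f\|_2^2\,d\xbm = \int_{\mathbb{R}^n}\!\!\nu\,\langle \nabla f,\, \nabla f + \nabla \log \nu\rangle\,d\xbm - \int_{\mathbb{R}^n}\!\!\langle \nabla\nu,\,\nabla f\rangle\,d\xbm.
\nonumber
\end{equation}
Integration by parts in the last term produces $\int \nu\,\Delta f\,d\xbm$, and the Lipschitz continuity of $\nabla f$ with constant $L_f$ gives $\|\nabla^2 f\|_{\mathrm{op}}\leq L_f$, hence $|\Delta f|\leq n L_f$ pointwise. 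The first term on the right I would bound by Cauchy--Schwarz (or Young) as $\tfrac{1}{2}\mathbb{E}_\nu[\|\nabla f\|_2^2] + \tfrac{1}{2}\FI(\nu \,\|\, \pi)$, since $\mathbb{E}_\nu[\|\nabla f + \nabla \log \nu\|_2^2] = \FI(\nu \,\|\, \pi)$. Rearranging gives the factor $2$ on $nL_f$ and produces the claimed inequality.

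I expect the main subtlety to be the justification of the integration by parts step, which implicitly requires sufficient decay of $\nu\,\nabla f$ at infinity so that the boundary term vanishes; under the standing regularity/Lipschitz assumptions and with $\nu$ a probability density the argument can be made rigorous by a standard truncation and limiting argument, but it is the only genuinely non-algebraic point. Everything else is a one-line Young's inequality and a sign-tracking exercise to recognize $\FI(\nu \,\|\, \pi)$ in the cross term.
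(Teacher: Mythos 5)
Your proposal is correct and follows essentially the same route as the paper's proof: the same initial Young split $\E_\nu[\|\Fcal\|_2^2]\leq 2\E_\nu[\|\Fcal-\nabla f\|_2^2]+2\E_\nu[\|\nabla f\|_2^2]$, the same auxiliary estimate $\E_\nu[\|\nabla f\|_2^2]\leq \FI(\nu\,\|\,\pi)+2n\Lf$ obtained by converting $\int\langle\nabla\nu,\nabla f\rangle\,d\xbm$ into $-\int\nu\,\Delta f\,d\xbm$ via the divergence theorem and bounding $|\Delta f|\leq n\Lf$ from the Lipschitz/Hessian bound. The only cosmetic difference is that you handle the cross term by Young's inequality and absorb $\tfrac12\E_\nu[\|\nabla f\|_2^2]$ into the left side, whereas the paper expands the square and drops the nonnegative term $\E_\nu[\|\nabla\log\nu\|_2^2]$; your remark about justifying the vanishing boundary term in the integration by parts is a fair point the paper passes over silently.
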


\begin{proof}
First, note that $\Fcal(\xbm) = \Fcal(\xbm) - \nabla f(\xbm) + \nabla f(\xbm)$. By invoking the convexity of the $\ell$-2 norm, we have
\begin{equation}
\label{Eq:Prep}
\E_\nu \big[ \|\Fcal(\xbm)\|^2_2 \big] \leq 2\E_\nu \big[ \|\nabla f(\xbm)\|^2_2 \big] + 2\E_\nu \big[ \|\Fcal(\xbm) - \nabla f(\xbm)\|^2_2 \big].
\end{equation}
Recall the fact that $\nabla f = -\nabla \log \pi$. We can derive
\begin{equation}
    \begin{aligned}
        \E_\nu \big[ \|\nabla f(\xbm)\|^2_2 \big] &= \E_\nu \big[ \|\nabla \log \pi(\xbm) - \nabla \log \nu(\xbm) + \nabla \log \nu(\xbm)\|^2_2 \big]\\
        & = \E_\nu \big[ \|\nabla \log \pi(\xbm) - \nabla \log \nu(\xbm)\|^2_2 +2\langle \nabla \log\pi(\xbm) - \nabla \log \nu(\xbm), \nabla \log \nu(\xbm) \rangle + \|\nabla \log \nu(\xbm)\|^2_2 \big]\\
        & = \E_\nu \big[ \|\nabla \log \pi(\xbm) - \nabla \log \nu(\xbm)\|^2_2 + \langle2\nabla \log\pi(\xbm) - \nabla \log \nu(\xbm), \nabla \log \nu(\xbm)\rangle\big] \\
        & \leq \FI(\nu \,\|\, \pi) + 2 \E_\nu \big[\langle\nabla \log\pi(\xbm), \nabla \log \nu(\xbm)\rangle\big]\, .
    \end{aligned}
\end{equation}
Here, the first three identities are obtained by algebraic manipulations. 
In the last inequality, we used the definition of the Fisher information and the fact that $\E_{\nu} \big[\|\nabla \log \nu(\xbm)\|^2_2\big] \geq 0$.
To proceed, we note that 
\begin{equation}
    \begin{aligned}
        \E_\nu \big[\langle\nabla \log\pi(\xbm), \nabla \log \nu(\xbm)\rangle\big]
        = & \int_{\mathbb{R}^n} -\langle\nabla f(\xbm), \nabla \nu(\xbm) \rangle d\xbm \\
        = & \int_{\mathbb{R}^n} \big(-\div (\nabla f(\xbm) \nu(\xbm)) + \Delta f(\xbm) \nu(\xbm)\big) d\xbm\\
        = &\int_{\mathbb{R}^n} \Delta f(\xbm) \nu(\xbm) d\xbm \leq n\Lf\, .
    \end{aligned}
\end{equation}
In the first identity, we use the relation $\nabla \log\pi(\xbm) = -f(\xbm)$ and $\nu(\xbm) \nabla \log \nu(\xbm) = \nabla \nu(\xbm)$. In the second identity, we use the equality $\div (\nabla f(\xbm) \nu(\xbm)) = \Delta f(\xbm) \nu(\xbm) + \langle\nabla f(\xbm), \nabla \nu(\xbm) \rangle$ and the fact that
$$\int_{\mathbb{R}^n}\div (\nabla f(\xbm) \nu(\xbm)) d\xbm = 0,$$
which is due to the Divergence theorem.
In the last inequality, we use the fact that $|\Delta f|\leq n\Lf$. This is because that $\nabla f \in C^1$ is Lipschitz continuous with a constant $\Lf>0$, so the eigenvalues of the Hessian of $f$ is bounded by $\Lf$. Then by definition, $\Delta f$ is the summation of the diagonals of the Hessian, so its absolute value is bounded by $n\Lf$.

Combining all the inequalities above, we get
\begin{equation}
\E_\nu\left[ \| \Fcal(\xbm)\|^2_2 \right] \leq 2\FI(\nu \,\|\, \pi) \,+\, 4nL \,+\, 2\E_\nu \big[ \|\Fcal(\xbm) - \nabla f(\xbm)\|^2_2 \big].
\nonumber
\end{equation}
The proof is complete.
\end{proof}

\subsection{Proof of PMC-RED (Theorem~1)}
\label{Sec:Proof1}

We construct the following interpolation for PMC-RED
\begin{equation}
\label{Eq:AvgPMCRED}
\Xt= \Xkga - (t-k\gamma) \Gcal(\Xkga) + \sqrt{2}\big( \Bt - \Bkga \big) \quad\text{for}\; t \in [k\gamma, (k+1)\gamma]\, .
\end{equation}
We can observe that \eqref{Eq:AvgPMCRED} is simply a `linear' interpolation of PMC-RED. Let $\nu_t$ be the law of $\Xt$. Recall that $\Gcal(\xbm) = \nabla g(\xbm) - \Scal_\theta \big( \xbm , \sigma \big)$.
According to Assumptions~1 and 3, $\Gcal$ is Lipschitz continuous with
\begin{equation}
\LG = L_g + L_\sigma,
\end{equation}
and its $\ell$-2 distance from $\nabla f$ is given by
\begin{equation}
\|\Gcal(\xbm) - \nabla f(\xbm) \|_2 \leq \|\Scal_\theta(\xbm,\sigma) - \nabla \log p(\xbm)\|_2 \leq \sigma C + \epsilon_\sigma\, .
\end{equation}
\medskip
\noindent
Now we are ready to prove Theorem~1. 
From Lemma~\ref{Le:KL}, we know that for $t \in [k\gamma, (k+1)\gamma]$
\begin{equation}
\label{Eq:KLstart}
\frac{d}{d t}\KL(\nu_t \,\|\, \pi) \leq -\frac{3}{4}\FI(\nu_t \,\|\, \pi) + \E\big[ \| \nabla f(\xbm_t) - \Gcal(\xbm_{k\gamma}) \|^2_2 \big].
\end{equation}
The second term can be bounded by invoking the convexity of $\ell$-2 norm
\begin{align}
\label{Eq:SecTerm}
\E\big[ \| \nabla f(\xbm_t) - \Gcal(\xbm_{k\gamma}) \|^2_2 \big]
&\leq 2\E\big[ \| \Gcal(\Xt) -   \Gcal(\xbm_{k\gamma}) \|^2_2 \big] \,+\, 2\E\big[ \| \nabla f(\Xt) - \Gcal(\Xt) \|^2_2 \big]\nonumber\\
&\leq 2 \LG^2 \E \left[ \| \Xt - \xbm_{k\gamma} \|^2_2 \right] \,+\, 2\left( \sigma C + \epsilon_\sigma \right)^2,
\end{align}
We can bound the first term via
\begin{align}
\E \left[ \| \Xt - \xbm_{k\gamma} \|^2_2 \right]
&\leq (t-k\gamma)^2 \E \left[ \|\Gcal(\xbm_{k\gamma})\|^2_2 \right] \,+\, 2\E \left[ \|\Bt - \Bbm_{k\gamma}\|^2_2 \right] \nonumber \\
&= (t-k\gamma)^2 \E \left[ \|\Gcal(\xbm_{k\gamma})  - \Gcal(\Xt) + \Gcal(\Xt)\|^2_2 \right] \,+\, 2n(t-k\gamma) \nonumber\\
&\leq 2(t-k\gamma)^2\Big( \E \left[ \|\Gcal(\Xt)\|^2_2 \right] \,+\, \LG^2 \E\left[ \| \Xt - \xbm_{k\gamma} \|^2_2 \right] \Big) \,+\, 2n(t-k\gamma). \nonumber
\end{align}
Rearranging the terms yields
\begin{equation}
\left( 1-2(t-k\gamma)^2\LG^2 \right) \E \left[ \| \Xt - \xbm_{k\gamma} \|^2_2 \right] 
\leq 2(t-k\gamma)^2 \E \left[ \|\Gcal(\Xt)\|^2_2 \right] \,+\, 2n(t-k\gamma) \nonumber
\end{equation}
which can be simplified by letting $\gamma\leq\frac{1}{2\LG} \Rightarrow 1-2(t-k\gamma)^2\LG^2 \geq 1-2\gamma^2\LG^2 \geq \frac{1}{2}$. Therefore, when $\gamma\leq\frac{1}{2\LG}$, it holds that
\begin{equation}
\label{Eq:Norm}
\E \left[ \| \Xt - \xbm_{k\gamma} \|^2_2 \right]
\leq 4(t-k\gamma)^2 \E \left[ \|\Gcal(\Xt)\|^2_2 \right] \,+\, 4n(t-k\gamma).
\end{equation}
By plugging~\eqref{Eq:Norm} and~\eqref{Eq:SecTerm} into~\eqref{Eq:KLstart} and invoking Lemma~\ref{Le:Bound}, we can obtain
\begin{align}
\label{Eq:ApplyLemma2}
&\frac{d}{d t}\KL(\nu_t \,\|\, \pi) \nonumber\\
&\comment{Plug-in Eq.~\eqref{Eq:SecTerm}} \nonumber\\
&\leq -\frac{3}{4}\FI(\nu_t \,\|\, \pi) \,+\, 2 \LG^2 \E \left[ \| \Xt - \xbm_{k\gamma} \|^2_2 \right] \,+\, 2\left( \sigma C + \epsilon_\sigma \right)^2 \nonumber\\
&\comment{Plug-in Eq.~\eqref{Eq:Norm}} \nonumber\\
&\leq -\frac{3}{4}\FI(\nu_t \,\|\, \pi) \,+\, 8(t-k\gamma)^2\LG^2 \E \left[ \| \Gcal(\Xt)\|^2_2 \right] \,+\, 8n(t-k\gamma)\LG^2 \,+\, 2\left( \sigma C + \epsilon_\sigma \right)^2 \nonumber\\
&\comment{Plug-in Lemma~\ref{Le:Bound} with $\| \nabla f(\xbm) - \Gcal(\xbm) \|_2 \leq \sigma C + \epsilon_\sigma$}\nonumber\\
&\leq -\frac{3}{4}\FI(\nu_t \,\|\, \pi) \,+\, 16(t-k\gamma)^2\LG^2 \left( \FI(\nut \,\|\, \pi) \,+\, 2n\Lf \,+\, \left( \sigma C + \epsilon_\sigma \right)^2 \right) \,+\, 8n(t-k\gamma)\LG^2 \,+\, 2\left( \sigma C + \epsilon_\sigma \right)^2,
\end{align}
where we recall that $\Lf$ is the Lipschitz constant of $\nabla f = \nabla g - \nabla \log p$ and is given by 
$$\Lf = L_g + L_p.$$
Let $L = \max\{\LG, \Lf\}$. We can simplify~\eqref{Eq:ApplyLemma2} by letting $\gamma\leq\frac{1}{\sqrt{32}L} \Rightarrow 16(t-k\gamma)^2L^2 \leq 16\gamma^2L^2 \leq \frac{1}{2}$. Therefore, once $\gamma\leq\frac{1}{\sqrt{32}L}$, we get
\begin{align}
\label{Eq:KL2}
&\frac{d}{d t}\KL(\nu_t \,\|\, \pi) \nonumber\\
&\leq -\frac{1}{4}\FI(\nu_t \,\|\, \pi) \,+\, 16(t-k\gamma)^2L^2 \left( 2nL \,+\, \left( \sigma C + \epsilon_\sigma \right)^2 \right) \,+\, 8(t-k\gamma)nL^2 \,+\, 2\left( \sigma C + \epsilon_\sigma \right)^2.
\end{align}
Integrating~\eqref{Eq:KL2} between $[k\gamma, (k+1)\gamma]$ yields
\begin{align}
&\KL(\nu_{(k+1)\gamma} \,\|\, \pi) - \KL(\nu_{k\gamma} \,\|\, \pi) \nonumber\\
&\leq -\frac{1}{4} \int_{k\gamma}^{(k+1)\gamma} \FI(\nu_t \,\|\, \pi) \,dt \,+\, \frac{16}{3}L^2\gamma^3 \left( 2nL \,+\, \left( \sigma C + \epsilon_\sigma \right)^2 \right) \,+\, 4nL^2\gamma^2 \,+\, 2\gamma\left( \sigma C + \epsilon_\sigma \right)^2 \nonumber\\
&= -\frac{1}{4} \int_{k\gamma}^{(k+1)\gamma} \FI(\nu_t \,\|\, \pi) \,dt \,+\, \left( \frac{32}{3}L\gamma + 4 \right)nL^2\gamma^2 \,+\, \left( \frac{16}{3}L^2\gamma^2 \,+\, 2 \right)\gamma\left( \sigma C + \epsilon_\sigma \right)^2 \nonumber\\
&\leq -\frac{1}{4} \int_{k\gamma}^{(k+1)\gamma} \FI(\nu_t \,\|\, \pi) \,dt \,+\, 6nL^2\gamma^2 \,+\, 3\gamma\left( \sigma C + \epsilon_\sigma \right)^2, \nonumber
\end{align}
where in the last inequality, we invoke $\gamma\leq\frac{1}{\sqrt{32}L}$.
Now by averaging over $N>0$ iterations, dropping the negative term, and applying $\left( \sum_{i=1}^{N} a_i \right)^2 \leq N \sum_{i=1}^{N} a_i^2$, we can derive Theorem~1
\begin{equation}
\frac{1}{N\gamma}\int_{0}^{N\gamma} \FI(\nu_t \,\|\, \pi) \,dt
\leq \frac{4\KL(\nu_{0} \,\|\, \pi)}{N\gamma}
\,+\, \underbrace{\vphantom{\big(} 24nL^2}_{\Asf_1}\gamma
\,+\, \underbrace{\vphantom{\big(} 24 C^2}_{\Asf_2}\sigma^2
\,+\, \underbrace{\vphantom{\big(} 24}_{\Asf_3}\epsilon_\sigma^2.
\nonumber
\end{equation}

\subsection{Proof of PMC-PnP (Theorem~2)}
\label{Sec:Proof2}
The proof of PnP-PnP follows the same logic of the proof in Section~\ref{Sec:Proof1}. 
Here, we sketch the proof by showing the key steps.

\medskip
\noindent
We construct the continuous interpolation of PMC-PnP
\begin{equation}
\label{Eq:AvgPMCPnP}
\Xt= \Xkga - (t-k\gamma) \Pcal(\Xkga) + \sqrt{2}\big( \Bt - \Bkga \big) \quad\text{for}\; t \in [k\gamma, (k+1)\gamma]\, .
\end{equation}
Let $\nu_t$ be the law of $\Xt$. Recall that $\Pcal(\xbm) = \nabla g(\xbm) - \Scal_\theta \big( \xbm - \gamma \nabla g(\xbm), \sigma \big)$. 
According to Assumptions~1 and 3, we know that $\Pcal$ is Lipschitz continuous with
\begin{equation}
\LP = L_g + L_\sigma + \gamma L_g L_\sigma
\end{equation}
and its $\ell$-2 distance from $\nabla f$ is given by
\begin{equation}
\|\Pcal(\xbm) - \nabla f(\xbm) \|_2 \leq \|\Scal_\theta(\xbm-\gamma \nabla g(\xbm),\sigma) - \nabla \log p(\xbm)\|_2 \leq \gamma L_\sigma R_g + \sigma C + \epsilon_\sigma\, .
\end{equation}
Invoking Lemma~\ref{Le:Bound} yields
\begin{align}
\E_\nut\left[ \| \Pcal(\xbm) \|^2_2\right] 
&\leq 2\FI(\nut \,\|\, \pi) + 4n\Lf + 2\E_\nut \big[ \|\Pcal(\xbm) - \nabla f(\xbm)\|^2_2 \big] \nonumber\\
&\leq 2\FI(\nut \,\|\, \pi) + 4n\Lf + 2\left( \gamma L_\sigma R_g + \sigma C + \epsilon_\sigma \right)^2,
\end{align}
where $L_f = L_g + L_p$.
By invoking Lemma~\ref{Le:KL} and following the steps until \eqref{Eq:Norm}, we have 
\begin{equation}
\left( 1-2(t-k\gamma)^2\LP^2 \right) \E \left[ \| \Xt - \xbm_{k\gamma} \|^2_2 \right] 
\leq 2(t-k\gamma)^2 \E \left[ \|\Pcal(\Xt)\|^2_2 \right] \,+\, 2n(t-k\gamma) \nonumber
\end{equation}
To proceed, we want $1-2(t-k\gamma)^2\LP^2 \geq 1-2\gamma^2\LP^2 \geq \frac{1}{2}$. 
Since $\LP$ is dependent on $\gamma$, we need to solve the quadratic inequality $\gamma^2\LP^2 \leq \frac{1}{4}$ to determine the range of $\gamma$
\begin{equation}
\label{Eq:StepsizePnP1}
\gamma \leq \frac{\sqrt{2L_g L_\sigma + (L_g + L_\sigma)^2} - (L_g + L_\sigma)}{2L_\sigma L_g} = \frac{1}{2} \sqrt{ \frac{2}{L_g L_\sigma} + \left( \frac{L_g + L_\sigma}{L_g L_\sigma} \right)^2 }  - \frac{L_g + L_\sigma}{2 L_g L_\sigma}.
\end{equation}
Note that the right hand side is larger than zero.
By choosing $\gamma$ in the above range, we can derive \eqref{Eq:NormPnP} similar as~\eqref{Eq:Norm}
\begin{equation}
\label{Eq:NormPnP}
\E \left[ \| \Xt - \xbm_{k\gamma} \|^2_2 \right]
\leq 4(t-k\gamma)^2 \E \left[ \|\Pcal(\Xt)\|^2_2 \right] \,+\, 4n(t-k\gamma).
\end{equation}
Following the derivation of~\eqref{Eq:ApplyLemma2}, we have
\begin{align}
\label{Eq:KLPnP2}
&\frac{d}{d t}\KL(\nu_t \,\|\, \pi) \nonumber\\
&\leq -\frac{3}{4}\FI(\nu_t \,\|\, \pi) \,+\, 16(t-k\gamma)^2\LP^2 \left( \FI(\nut \,\|\, \pi) \,+\, 2n\Lf \,+\, \left( \gamma L_\sigma R_g + \sigma C + \epsilon_\sigma \right)^2 \right) \nonumber\\
&\hspace{70pt} \,+\, 8n(t-k\gamma)\LP^2 \,+\, 2\left( \gamma L_\sigma R_g + \sigma C + \epsilon_\sigma \right)^2,
\end{align}
Let $L=\max\{\LP, \Lf\}$. To simplify the above inequality, we want to have $16(t-k\gamma)^2\LP^2\leq16\gamma^2L^2 \leq \frac{1}{2}$. 
Since $\LP$ involves $\gamma$ while $L_f$ does not, we derive the range of step-size by cases: If $L = L_f$, then 
$\gamma \leq \frac{1}{\sqrt{32}L_f}$
If $L = \LP$, then we need to solve the quadratic inequality $16\gamma^2\LP^2 \leq \frac{1}{2}$, which yields
\vspace{5pt}
\begin{equation}
\label{Eq:StepsizePnP2}
\gamma \leq \frac{\sqrt{\sqrt{2}/2 L_g L_\sigma + 2(L_g + L_\sigma)^2} - (L_g + L_\sigma)}{2L_gL_\sigma} = \frac{1}{2} \sqrt{ \frac{\sqrt{2}}{L_g L_\sigma} + \left(\frac{L_g + L_\sigma}{L_g L_\sigma}\right)^2 } - \frac{L_g + L_\sigma}{2L_g L_\sigma}.
\end{equation}
Note that~\eqref{Eq:StepsizePnP2} is tighter than~\eqref{Eq:StepsizePnP1}. 
To simplify, we define 
$$\Ltilde = \max \left\{\frac{2L_\sigma L_g}{\sqrt{\sqrt{2}/2 L_g L_\sigma + 2(L_g + L_\sigma)^2} - (L_g + L_\sigma)}, \quad \sqrt{32} L_f \right\}.$$
By selecting $\gamma\leq\frac{1}{\Ltilde}$, we can derive the following inequality
\begin{align}
\label{Eq:KLPnP3}
&\frac{d}{d t}\KL(\nu_t \,\|\, \pi) \nonumber\\
&\leq -\frac{1}{4}\FI(\nu_t \,\|\, \pi) \,+\, 16(t-k\gamma)^2L^2 \left( 2nL \,+\, \left(\gamma L_\sigma R_g + \sigma C + \epsilon_\sigma\right)^2\right) \nonumber\\ 
&\quad\quad\quad\quad\quad\quad\quad\,+\, 8(t-k\gamma)nL^2 \,+\, 2\left(\gamma L_\sigma R_g + \sigma C + \epsilon_\sigma\right)^2.
\end{align}
Integrating~\eqref{Eq:KLPnP3} over $t$ between $[k\gamma, (k+1)\gamma]$ yields
\begin{align}
&\KL(\nu_{(k+1)\gamma} \,\|\, \pi) - \KL(\nu_{k\gamma} \,\|\, \pi) \nonumber\\
&\leq -\frac{1}{4} \int_{k\gamma}^{(k+1)\gamma} \FI(\nu_t \,\|\, \pi) \,dt \,+\, \frac{16}{3}L^2\gamma^3 \left( 2nL \,+\, \left( \gamma L_\sigma R_g + \sigma C + \epsilon_\sigma \right)^2 \right) \,+\, 4nL^2\gamma^2 \,+\, 2\gamma\left( \gamma L_\sigma R_g + \sigma C + \epsilon_\sigma \right)^2 \nonumber\\
&= -\frac{1}{4} \int_{k\gamma}^{(k+1)\gamma} \FI(\nu_t \,\|\, \pi) \,dt \,+\, \left( \frac{32}{3}L\gamma + 4 \right)nL^2\gamma^2 \,+\, \left( \frac{16}{3}L^2\gamma^2 \,+\, 2 \right)\gamma\left( \gamma L_\sigma R_g + \sigma C + \epsilon_\sigma \right)^2 \nonumber\\
&\leq -\frac{1}{4} \int_{k\gamma}^{(k+1)\gamma} \FI(\nu_t \,\|\, \pi) \,dt \,+\, 6nL^2\gamma^2 \,+\, 3\gamma\left( \gamma L_\sigma R_g + \sigma C + \epsilon_\sigma \right)^2, \nonumber
\end{align}
where in the last inequality, we use the fact $\gamma L \leq \frac{1}{\sqrt{32}}$.
Now by averaging over $N>0$ iterations, dropping the negative term, and using $\left( \sum_{i=1}^{N} a_i \right)^2 \leq N \sum_{i=1}^{N} a_i^2$ and $L_\sigma \leq L$, we can derive Theorem~2
\begin{align}
\frac{1}{N\gamma}\int_{0}^{N\gamma} \FI(\nu_t \,\|\, \pi) \,dt
&\leq \frac{4\KL(\nu_{0} \,\|\, \pi)}{N\gamma} \,+\, 24\gamma nL^2 \,+\, 12 \left(\gamma L_\sigma R_g + \sigma C + \epsilon_\sigma\right)^2 \nonumber\\
&\leq \frac{4\KL(\nu_{0} \,\|\, \pi)}{N\gamma} \,+\, 
(24nL^2 + 36\gamma L L_\sigma R_g^2)\gamma \,+\, 
36C^2 \sigma^2 \,+\,
36 \epsilon_\sigma^2 \nonumber\\
&\leq \frac{4\KL(\nu_{0} \,\|\, \pi)}{N\gamma} \,+\, 
\underbrace{(24nL^2 + 7 L_\sigma R_g^2) \vphantom{\big(}}_{\Bsf_1}\gamma \,+\, 
\underbrace{36C^2 \vphantom{\big(}}_{\Bsf_2}\sigma^2 \,+\, 
\underbrace{36 \vphantom{\big(}}_{\Bsf_3}\epsilon_\sigma^2. \nonumber
\end{align}

\subsection{Proof of APMC-RED (Theorem~3)}
\label{Sec:Proof3}
The proofs of annealed PMC algorithms follow the proof framework established for their counterparts shown in Section~\ref{Sec:Proof1} and~\ref{Sec:Proof2}.
Hence, we restrict the proof to the key steps for clarity.

\medskip
\noindent
We construct the continuous interpolation of \anneal-RED
\begin{equation}
\label{Eq:AdaPPS}
\Xt= \Xkga - (t-k\gamma) \Gcal_k(\Xkga) + \sqrt{2}\big( \Bt - \Bkga \big) \quad\text{for}\; t \in [k\gamma, (k+1)\gamma]
\end{equation}
Let $\nu_t$ be the law of $\Xt$. Reall that $\Gcalk(\xbm) = \nabla g(\xbm) + \alpha_k\Scal_\theta \big( \xbm, \sigma_k \big)$.
According to Assumptions~1 and 3, $\Gcal_k$ is Lipschitz continuous with
\begin{equation}
\LGk = L_g + \alphak L_\sigmak
\end{equation}
and the $\ell$-2 distance from $\nabla f$ is given by
\begin{equation}
\|\Gcal_k(\xbm) - \nabla f(\xbm) \|_2 \leq \|\alpha_k\Scal_\theta(\xbm,\sigma_k) - \nabla \log p(\xbm)\|_2 \leq \sigmak C + \epsilon_\sigmak + (\alphak-1)R_s 
\end{equation}
We invoking Lemma~\ref{Le:Bound} to yield
\begin{equation}
\E_\nut\left[ \| \Gcal_k(\xbm) \|^2_2\right] 
\leq 2\FI(\nut \,\|\, \pi) + 4n\Lf + 2\Big( \sigmak C + \epsilon_\sigmak + (\alphak-1)R_s \Big)^2 \nonumber
\end{equation}
where $L_f = L_g + L_p$. Let $L_k = \max\{\LGk, \Lf\}$. Then, by invoking Lemma~\ref{Le:KL} and following the steps before~\eqref{Eq:KL2}, we can derive the following inequality under the condition $\gamma\leq\frac{1}{\sqrt{32}L_k}$
\begin{align}
\label{Eq:KL4}
&\frac{d}{d t}\KL(\nu_t \,\|\, \pi) \nonumber\\
&\leq -\frac{1}{4}\FI(\nu_t \,\|\, \pi) \,+\, 16(t-k\gamma)^2L_k^2 \left( 2nL_k \,+\, \Big(\alphak(\sigmak C + \epsilon_\sigmak) + (\alphak-1)R_s\Big)^2\right) \nonumber\\
&\quad\quad\quad\quad\quad\quad\quad\,+\, 8(t-k\gamma)nL_k^2 \,+\, 2\Big(\sigmak C + \epsilon_\sigmak + (\alphak-1)R_s\Big)^2.
\end{align}
Integrating~\eqref{Eq:KL4} over $t$ between $[k\gamma, (k+1)\gamma]$ yields
\begin{align}
&\KL(\nu_{(k+1)\gamma} \,\|\, \pi) - \KL(\nu_{k\gamma} \,\|\, \pi) \nonumber\\
&\leq -\frac{1}{4} \int_{k\gamma}^{(k+1)\gamma} \FI(\nu_t \,\|\, \pi) \,dt \,+\, 6nL_k^2\gamma^2 \,+\, 3\gamma\Big(\sigmak C + \epsilon_\sigmak + (\alphak-1)R_s\Big)^2, \nonumber
\end{align}
Let $\gamma\leq\frac{1}{\sqrt{32}\Lmax}$, where $\Lmax = \sup\,\{L_k\}_{k=0}^{N-1}$. 
By averaging over $N>0$ iterations, dropping the negative term, and applying $\left( \sum_{i=1}^{N} a_i \right)^2 \leq N \sum_{i=1}^{N} a_i^2$, we can derive
\begin{align}
\label{Eq:Averaged1}
\frac{1}{N\gamma}\int_{0}^{N\gamma} \FI(\nu_t \,\|\, \pi) \,dt
&\leq \frac{4\KL(\nu_{0} \,\|\, \pi)}{N\gamma} \,+\, \frac{24n\gamma}{N} \sum_{k=0}^{N-1} L_k^2 \,+\, \frac{12}{N} \sum_{k=0}^N\Big(\sigmak C + \epsilon_\sigmak + (\alphak-1)R_s\Big)^2 \nonumber\\
&\leq \frac{4\KL(\nu_{0} \,\|\, \pi)}{N\gamma} \,+\, 24n\gamma \Lmax^2 \,+\, \frac{36}{N}\sum_{k=0}^{N-1} \Big( \sigmak^2C^2 + \epsilon_\sigmak^2 + (\alphak-1)^2R_s^2 \Big)
\end{align}
As $\{\alphak\}_{k=0}^{N-1}$ decreases to one at some iteration $K>0$, we have
\begin{align}
\frac{1}{N}\sum_{k=0}^{N-1} (\alphak-1)^2R_s^2 &= \frac{1}{N}\sum_{k=0}^{K-1} (\alphak-1)^2R_s^2 + \frac{1}{N}\sum_{k=K}^{N-1} (\alphak-1)^2R_s^2 \nonumber\\
&= \frac{1}{N}\sum_{k=0}^{K-1} (\alphak-1)^2R_s^2.
\end{align}
which asymptotically goes to zero. This means that weighted annealing will not introduce extra error influencing the convergence accuracy.
We can derive Theorem~3 by simplifying~\eqref{Eq:Averaged1}
\begin{align}
\frac{1}{N\gamma}\int_{0}^{N\gamma} \FI(\nu_t \,\|\, \pi) \,dt
&\leq \frac{4\KL(\nu_{0} \,\|\, \pi)+\gamma\zeta}{N\gamma} \,+\, 
\underbrace{24n\Lmax^2 \vphantom{\big(}}_{\Csf_1}\gamma \,+\, 
\underbrace{36C^2 \vphantom{\big(}}_{\Csf_2}\sigmabar^2 \,+\, 
\underbrace{36 \vphantom{\big(}}_{\Csf_3}\epsilonbar^2
\end{align}
where $\sigmabar^2 = \frac{1}{N}\sum_{k=0}^{N-1}\sigmak^2$, $\epsilonbar = \frac{1}{N}\sum_{k=0}^{N-1}\epsilon_\sigmak$, and $\zeta=36\sum_{k=0}^{K-1}(\alphak-1)^2R_s^2$. 

\subsection{Proof of APMC-PnP (Theorem~4)}
\label{Sec:Proof4}

We construct the continuous interpolation of \anneal-PnP
\begin{equation}
\label{Eq:AdaPPS}
\Xt= \Xkga - (t-k\gamma) \Pcal_k(\Xkga) + \sqrt{2}\big( \Bt - \Bkga \big) \quad\text{for}\; t \in [k\gamma, (k+1)\gamma]\, .
\end{equation}
Let $\nu_t$ be the law of $\Xt$. Recall $\Pcalk(\xbm) = \nabla g(\xbm) + \alpha_k\Scal_\theta \big( \xbm - \gamma \nabla g(\xbm), \sigma_k \big)$.
According to Assumptions~1 and 3, $\Pcal_k$ is Lipschitz continuous with
\begin{equation}
\LPk = L_g + \alpha_k L_\sigmak + \alpha_k\gamma L_g L_\sigmak,
\end{equation}
and the $\ell$-2 distance from $\nabla f$ is given by
\begin{equation}
\|\Pcal_k(\xbm) - \nabla f(\xbm) \|_2 \leq \|\Scal_\theta(\xbm-\nabla g(\xbm),\sigma) - \nabla \log p(\xbm)\|_2 \leq \gamma L_\sigmak R_g + \sigmak C + \epsilon_\sigmak + (\alphak-1)R_s
\end{equation}
Then, we invoke Lemma~\ref{Le:Bound} to yield
\begin{equation}
\E_\nut\left[ \| \Pcal_k(\xbm) \|^2_2\right] \leq 2\FI(\nut \,\|\, \pi) + 4n\Lf + 2\Big( \gamma L_\sigmak R_g + \sigmak C + \epsilon_\sigmak + (\alphak-1)R_s \Big)^2, \nonumber
\end{equation}
where $L_f = L_g + L_p$. 
By invoking Lemma~\ref{Le:KL} and following the steps until \eqref{Eq:Norm}, we have 
\begin{equation}
\left( 1-2(t-k\gamma)^2\LPk^2 \right) \E \left[ \| \Xt - \xbm_{k\gamma} \|^2_2 \right] 
\leq 2(t-k\gamma)^2 \E \left[ \|\Pcalk(\Xt)\|^2_2 \right] \,+\, 2n(t-k\gamma) \nonumber
\end{equation}
Solving the inequality $\gamma^2\LPk^2 \leq \frac{1}{4}$ yields 
\begin{equation}
\label{Eq:StepsizeAPnP1}
\gamma \leq 
\frac{1}{2} \sqrt{ \frac{2}{\alphak L_g L_\sigmak} + \left( \frac{L_g + \alphak L_\sigmak}{\alphak L_g L_\sigmak} \right)^2 }  - \frac{L_g + \alphak L_\sigmak}{2 L_g L_\sigmak}.
\end{equation}
By choosing $\gamma$ in the above range, we can derive \eqref{Eq:NormAPnP} similar as~\eqref{Eq:Norm}
\begin{equation}
\label{Eq:NormAPnP}
\E \left[ \| \Xt - \xbm_{k\gamma} \|^2_2 \right]
\leq 4(t-k\gamma)^2 \E \left[ \|\Pcal(\Xt)\|^2_2 \right] \,+\, 4n(t-k\gamma).
\end{equation}
Following the derivation of~\eqref{Eq:ApplyLemma2}, we have
\begin{align}
&\frac{d}{d t}\KL(\nu_t \,\|\, \pi) \nonumber\\
&\leq -\frac{3}{4}\FI(\nu_t \,\|\, \pi) \,+\, 16(t-k\gamma)^2\LPk^2 \left( \FI(\nut \,\|\, \pi) \,+\, 2n\Lf \,+\, \Big( \gamma L_\sigmak R_g + \sigmak C + \epsilon_\sigmak + (\alphak-1)R_s \Big)^2 \right) \nonumber\\
&\hspace{70pt} \,+\, 8n(t-k\gamma)\LPk^2 \,+\, 2\Big( \gamma L_\sigmak R_g + \sigmak C + \epsilon_\sigmak + (\alphak-1)R_s \Big)^2, \nonumber
\end{align}
Let $L_k = \max\{\LPk, \Lf\}$. As shown in Section~\ref{Sec:Proof2}, we can simplify the above inequality by letting $16\gamma^2L_k^2 \leq \frac{1}{2}$. 
If $L_k = L_f$, then $\gamma \leq \frac{1}{\sqrt{32}L_f}$;
If $L_k = \LPk$, then we need to solve the quadratic inequality $16\gamma^2\LPk^2 \leq \frac{1}{2}$, which yields
\begin{equation}
\label{Eq:StepsizeAPnP2}
\gamma \leq \frac{1}{2} \sqrt{ \frac{\sqrt{2}}{\alphak L_g L_\sigma} + \left(\frac{L_g + \alphak L_\sigma}{\alphak L_g L_\sigma}\right)^2 } - \frac{L_g + \alphak L_\sigma}{2\alphak L_g L_\sigma}.
\end{equation}
Note that~\eqref{Eq:StepsizeAPnP2} is tighter than~\eqref{Eq:StepsizeAPnP1}.
To simplify, we define 
$$\Ltildek = \max \left\{\frac{2\alphak L_\sigma L_g}{\sqrt{\sqrt{2}/2 \alphak L_g L_\sigma + (L_g + \alphak L_\sigma)^2} - (L_g + \alphak L_\sigma)}, \quad \sqrt{32} L_f \right\}$$
By taking $\gamma\leq\frac{1}{\Ltildek}$, we can derive the following inequality
\begin{align}
\label{Eq:KL5}
&\frac{d}{d t}\KL(\nu_t \,\|\, \pi) \nonumber\\
&\leq -\frac{1}{4}\FI(\nu_t \,\|\, \pi) \,+\, 16(t-k\gamma)^2L_k^2 \left( 2nL_k \,+\, \Big( \gamma L_\sigmak R_g + \sigmak C + \epsilon_\sigmak + (\alphak-1)R_s \Big)^2\right) \nonumber\\
&\quad\quad\quad\quad\quad\quad\quad\,+\, 8(t-k\gamma)nL_k^2 \,+\, 2\Big( \gamma L_\sigmak R_g + \sigmak C + \epsilon_\sigmak + (\alphak-1)R_s \Big)^2.
\end{align}
Integrating~\eqref{Eq:KL5} over $t$ between $[k\gamma, (k+1)\gamma]$ yields
\begin{align}
&\KL(\nu_{(k+1)\gamma} \,\|\, \pi) - \KL(\nu_{k\gamma} \,\|\, \pi) \nonumber\\
&\leq -\frac{1}{4} \int_{k\gamma}^{(k+1)\gamma} \FI(\nu_t \,\|\, \pi) \,dt \,+\, 6nL_k^2\gamma^2 \,+\, 3\gamma\Big( \gamma L_\sigmak R_g + \sigmak C + \epsilon_\sigmak + (\alphak-1)R_s \Big)^2, \nonumber
\end{align}
where we use the fact $\gamma L_k \leq \frac{1}{\sqrt{32}}$. Let $\Lmax = \sup\,\{L_k\}_{k=0}^{N-1}$ and $\Ltildemax = \sup\,\{\Ltildek\}_{k=0}^{N-1}$, and choose $\gamma \leq \frac{1}{\Ltildemax}$. 
By averaging over $N>0$ iterations, dropping the negative term, and applying $\left( \sum_{i=1}^{N} a_i \right)^2 \leq N \sum_{i=1}^{N} a_i^2$, we can derive
\begin{align}
\label{Eq:Averaged}
&\frac{1}{N\gamma}\int_{0}^{N\gamma} \FI(\nu_t \,\|\, \pi) \,dt \nonumber\\
&\leq \frac{4\KL(\nu_{0} \,\|\, \pi)}{N\gamma} \,+\, \frac{24n\gamma}{N} \sum_{k=0}^{N-1} L_k^2 \,+\, \frac{12}{N} \sum_{k=0}^N\Big( \gamma L_\sigmak R_g + \sigmak C + \epsilon_\sigmak + (\alphak-1)R_s \Big)^2 \nonumber\\
&\leq \frac{4\KL(\nu_{0} \,\|\, \pi)}{N\gamma} \,+\, 24n\gamma \Lmax^2 \,+\, \frac{48}{N}\sum_{k=0}^{N-1} \left(\gamma^2 L_\sigmak^2 R_g^2 + \sigma^2_kC^2 + \epsilon_\sigmak^2 + (\alphak-1)^2R_s^2\right).
\end{align}
Similarly, as $\{\alphak\}_{k=0}^{N-1}$ decreases to one at some iteration $K>0$, we have
\begin{equation}
\frac{1}{N}\sum_{k=0}^{N-1} (\alphak-1)^2R_s^2 = \frac{1}{N}\sum_{k=0}^{K-1} (\alphak-1)^2R_s^2 \nonumber
\end{equation}
which asymptotically goes to zero. This means that weighted annealing will not introduce extra error influencing the convergence accuracy.
We can derive Theorem~4 by simplifying~\eqref{Eq:Averaged}
\begin{align}
&\frac{1}{N\gamma}\int_{0}^{N\gamma} \FI(\nu_t \,\|\, \pi) \,dt \nonumber\\
&\leq \frac{4\KL(\nu_{0} \,\|\, \pi)+\gamma\zeta}{N\gamma} \,+\, 
(24n\Lmax^2 + 48 R_g^2 \frac{1}{N}\sum_{k=0}^{N-1}\gamma L_k L_{\sigma_k} )\gamma \,+\, 
48C^2\sigmabar^2 \,+\, 
48\epsilonbar^2 \nonumber\\
&\leq \frac{4\KL(\nu_{0} \,\|\, \pi)+\gamma\zeta}{N\gamma} \,+\, 
\underbrace{(24n\Lmax^2 + 9L_{\sigma_\mathsf{max}} R_g^2) \vphantom{\Big(}}_{\Dsf_1}\gamma \,+\, 
\underbrace{48C^2 \vphantom{\Big(}}_{\Dsf_2}\sigmabar^2 \,+\, 
\underbrace{48 \vphantom{\Big(}}_{\Dsf_3}\epsilonbar^2
\end{align}
where $\sigmabar^2 = \frac{1}{N}\sum_{k=0}^{N-1}\sigma_k^2$ and $\epsilonbar = \frac{1}{N}\sum_{k=0}^{N-1}\epsilon_\sigmak$, and the constants are defined as $\zeta=48\sum_{k=0}^{K-1}(\alphak-1)^2R_s^2$ and $L_{\sigma_\mathsf{max}} = \sup\,\{L_\sigmak\}_{k=0}^{N-1}$. In the derivation, we use the facts $L_\sigmak \leq L_k$ and $\gamma L_k \leq \frac{1}{\sqrt{32}}$.

\section{Additional Technical Details}

This section presents the additional technical details omitted in Section~5 and Section~6 of the main paper. 
Detailed configurations of the hyperparameters and code are available via this link\footnote{\url{https://github.com/sunyumark/PnP-MonteCarlo}}.

\subsection{Technical details in numerical validations of theory}

\paragraph{Numerical validation of convergence} We use the same set of hyperparameters for PMC-PnP/RED to collect the convergence results.
We consider an exponential annealing schedule for $\{\sigmak\}_{k=0}^{N-1}$ and $\{\alphak\}_{k=0}^{N-1}$ defined as
\begin{equation}
\label{Eq:Annealing}
\sigmak = \min\{\sigma_0 \xi^k, \sigma_\mathsf{min}\} \quad\text{and}\quad \alphak = \max\{ \alpha_0\sigma_k^2, 1\},
\end{equation}
where $\xi$ denotes the decaying rate.
Note that we alway make $\alpha_0\leq 1/\sigma_\mathsf{min}^2$ to ensure $\{\alphak\}$ will converge to one.
We set $\xi=0.975$ and $\sigma_0=\alpha_0=10$ in this validation. 
Except the test for $\sigma_\mathsf{min}$, we set $\sigma_\mathsf{min}=0$ in all other tests. Note that the schedule in~\eqref{Eq:Annealing} is also used in the remaining experiments presented in the paper with different parameter realizations. We initialize the APMC algorithm with random points uniformly distributed in the subspace $[-50, 50]^2$. 
Fig.~\ref{Fig:2d_GMM} visualizes one example of the testing posterior distributions.

\begin{figure}[t!]
\centering
\includegraphics[width=0.5\linewidth]{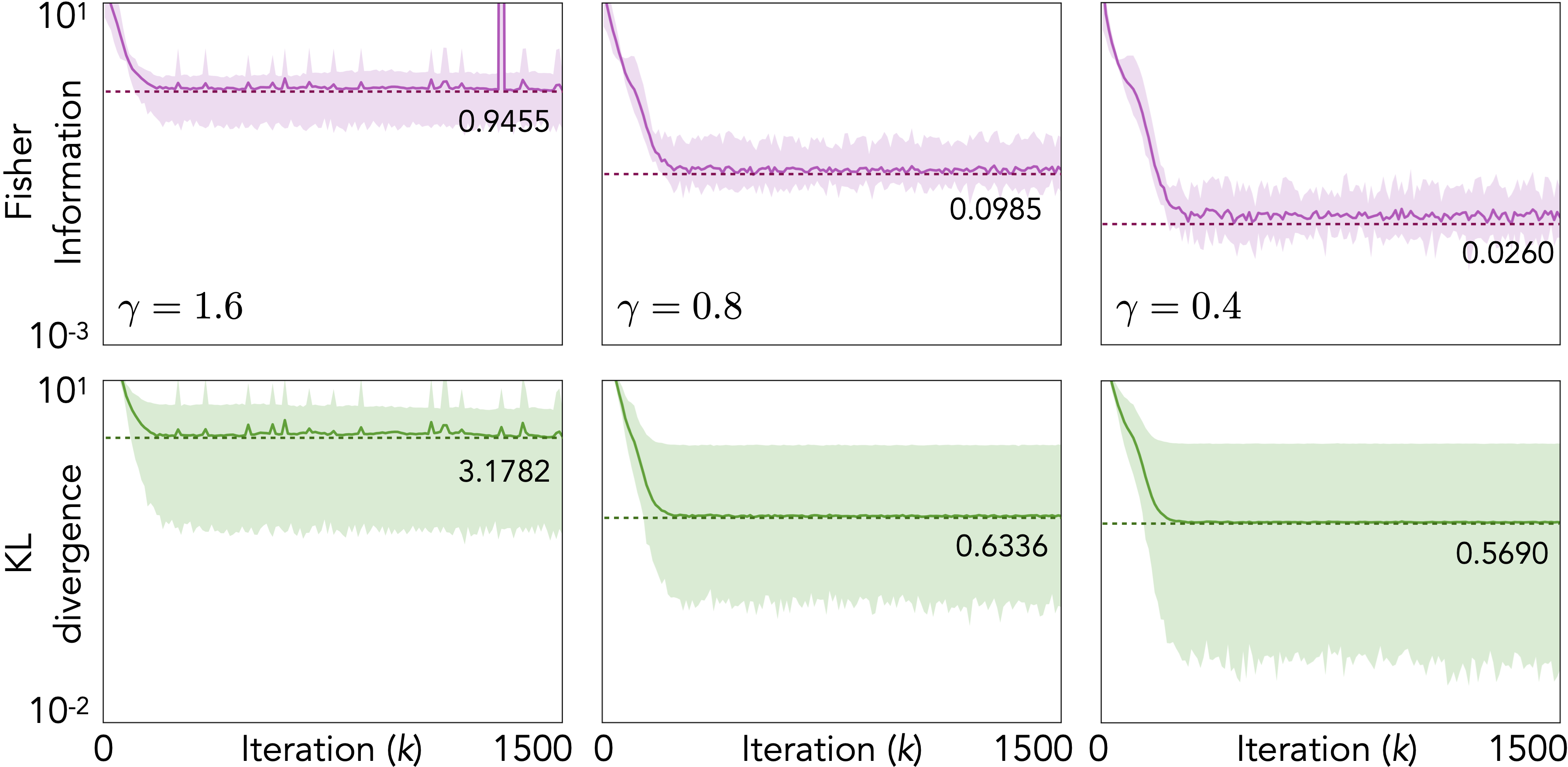}
\caption{
Illustration of the influence of the step-size $\gamma$ on the convergence of APMC-PnP. 
The $\FI(\nu_k \,\|\, \pi)$ are plotted against the iteration number for $\gamma \in \{1.6, 0.8, 0.4\}$.
We also include the plots of $\KL(\nu_k \,\|\, \pi)$ as reference.
The shaded areas in the plots represent the range of values attained across all test distributions.
The dotted line at the bottom shows the minimal value attained by the algorithm.
This plot illustrates that the empirical behavior of APMC-PnP is consistent with Theorem~4, where the convergence accuracy improves with smaller $\gamma$.
}
\vspace{-10pt}
\label{Fig:2d_PMCPnP_step-size}
\end{figure}

\begin{figure}[t!]
\centering
\includegraphics[width=0.5\linewidth]{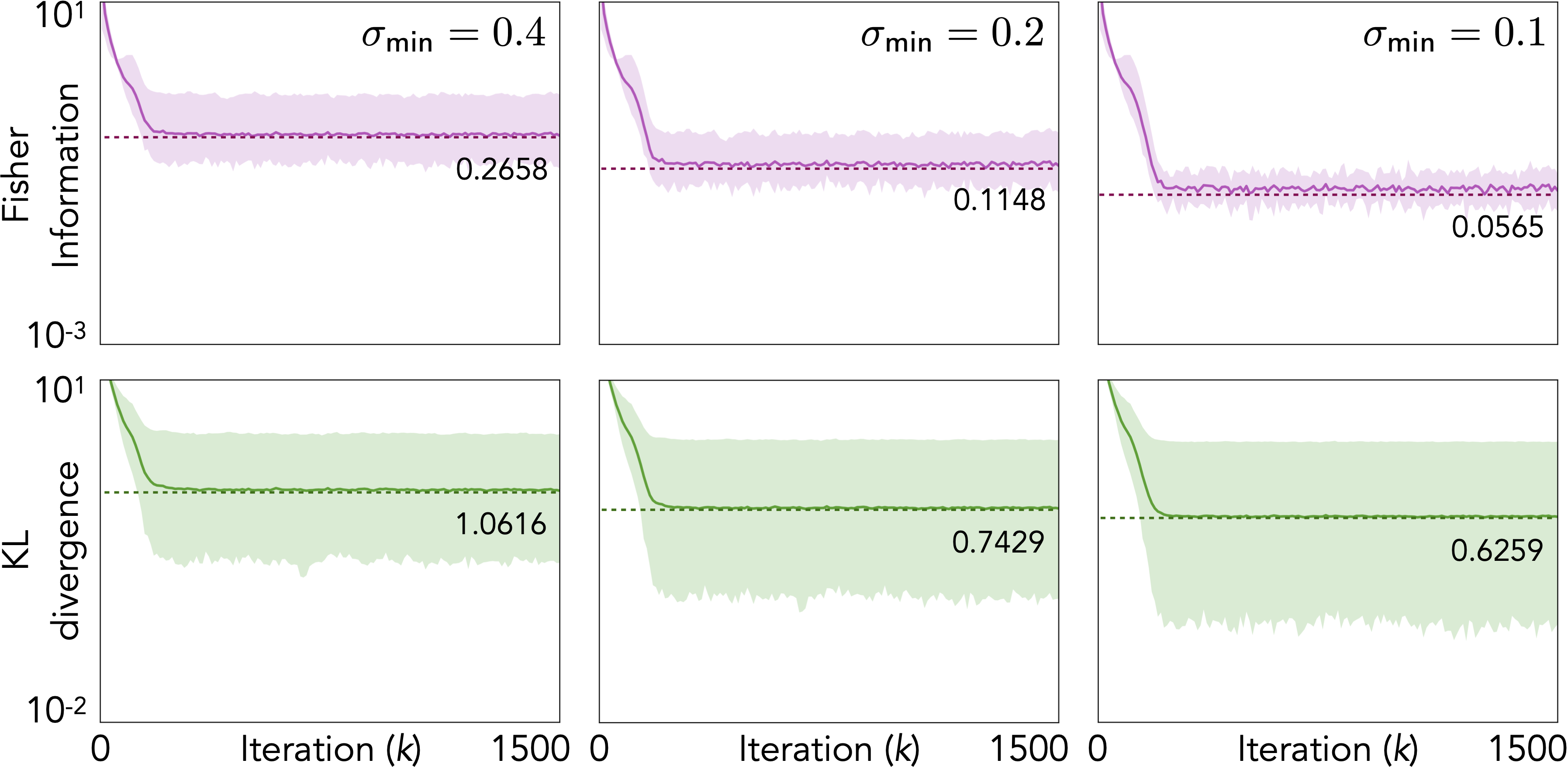}
\caption{
Illustration of the influence of the averaged smoothing strength $\sigma_\mathsf{min}$ on the convergence of APMC-PnP.
The $\FI(\nu_k \,\|\, \pi)$ are plotted against the iteration number for $\sigma_\mathsf{min} \in \{0.4, 0.2, 0.1\}$.
We also include the plots of $\KL(\nu_k \,\|\, \pi)$ as reference.
The shaded areas in the plots represent the range of values attained across all test distributions.
The dotted line at the bottom shows the minimal value attained by the algorithm.
This plot illustrates that the empirical behavior of APMC-PnP is consistent with Theorem~4, where the convergence accuracy improves with smaller $\sigmabar^2$.
}
\label{Fig:2d_PMCPnP_Mismatch}
\end{figure}

\begin{figure}[t!]
  \centering
  \includegraphics[width=0.5\linewidth]{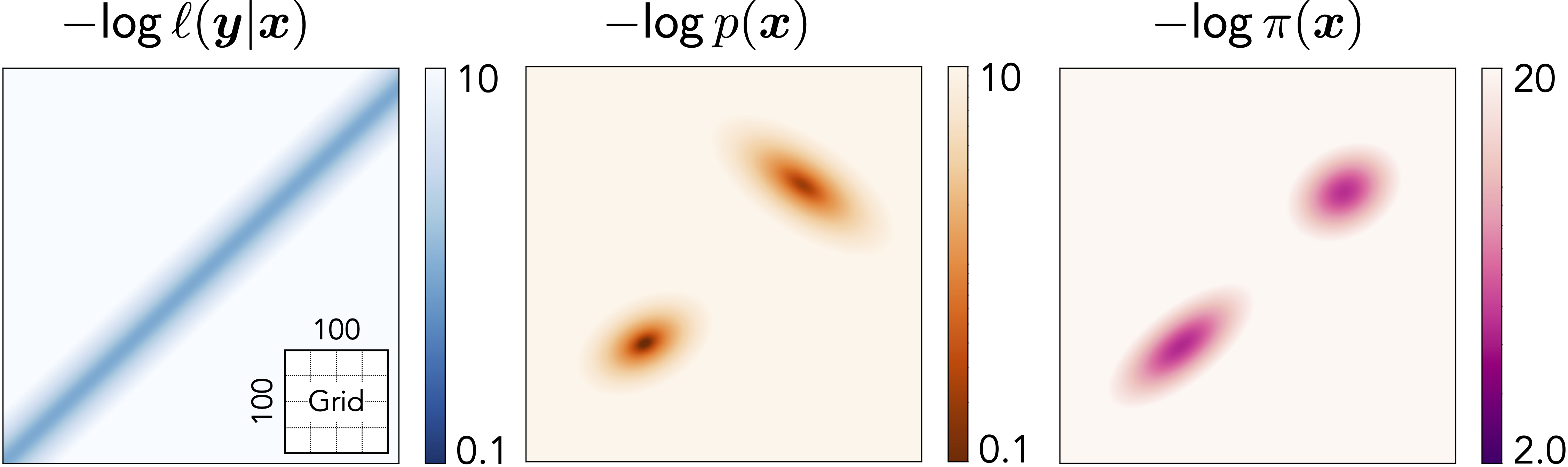}
  \caption{One example of the twenty test 2D posterior distributions. All distributions are defined on a subspace of $[-50,50]^2$.}
  \label{Fig:2d_GMM}
  \vspace{-10pt}
\end{figure}

There are two difficulties in numerically computing the Fisher information (FI) and Kullback–Leibler (KL) divergence: \emph{1)} how to specify the distributions defined by the intermediate samples and \emph{2)} how to evaluate the FI/KL formula which is defined in a continuous space.
To address the first problem, we use Gaussian mixture model (GMM) to fit a distribution to the intermediate samples, enabling the computation of the probability. 
The number of components in the GMM is set to that of the ground-truth posterior distribution, which is two in our case. 
It's worth noting that other numerical density estimation methods, such as kernel density estimation (KDE), can also be utilized. 
However, we select GMM as it is also used to construct the ground-truth distribution and offers computational efficiency.
To address the second problem, we define a fine discrete grid of $1000\times1000$ unit areas uniformly distributed in $[-50,50]^2$. 
By numerically evaluating the FI/KL formula in each unit area and then summing all values, we are able to obtain an approximation of the FI/KL values.

\vspace{-0.5em}
\paragraph{Statistical validation of image posterior sampling}

The network used in this validation is adapted from our customized score network by reducing the number of levels in the U-Net to accommodate $32\times32$ images.
In the experiment, we set the annealing parameters as $\xi=0.975$, $\sigma_\mathsf{min}=\sqrt{1/4000}$, $\alpha_0=4e3$, and $\sigma_0=192$.
When equipped with the analytic score, we notice that APMC algorithms require a smaller $\alpha_0$ to avoid running into numerical errors in early iterations. 
Hence, we set $\alpha_0=2.5$ in that experiment.
The algorithms are initialized with random images uniformly distributed in $[-3, 3]^n$, where $n>0$ denotes the dimensionality of the image. We set $\gamma=1e\shortneg3$ for all algorithms in all cases.

\begin{figure}[t!]
\centering
\includegraphics[width=0.5\linewidth]{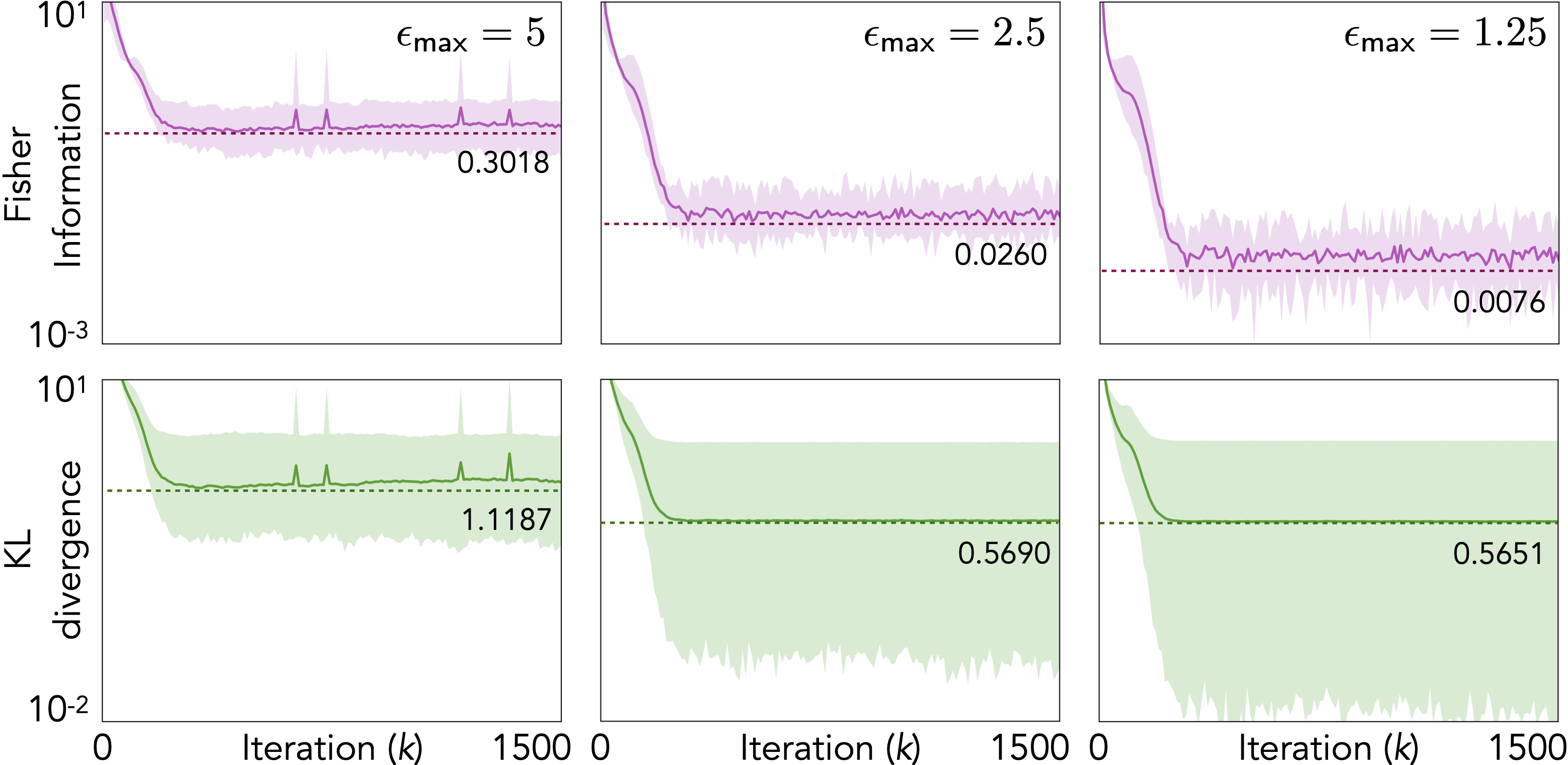}
\caption{
Illustration of the influence of the maximal approximation error $\epsilon_\mathsf{max}$ on the convergence of APMC-PnP.
The $\FI(\nu_k \,\|\, \pi)$ are plotted against the iteration number for $\epsilon_\mathsf{max} \in \{5, 2.5, 1.25\}$.
We also include the plots of $\KL(\nu_k \,\|\, \pi)$ as reference.
The shaded areas in the plots represent the range of values attained across all test distributions.
The dotted line at the bottom shows the minimal value attained by the algorithm.
This plot illustrates that the empirical behavior of APMC-PnP is consistent with Theorem~4, where the convergence accuracy improves with smaller $\epsilonbar^2$.
}
\label{Fig:2d_PMCPnP_Error}
\end{figure}

\begin{figure*}[t!]
\centering
\includegraphics[width=0.5\linewidth]{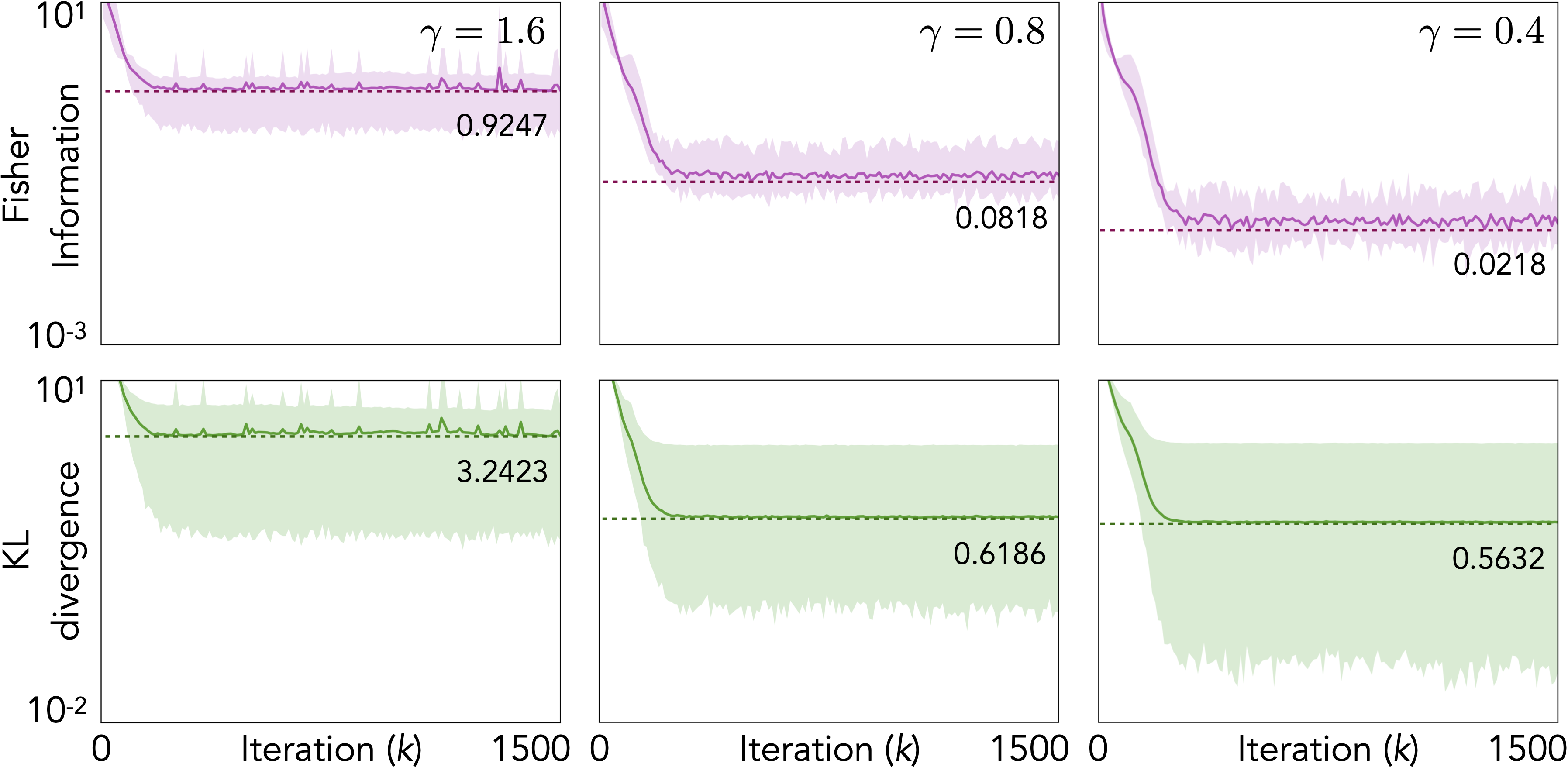}
\caption{
Illustration of the influence of the step-size $\gamma$ on the convergence of APMC-RED.
The $\FI(\nu_k \,\|\, \pi)$ are plotted against the iteration number for $\gamma \in \{1.6, 0.8, 0.4\}$.
We also include the plots of $\KL(\nu_k \,\|\, \pi)$ as reference.
The shaded areas in the plots represent the range of values attained across all test distributions.
The dotted line at the bottom shows the minimal value attained by the algorithm.
This plot illustrates that the empirical behavior of APMC-RED is consistent with Theorem~3, where the convergence accuracy improves with smaller $\gamma$.
}
\label{Fig:2d_PMCRED_step-size}
\end{figure*}

\begin{figure*}[t!]
\centering
\includegraphics[width=0.5\textwidth]{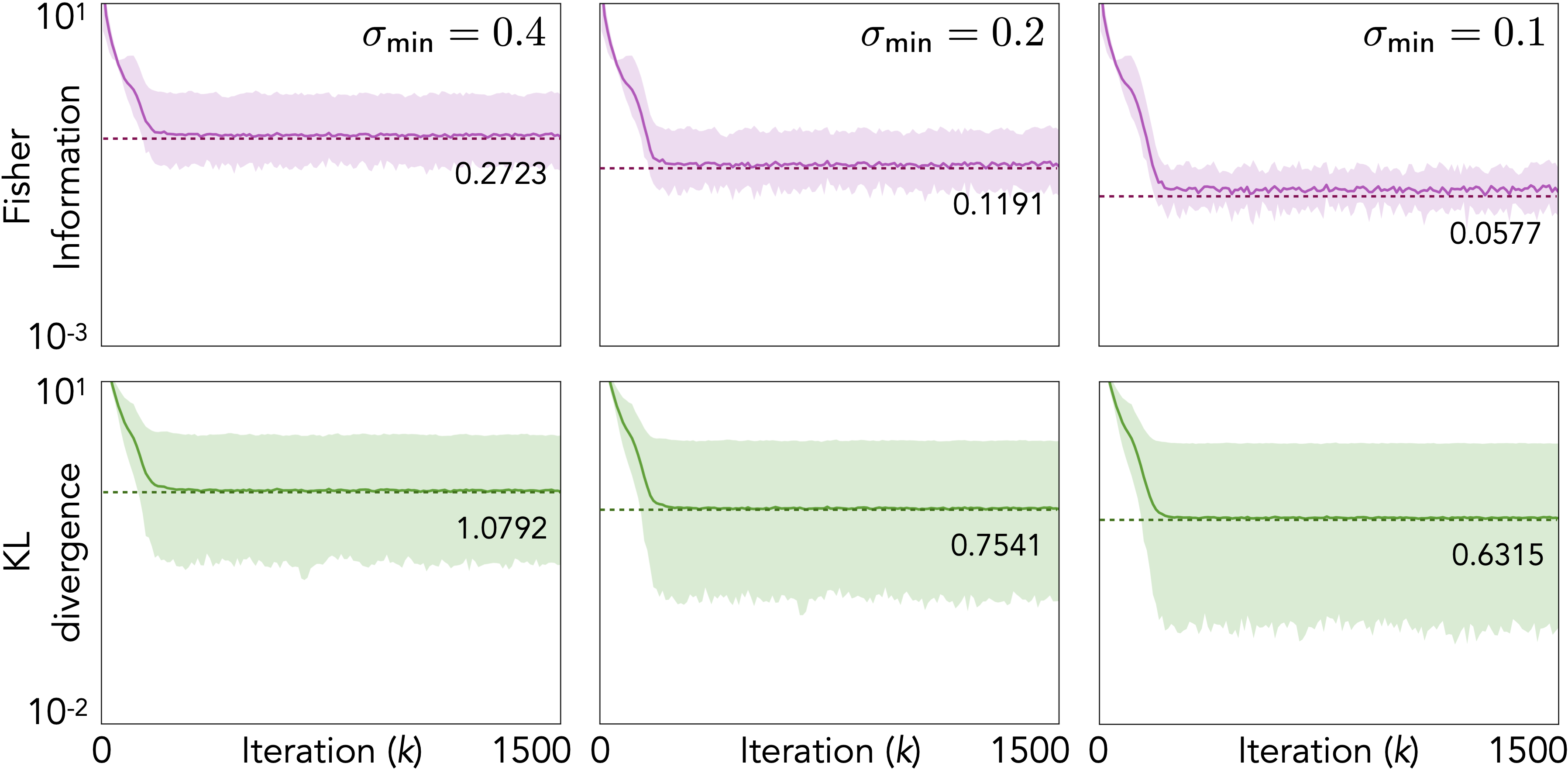}
\caption{
Illustration of the influence of the averaged smoothing strength $\sigma_\mathsf{min}$ on the convergence of APMC-RED.
The $\FI(\nu_k \,\|\, \pi)$ are plotted against the iteration number for $\sigma_\mathsf{min} \in \{0.4, 0.2, 0.1\}$.
We also include the plots of $\KL(\nu_k \,\|\, \pi)$ as reference.
The shaded areas in the plots represent the range of values attained across all test distributions.
The dotted line at the bottom shows the minimal value attained by the algorithm.
This plot illustrates that the empirical behavior of APMC-RED is consistent with Theorem~3, where the convergence accuracy improves with smaller $\sigmabar^2$.
}
\label{Fig:2d_PMCRED_Mismatch}
\end{figure*}

\begin{figure*}[t!]
\centering
\includegraphics[width=0.5\textwidth]{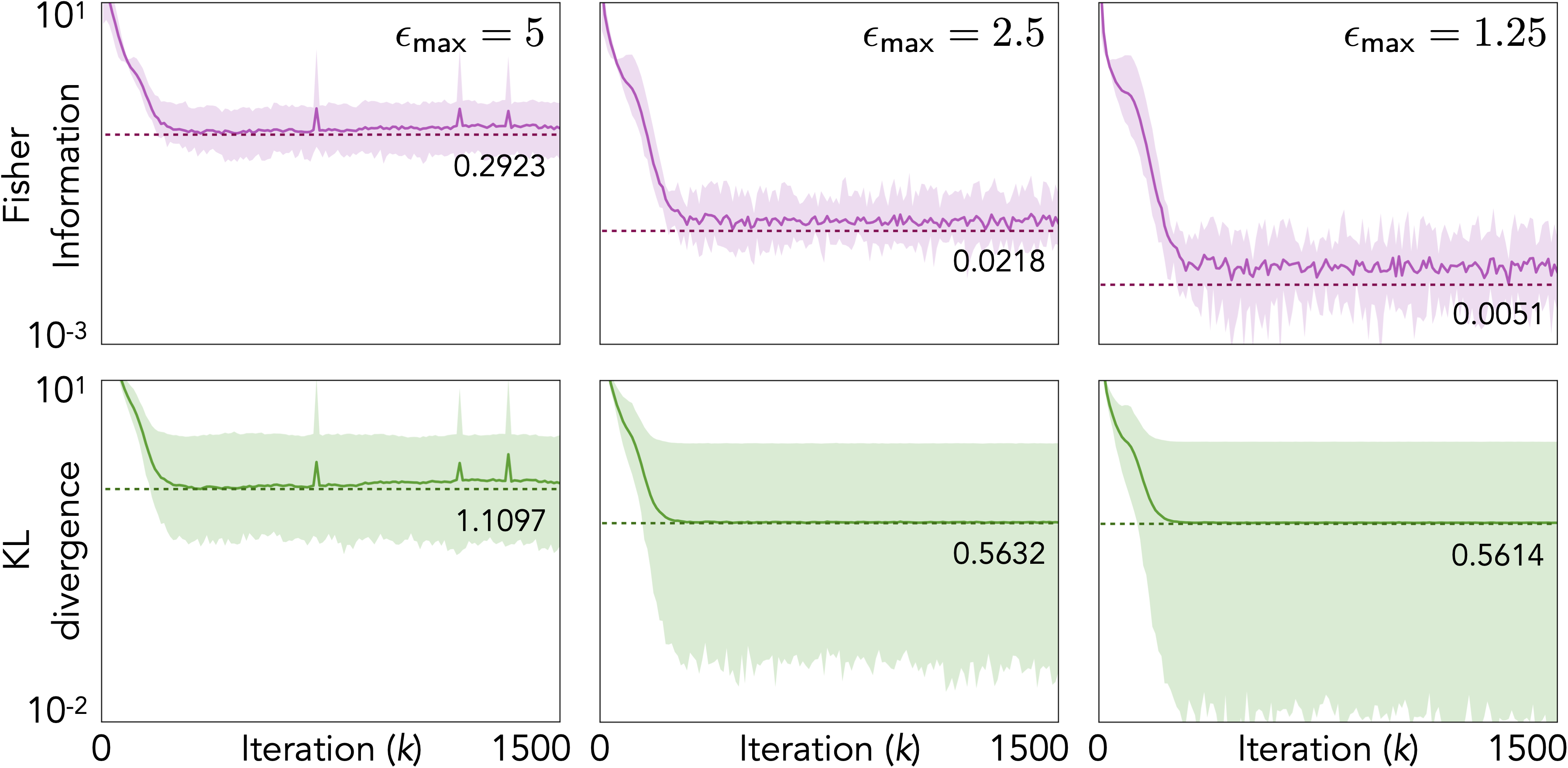}
\caption{
Illustration of the influence of the maximal approximation error $\epsilon_\mathsf{max}$ on the convergence of APMC-RED. 
The $\FI(\nu_k \,\|\, \pi)$ are plotted against the iteration number for $\epsilon_\mathsf{max} \in \{5, 2.5, 1.25\}$.
We also include the plots of $\KL(\nu_k \,\|\, \pi)$ as reference.
The shaded areas in the plots represent the range of values attained across all test distributions.
The dotted line at the bottom shows the minimal value attained by the algorithm.
This plot illustrates that the empirical behavior of APMC-RED is consistent with Theorem~3, where the convergence accuracy improves with smaller $\epsilonbar^2$.
}
\label{Fig:2d_PMCRED_Error}
\end{figure*}

\subsection{Technical details in image recovery tasks}

\paragraph{Score-based generative priors}
The original U-Net in~\cite{Dhariwal.etal2021} takes the time step $t$ as the auxiliar input, accommodating the diffusion process described by the \emph{variance preserving (VP)} SDE. We observe that a naive change from $t$ to the smoothing strength $\sigma$ leads to inferior performance. 
Instead, we replace the encoding network with the one used for the \emph{variance exploding} SDE~\cite{Song.etal2021score} which takes $\sigma$ as the input.
We train the score networks on the machine equipped with the AMD EPYC 7H12 64-Core CPU Processor and NVIDIA RTX A6000 GPU.
We apply random flipping to the training images for data augmentation.
The training time of our network is around 48 hours on a single GPU.
For the computation of per-iteration runtime, we use the machine equipped with AMD EPYC 9354 32-Core CPU Processor and NVIDIA A100 GPU.

\vspace{-0.5em}
\paragraph{Linear inverse problems}

We implement PnP and RED by following the residual formulations in~Eq. (10) and~Eq.~(11), respectively.
In the test, the algorithms are equipped with the pre-trained DnCNN denoisers~\cite{Zhang.etal2017} for CS and MRI images.
We note that the original implementation of PnP-ULA pre-trains a DnCNN to approximate the score, which suggests that our implementation with the score network may lead to better empirical performance.
We implement DPS by using the code provided in the repository associated with~\cite{Chung.etal2023diffusion}. 
We use the pre-trained score network for the face images, but re-train a separate network for the brain MRI images.
We compute the PSNR value using the following formula 
$$\mathsf{PSNR}(\xbmhat, \xbm) = 10\log_{10} \frac{\max(\xbm)^2}{\mathsf{MSE}(\xbmhat,\xbm)}\quad\text{where}\quad \mathsf{MSE}(\xbmhat,\xbm) = \frac{1}{N} \|\xbmhat-\xbm\|_2^2$$
where $\max(\xbm)$ computes the largest pixel value in the reference $\xbm$.
The hyperparameter values are summarized in Table~\ref{Tab:CSParam1}-\ref{Tab:MRIParam2}.
In the experiments, we set the annealing parameters as $\xi=0.99$ and $\sigma_0=368$. We apply a grid search to finetune the remaining parameters for the best PSNR values. 
To reduce the degree of freedom, we couple $\sigma_\mathsf{min}$ and $\alpha_0$ by making  
$ \sigma_\mathsf{min}^2 \alpha_0 =1$.
We initialize the APMC algorithm with random uniformly distributed in the hypercube $[-1, 1]^n$.

\vspace{-0.5em}
\paragraph{Black-hole interferometry} We adapt our customized score network for $64\times64$ images by reducing the number of levels in the U-Net.
We normalized the training images to $[0,1]$ and applied data augmentation by random flipping and resizing of the black hole.
The term GRMHD stands for \emph{general relativistic magnetohydrodynamic} simulation which can generate high-fidelity, high-resolution black hole images. 
The GRMHD dataset contains around $3000$ images in total.
In the experiment, we set the parameters as $\gamma=5e\shortneg6$, $\alpha_0=8e3$, $\xi=0.97$, $\sigma_\mathsf{min}=0.002$, and $\sigma_0=192$.
The weight of the total flux constraint is set to $\rho=0.5$.
We initialize the APMC algorithm with random images uniformly distributed in the hypercube $[0,1]^n$.

\section{Additional Experimental Results} 

\begin{figure*}[t!]
\centering
\includegraphics[width=0.9\textwidth]{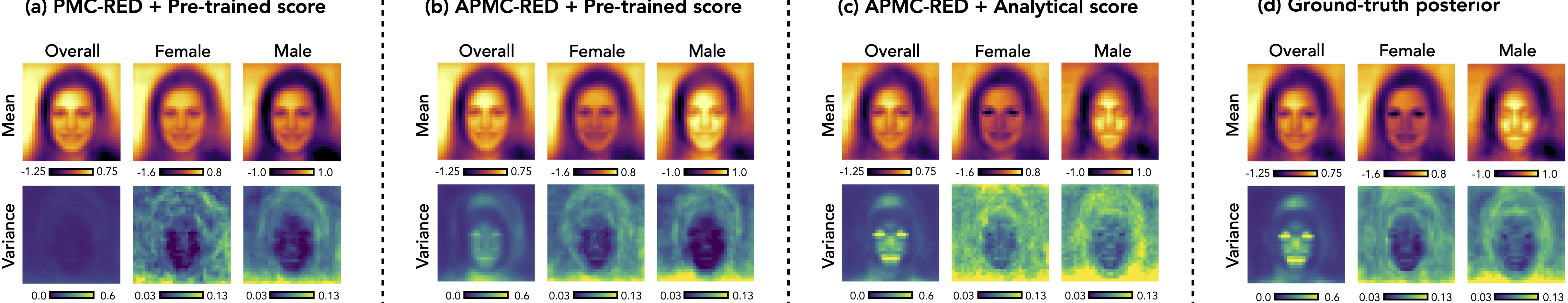}
\caption{
Comparison of sampling performance of APMC-RED against its stationary counterpart and ground-truth posterior distribution.
Each test algorithm is run to collect $1000$ samples which are classified into two modes according to their distance and angle with respect to ground-truth means.
Under the pre-trained score, APMC-RED significantly improves the sampling performance over PMC-RED by distinguishing the female and male modes.
In the ideal case of the analytical score, APMC-RED recovers a high-fidelity distribution as the ground-truth posterior.
Note that APMC-RED yields consistent results as APMC-PnP.
}
\label{Fig:Face_PMCRED}
\end{figure*}

\begin{algorithm}[t]
\small
\setstretch{1.05}
\caption{Plug-and-Play Unadjusted Langevin Algorithm (PnP-ULA)}
\begin{algorithmic}[1]
\STATE \textbf{input: } $\xbm_0 \in \R^n$, $\gamma > 0$, $\sigma>0$, and a convex and compact set $\mathsf{C\subset\R^n}$.
\FOR{$k = 0, 1, \dots, N-1$}
\STATE $\Zk \leftarrow \Ncal(0, I)$
\STATE $\Gcal(\Xk) \leftarrow \nabla g(\Xk) - \Scal_\theta \big( \Xk, \sigma \big)$
\STATE $\Xkpo \leftarrow \Pi_\mathsf{c} \Big(\Xk - \gamma \Gcal(\Xk) + \sqrt{2\gamma} \Zk \Big)$ \hfill $\triangleright$ Projection to the convex and compact set $\mathsf{C}$
\ENDFOR\label{euclidendwhile}
\end{algorithmic}
\label{Alg:PPnPULA}
\end{algorithm}

\subsection{Convergence plots of APMC algorithms}

Fig.~\ref{Fig:2d_PMCPnP_step-size}-\ref{Fig:2d_PMCPnP_Error} plot the convergence of $\FI(\nu_k \,\|\, \pi)$ and $\KL(\nu_k \,\|\, \pi)$ obtained by \anneal-PnP with $\gamma \in \{1.6, 0.8, 0.4\}$, $\sigma_\mathsf{min} \in \{0.4, 0.2, 0.1\}$, and $\epsilon_\mathsf{max} \in \{5.0, 2.5, 1.25\}$, respectively.
The shaded areas in the plots represent the range of values attained across all test distributions. 
The plots clearly illustrate the improvement in $\FI(\nu_k \,\|\, \pi)$ by reducing the value of these parameters as illustrated in Theorem~4. 
Although FI can not be interpreted as a direct proxy of KL~\cite{Balasubramanian.etal2022}, remarkably a similar trend is also observed for $\KL(\nu_k \,\|\, \pi)$.
Additionally, we note that our theoretical analysis does not predict the monotonic reduction of FI, which also seems to be consistent with the empirical behavior of \anneal-PnP.
Fig.~\ref{Fig:2d_PMCRED_step-size}-\ref{Fig:2d_PMCRED_Error} visualize the convergence evolution for \anneal-RED under the same parameter settings.
These figures highlight the same convergence trends for \anneal-RED, where $\gamma$, $\sigmabar$, and $\epsilonbar$ controls the accuracy.

\afterpage{
\begin{table}[t!]
\centering
\tiny
\caption{
Averaged PSNR and MSE values obtained by PnP-ULA under different hyperparameter setups for the CS ($m/n=0.1$ \& $m/n=0.3$) and MRI ($\text{Accel.}=8\times$ \& $\text{Accel.}=4\times$) tasks.
The results of the proposed APMC-RED is additionally included as a reference.
The best numerical values of PnP-ULA are highlighted in \textbf{bold}, and the second best values are highlighted in \underline{underline}.
}
\begin{tabular*}{480pt}{L{75pt} C{35pt}C{35pt} C{35pt}C{35pt} C{0pt} C{35pt}C{35pt} C{35pt}C{35pt}} \toprule
\multirow{2}{*}{\textbf{Method}} & \multicolumn{2}{c}{$m/n=0.1$} & \multicolumn{2}{c}{$m/n=0.3$} & & \multicolumn{2}{c}{$\text{Accel.} = 8\times$} & \multicolumn{2}{c}{$\text{Accel.} = 4\times$}\\
\cmidrule{2-5} \cmidrule{7-10}
& PSNR $\uparrow$ & MSE $\downarrow$ & PSNR $\uparrow$ & MSE $\downarrow$ & & PSNR $\uparrow$ & MSE $\downarrow$ & PSNR $\uparrow$ & MSE $\downarrow$\\
\cmidrule{1-10}
PnP-ULA-NP (PMC-RED) & \first{8.75} & \first{\expnumber{1.43}{\shortneg1}} & $18.18$ & $\expnumber{1.67}{\shortneg2}$ & & $27.51$ & $\expnumber{2.35}{\shortneg3}$ & \first{35.98} & \first{\expnumber{2.84}{\shortneg4}} \\ 
PnP-ULA-Cus & \second{$8.73$} & \second{$\expnumber{1.44}{\shortneg1}$} & \second{$31.00$} & \second{$\expnumber{1.98}{\shortneg3}$} & & \second{$30.97$} & \second{$\expnumber{9.05}{\shortneg4}$} & $34.24$ & $\expnumber{4.82}{\shortneg4}$ \\ 
PnP-ULA-Ori & $8.56$ & $\expnumber{1.51}{\shortneg1}$ & \first{33.58} & \first{\expnumber{4.99}{\shortneg4}} & & \first{32.03} & \first{\expnumber{6.87}{\shortneg4}} & \second{$34.57$} & \second{$\expnumber{3.90}{\shortneg4}$} \\
\noalign{\vskip 0.5ex}
\hdashline\noalign{\vskip 0.5ex}
APMC-RED & $28.37$ & $\expnumber{1.67}{\shortneg3}$ & $34.52$ & $\expnumber{4.06}{\shortneg4}$ & & $32.67$ & $\expnumber{6.43}{\shortneg4}$ & $36.02$ & $\expnumber{2.82}{\shortneg4}$ \\ 
\bottomrule
\end{tabular*}
\label{Tab:ULARecon}
\end{table}
\begin{figure}[t!]
\centering
\includegraphics[width=0.8\textwidth]{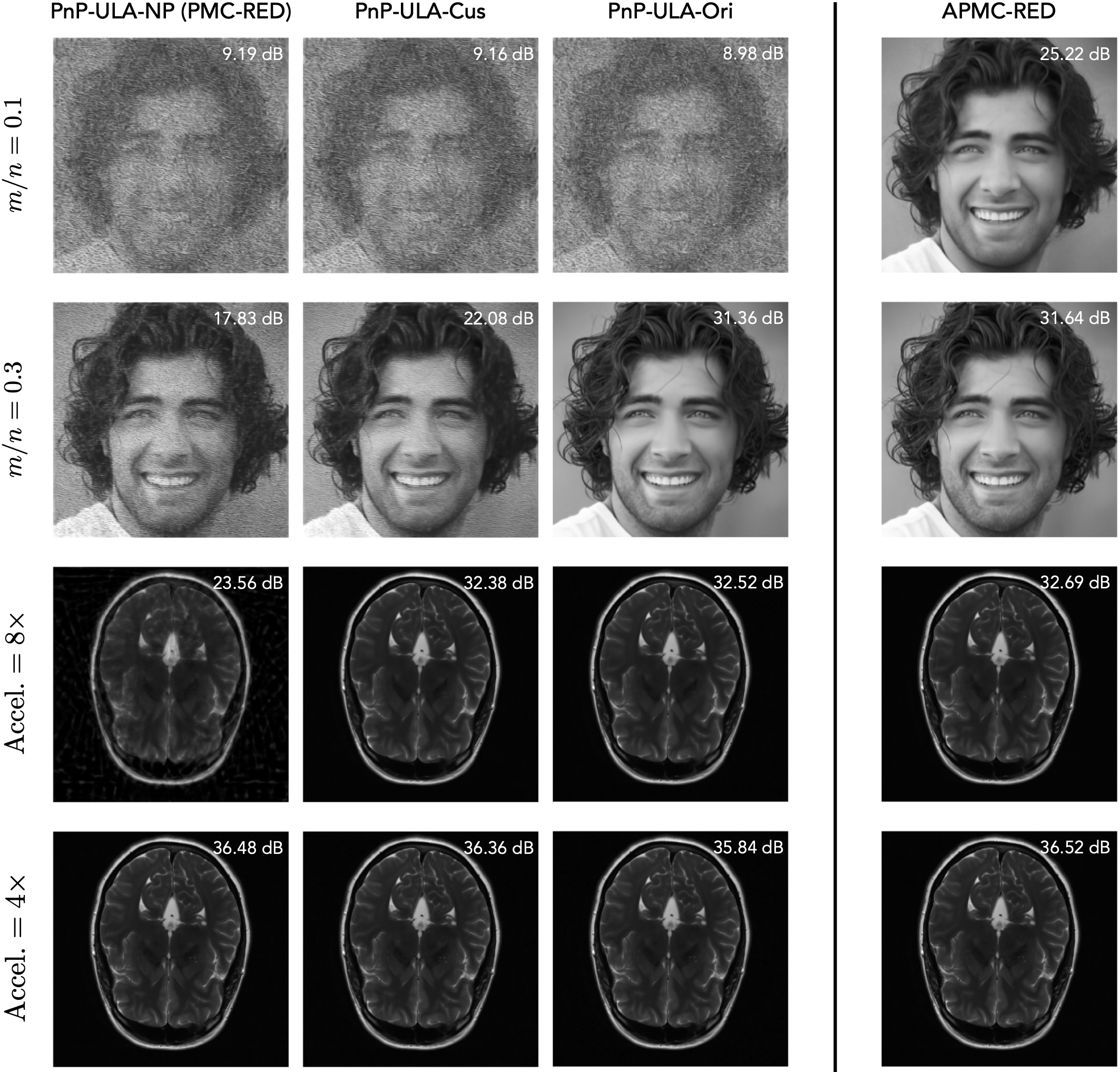}
\caption{
Visual comparison of the reconstructions obtained by PnP-ULA under different hyperparameter setups.
The first two rows correspond to the CS ($m/n=0.1$ \& $m/n=0.3$) tasks, and the last two rows to the MRI ($\text{Accel.}=8\times$ \& $\text{Accel.}=4\times$) tasks.
The final reconstructed images are obtained by averaging $50$ image samples.}
\label{Fig:pnpula_recon}
\end{figure}
}

\afterpage{
\begin{table}[t!]
\centering
\tiny
\caption{
Averaged NLL values obtained by PnP-ULA under different hyperparameter setups for the CS ($m/n=0.1$ \& $m/n=0.3$) and MRI ($\text{Accel.}=8\times$ \& $\text{Accel.}=4\times$) tasks. The values of absolute error ($|\xbmbar-\xbm|$) and standard deviation (SD) that jointly determine NLL are included.
The results of the proposed APMC-RED is additionally included as a reference.
The best numerical values of PnP-ULA are highlighted in \textbf{bold}, and the second best values are highlighted in \underline{underline}.}
\begin{tabular*}{500pt}{L{40pt} C{20pt}C{30pt}C{20pt} C{20pt}C{30pt}C{20pt} C{0pt} C{20pt}C{30pt}C{20pt} C{20pt}C{30pt}C{20pt}} \toprule
\multirow{2}{*}{\textbf{Method}} & \multicolumn{3}{c}{$m/n=0.1$} & \multicolumn{3}{c}{$m/n=0.3$} & & \multicolumn{3}{c}{$\text{Accel.} = 8\times$} & \multicolumn{3}{c}{$\text{Accel.} = 4\times$}\\
\cmidrule{2-7} \cmidrule{9-14}
& \makecell{NLL \\ $\downarrow$} & \makecell{$|\xbmbar-\xbm|$ \\ $\downarrow$} & \makecell{SD \\ $\downarrow$} & \makecell{NLL \\ $\downarrow$} & \makecell{$|\xbmbar-\xbm|$ \\ $\downarrow$} & \makecell{SD \\ $\downarrow$} & & \makecell{NLL \\ $\downarrow$} & \makecell{$|\xbmbar-\xbm|$ \\ $\downarrow$} & \makecell{SD \\ $\downarrow$} & \makecell{NLL \\ $\downarrow$} & \makecell{$|\xbmbar-\xbm|$ \\ $\downarrow$} & \makecell{SD \\ $\downarrow$}\\
\cmidrule{1-14} 
PnP-ULA-NP (PMC-RED)  & \first{1.366} & \first{0.3122} & \first{0.1873} & $\shortneg0.622$ & $0.1006$ & $0.1083$ & & $\shortneg2.056$ & $0.0290$ & $0.0407$ & \first{\shortneg2.883} & \first{0.0116} & \first{0.0166}\\
PnP-ULA-Cus & \second{$1.396$} & \second{$0.3131$} & \second{$0.1873$} & \second{$\shortneg2.176$} & \second{$0.0268$} & \second{$0.0419$} & & \second{$\shortneg2.270$} & \second{$0.0183$} & \second{$0.0545$} & $\shortneg2.529$ & $0.0138$ & $0.0396$\\
PnP-ULA-Ori & $1.414$ & $0.3205$ & $0.1886$ & \first{\shortneg2.474} & \first{0.0161} & \first{0.0249} & & \first{\shortneg2.345} & \first{0.0166} & \first{0.0405} & \second{$\shortneg2.550$} & \second{$0.0135$} & \second{$0.0352$} \\
\noalign{\vskip 0.5ex}
\hdashline\noalign{\vskip 0.5ex}
APMC-RED & $\shortneg2.005$ & $0.0278$ & $0.0348$ & $\shortneg2.667$ & $0.0140$ & $0.0179$ & & $\shortneg2.689$ & $0.0152$ & $0.0207$ & $\shortneg2.884$ & $0.0115$ & $0.0165$ \\
\bottomrule
\end{tabular*}
\label{Tab:ULAUQ}
\end{table}
\begin{figure}[t!]
\centering
\includegraphics[width=0.9\textwidth]{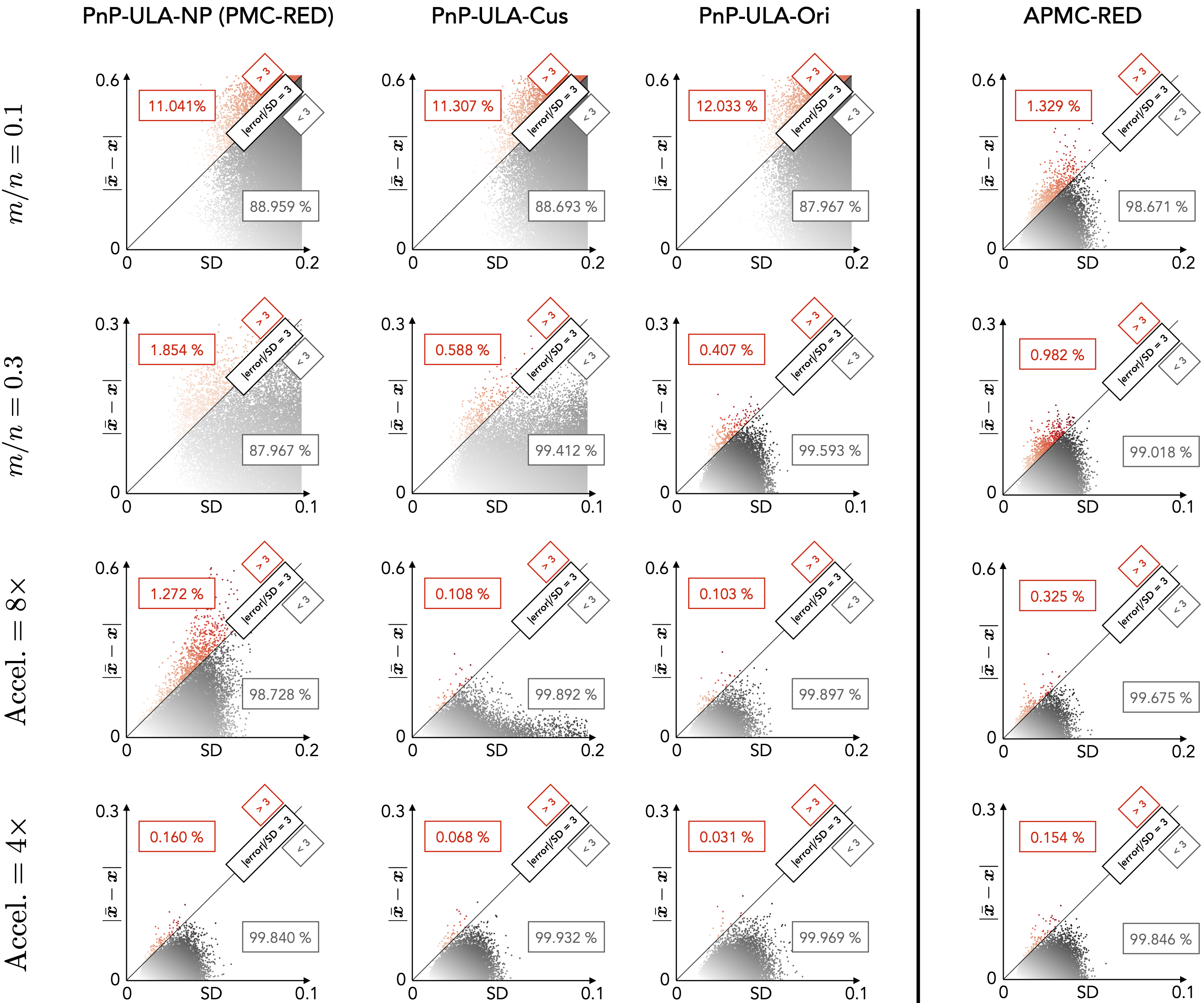}
\caption{
Visualization of the 3-SD credible intervals associated with the CS and MRI reconstructions shown in Fig.~\ref{Fig:pnpula_recon}. The absolute difference ($|\xbmbar - \xbm|$) is plotted against the standard deviation (SD) for each pixel.
The first two rows correspond to the CS ($m/n=0.1$ \& $m/n=0.3$) tasks, and the last two rows to the MRI ($\text{Accel.}=8\times$ \& $\text{Accel.}=4\times$) tasks. The outlying pixels are highlighted in red.
}
\label{Fig:pnpula_uq}
\end{figure}
}

\subsection{Image posterior sampling results of APMC-RED}
Fig.~\ref{Fig:Face_PMCRED} compares the sampling performance of \anneal-RED against its stationary counterpart and ground-truth posterior.
All algorithms are run until convergence to sample $1000$ images.
The figure demonstrates that weighted annealing also help alleviate mode collapse for PMC-RED.
Under the analytic score, APMC-RED obtains almost identical distribution as the ground-truth posterior, showing similar results as APMC-PnP.
Note that this experiment further corroborate the consistency between APMC-PnP and APMC-RED as they are governed by the same gradient-flow ODE.

\subsection{Additional evaluation of PnP-ULA}
In this section, we evaluate the performance of PnP-ULA in the context of linear inverse problems. 
We implemented PnP-ULA by closely following Alg.~\ref{Alg:PPnPULA}, which corresponds to the projected PnP-ULA algorithm in~\cite{Laumont.etal2022}. 
We chose projected PnP-ULA due to its algorithmic simplicity. 
We did not introduce an additional regularization parameter since $\Scal_\theta(\xbm,\sigma)$ directly outputs an approximation of $\nabla\log\psigma(\xbm)$.

In total, we consider three PnP-ULA variants using different hyperparameter setups.
The first two variants are based on the choice of $\mathsf{C}=[\shortneg1,2]^n$, which is aligned with the setup used in~\cite{Laumont.etal2022}. 
We further consider two different $\sigma$ setup: a) $\sigma=5/255$, which is the value originally used in~\cite{Laumont.etal2022}, and b) $\sigma=\sigma_\mathsf{pmcred}$, where $\sigma_\mathsf{pmcred}$ denotes the value used in PMC-RED.
For simplicity, we use \emph{PnP-ULA-Ori} and \emph{PnP-ULA-Cus} to denote the PnP-ULA using $\sigma=5/255$ and $\sigma=\sigma_\mathsf{pmcred}$, respectively.
Note that $\sigma_\mathsf{pmcred}$ is always smaller than $5/255$, suggesting a potentially more accurate approximation of the true score $\nabla\log p(\xbm)$. 
As PMC-RED can be viewed as PnP-ULA equipped with a non-activated projection, we include PMC-RED as the third variant of PnP-ULA, denoted as \emph{PnP-ULA-NP} in the experiment. We note that the iterates of PMC-RED never exceed $[\shortneg10^4, 10^4]^n$, equivalent to using $\mathsf{C} = [\shortneg10^4, 10^4]^n$.
For each algorithm, we selected the largest possible step-size from the sets $\{8e\shortneg5, 4e\shortneg5, 2e\shortneg5, 1e\shortneg5\}$ and $\{1e\shortneg4, 5e\shortneg5, 1e\shortneg5, 5e\shortneg6\}$ for CS and MRI recovery tasks, respectively. 
A summary of hyperparameter values is presented in Table~\ref{Tab:HyperPnPULA}.

The evaluation is performed on the same $30$ test images used in the initial submission.
We run all PnP-ULA algorithms for a maximum number of $10,000$ iterations to infer a batch of $50$ samples initialized randomly.
Table~\ref{Tab:ULARecon} and~\ref{Tab:ULAUQ} respectively summarize the averaged PSNR and NLL values obtained by each PnP-ULA variant. We include the results of AMPC-RED as a reference.
We observe that $\mathsf{C}=[-1,2]^n$ enables the use of larger step-sizes for the $30\%$ CS, $4\times$ MRI, and $8\times$ MRI recovery tasks, resulting in faster convergence speeds.
This is reflected in the better PNSR and NLL values obtained by PnP-ULA-Ori and PnP-ULA-Cus for $30\%$ CS and $8\times$ MRI recovery tasks.
On the other hand, PnP-ULA-NP converges within $10,000$ iterations for the $4\times$ MRI task, achieving better numerical values than the other two PnP-ULA variants.
This fact implies that PnP-ULA encounters a trade-off: using a small projection range enables a large step-size at the expense of increased estimation bias. 
Nevertheless, we note that $\mathsf{C}=[-1,2]^n$ does not enable a larger step-size for the $10\%$ CS task. Hence, PnP-ULA-Ori and PnP-ULA-Cus yields unsatisfactory sampling performance, similar to PnP-ULA-NP, due to slow convergence.
Fig.~\ref{Fig:pnpula_recon} and~\ref{Fig:pnpula_uq} provides a visual comparison on the reconstruction quality and uncertainty quantification performance for all PnP-ULA variants.

\subsection{Experiment on convergence speed}

We conduct a convergence comparison between APMC-RED, PnP-ULA-NP (PMC-RED), and PnP-ULA-Ori in the context of linear inverse problems.
We restrict our comparison to APMC-RED for brevity, since APMC-PnP demonstrates almost the same convergence behavior.
We note that PnP-ULA-Ori allows for larger step-sizes for the $30\%$ CS, $4\times$ MRI, and $8\times$ MRI tasks, while using the same step-size as PnP-ULA-NP for the $10\%$ CS task.
A summary of hyperparameters used in PnP-ULA-NP and PnP-ULA-Ori is provided in Table~\ref{Tab:HyperPnPULA}.

Fig.~\ref{Fig:CS_Convergence} plots the PSNR curves for the considered algorithms in the context of CS tasks.
Each curve is computed using the sample mean of $10$ image samples and averaging over two test images.
In the scenario of $30\%$ CS recovery, PnP-ULA-Ori converges within $10,000$ iteration, yielding a faster convergence speed than PnP-ULA-NP. Nevertheless, APMC-RED still outperforms PnP-ULA-Ori in terms of speed.
In the scenario of $10\%$ CS recovery, both PnP-ULA-NP and PnP-ULA-Ori demonstrate slow convergence. 
Note that we ran the two algorithms for $150,000$ iterations, but they still fail to reach a similar PSNR level as APMC-RED. 
In contrast, APMC-RED generally converges around $4,000$ iterations due to its utilization of weighted annealing.
Fig.~\ref{Fig:MRI_Convergence} plots the PSNR curves in the context of MRI tasks.
PnP-ULA-Ori demonstrates much faster convergence than PnP-ULA-NP in both acceleration setting, achieving convergence speeds on par with APMC-PnP.
Nevertheless, PnP-ULA-Ori yields lower PSNR values than APMC-RED and PnP-ULA-NP after the convergence.
Table~\ref{Tab:Runtime} summarizes the per-iteration runtime of all the sampling algorithms for generating 10 and 50 image samples.

\begin{table}[!t]
	\centering
	\tiny
	\setstretch{1.2}
	\caption{A summary of different hyperparameter setups used for evaluating the performance of PnP-ULA.}
        \label{Tab:HyperPnPULA}
	\begin{tabular*}{390pt}{L{50pt}C{5pt}C{100pt}C{54pt}C{54pt}C{54pt}} 			
		\toprule
		& \multicolumn{2}{c}{\textbf{Hyperparameters}} & \textbf{PnP-ULA-NP (PMC-RED)} & \textbf{PnP-ULA-Cus} & \textbf{PnP-ULA-Ori} \\	 
		\cmidrule(lr){2-3} \cmidrule(lr){4-6}
		\multirow{3}{*}{\makecell{CS \\ ($m/n=0.1$)}} & $\gamma$  & step-size & $1e$-$5$ & $1e$-$5$ & $1e$-$5$ \\
		& $\sigma$  & second input of $\Scal_\theta(\xbm, \sigma)$ & $0.015$ & $0.015$ & $5/255$ \\
            & $\mathsf{C}$  & projection range of $\Pi_\mathsf{C}$ & $[\shortneg10^4, 10^4]^n$ & $[\shortneg1, 2]^n$ & $[\shortneg1, 2]^n$ \\
		\noalign{\vskip 0.5ex}
		\hdashline\noalign{\vskip 0.5ex}
		\multirow{3}{*}{\makecell{CS \\ ($m/n=0.3$)}} & $\gamma$  & step-size & $1e$-$5$ & $2e$-$5$ & $2e$-$5$ \\
		& $\sigma$  & second input of $\Scal_\theta(\xbm, \sigma)$ & $0.009$ & $0.009$ & $5/255$ \\
            & $\mathsf{C}$  & projection range of $\Pi_\mathsf{C}$ & $[\shortneg10^4, 10^4]^n$ & $[\shortneg1, 2]^n$ & $[\shortneg1, 2]^n$ \\
  		\noalign{\vskip 0.5ex}		
            \hdashline\noalign{\vskip 0.5ex}
		  \multirow{3}{*}{\makecell{MRI \\ (Accel.$=8\times$)}} & $\gamma$  & step-size & $5e$-$6$ & $5e$-$5$ & $5e$-$5$ \\
		& $\sigma$  & second input of $\Scal_\theta(\xbm, \sigma)$ & $0.01$ & $0.01$ & $5/255$ \\
            & $\mathsf{C}$  & projection range of $\Pi_\mathsf{C}$ & $[\shortneg10^4, 10^4]^n$ & $[\shortneg1, 2]^n$ & $[\shortneg1, 2]^n$ \\
  		\noalign{\vskip 0.5ex}		
            \hdashline\noalign{\vskip 0.5ex}
		\multirow{3}{*}{\makecell{MRI \\ (Accel.$=4\times$)}} & $\gamma$  & step-size & $5e$-$6$ & $5e$-$5$ & $5e$-$5$ \\
		& $\sigma$  & second input of $\Scal_\theta(\xbm, \sigma)$ & $0.01$ & $0.01$ & $5/255$ \\
            & $\mathsf{C}$  & projection range of $\Pi_\mathsf{C}$ & $[\shortneg10^4, 10^4]^n$ & $[\shortneg1, 2]^n$ & $[\shortneg1, 2]^n$ \\
        \bottomrule
	\end{tabular*}
	\label{Tab:ULAParam} 
\end{table}

\begin{figure*}[t!]
\centering
\includegraphics[width=0.73\textwidth]{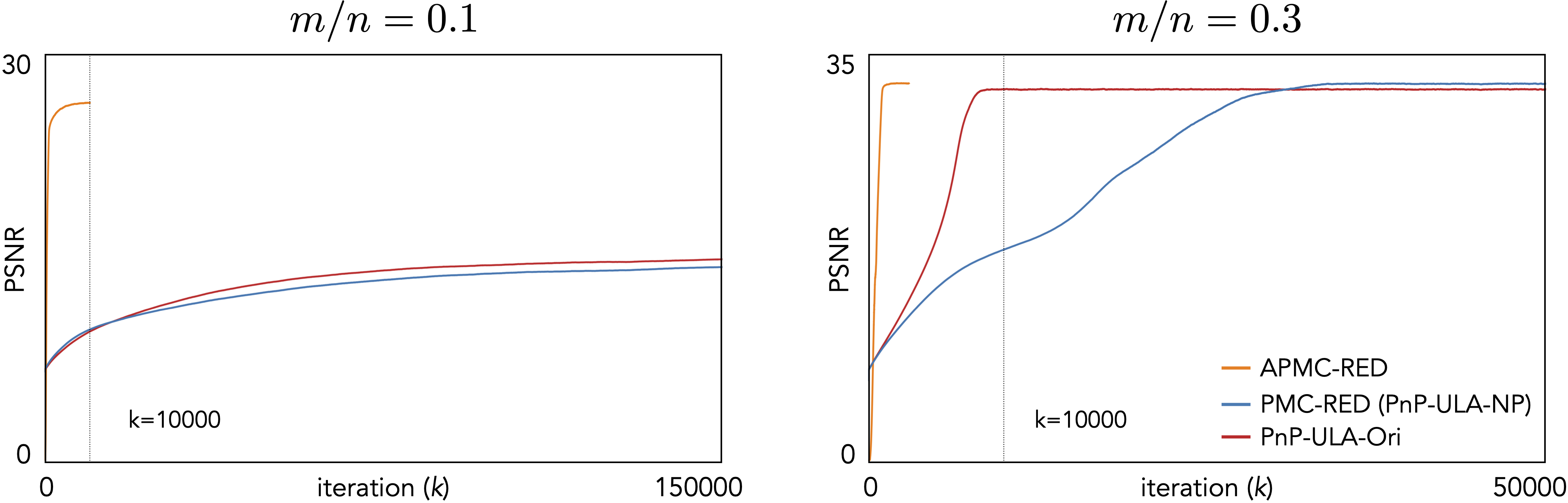}
\caption{Comparison of the convergence speed between APMC-RED, PnP-ULA-NP (PMC-RED), and PnP-ULA-Ori in the context of CS tasks.
Each sampling algorithm is run to generate $10$ samples, the mean of which is used to compute the PSNR values.
Note the significant speed acceleration enabled by weighted annealing.}
\label{Fig:CS_Convergence}
\end{figure*}

\begin{figure*}[t!]
\centering
\includegraphics[width=0.73\textwidth]{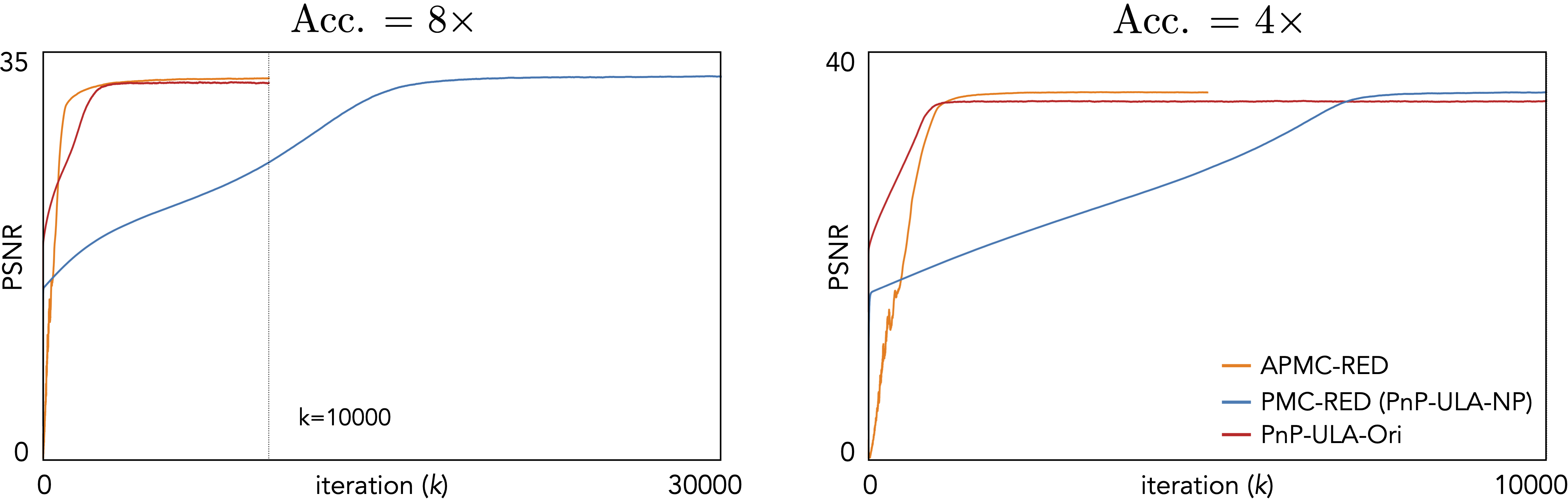}
\caption{
Comparison of the convergence speed between APMC-RED, PnP-ULA-NP (PMC-RED), and PnP-ULA-Ori in the context of accelerated MRI. 
Each sampling algorithm is run to generate $10$ samples, the mean of which is used to compute the PSNR values.
The final PSNR curves are averaged over two test images.
Note that PnP-ULA-Ori converges to slightly worse PSNR levels than APMC-RED.
}
\label{Fig:MRI_Convergence}
\end{figure*}

\begin{table*}[t]
\centering
\scriptsize
\caption{
Per-iteration runtime (\emph{in second}) obtained by the annealed PMC (APMC) and baseline sampling algorithms for the CS and MRI recovery tasks. The table summarizes the time taken per iteration by each algorithm to generate 10 and 50 samples. The per-iteration runtime of the deterministic PnP algorithm equipped with the DnCNN denoiser is included as a reference, where only a single image sample is computed during inference.
}
\begin{tabular*}{470pt}{L{70pt} C{35pt}C{35pt} C{35pt}C{35pt} C{0pt} C{35pt}C{35pt} C{35pt}C{35pt}} \toprule
\multirow{2}{*}{\textbf{Method}} & \multicolumn{2}{c}{$m/n=0.1$} & \multicolumn{2}{c}{$m/n=0.3$} & & \multicolumn{2}{c}{$\text{Accel.} = 8\times$} & \multicolumn{2}{c}{$\text{Accel.} = 4\times$}\\
\cmidrule{2-5} \cmidrule{7-10}
& 10 samples & 50 samples & 10 samples & 50 samples & & 10 samples & 50 samples & 10 samples & 50 samples\\
\cmidrule{1-10}
PnP-ULA & $0.1159$ & $0.5674$ & $0.1556$ & $0.7664$ & & $0.0969$ & $0.4677$ & $0.0968$ & $0.4691$ \\ 
DPS & $0.1178$ & $0.5657$ & $0.1561$ & $0.7615$ & & $0.1006$ & $0.4743$ & $0.1008$ & $0.4720$ \\ 
\noalign{\vskip 0.5ex}
\hdashline\noalign{\vskip 0.5ex}
APMC-RED (ours) & $0.1163$ & $0.5675$ & $0.1543$ & $0.7668$ & & $0.0968$ & $0.4689$ & $0.0967$ & $0.4686$\\ 
APMC-PnP (ours) & $0.1164$ & $0.5677$ & $0.1556$ & $0.7665$ & & $0.0967$ & $0.4693$ & $0.0967$ & $0.4686$\\ 
\noalign{\vskip 0.5ex}
\hline\hline
\noalign{\vskip 1ex}
PnP & \multicolumn{2}{c}{$0.00414$} & \multicolumn{2}{c}{$0.00874$} & & \multicolumn{2}{c}{$0.00257$} & \multicolumn{2}{c}{$0.00264$} \\
\bottomrule
\end{tabular*}
\label{Tab:Runtime}
\end{table*}

\begin{table}[t!]
\center
\scriptsize
\caption{Averaged PSNR values obtained by \anneal~algorithms and the state-of-the-art VarNet for MRI recovery. The best PSNR values are highlighted in bold.}
\begin{tabular*}{190pt}{L{60pt} C{40pt} C{0pt} C{40pt}} \toprule
\multirow{2}{*}{\textbf{Method}}& $8\times$ & & $4\times$\\
\cmidrule{2-2}\cmidrule{4-4}
& PSNR (dB) $\uparrow$ & & PSNR (dB) $\uparrow$ \\
\cmidrule{1-4}
VarNet-$4\times$ & $28.51$ & & $36.00$ \\
VarNet-$8\times$ & $\mathbf{32.82}$ & & $35.01$ \\
\noalign{\vskip 0.5ex}
\hdashline\noalign{\vskip 0.5ex}
APMC-RED (ours) & $32.67$ & & $36.02$ \\ 
APMC-PnP (ours) & $32.70$ & & $\mathbf{36.03}$ \\
\bottomrule
\end{tabular*}
\label{Tab:MRI_VarNet}
\end{table}

\subsection{Comparison with VarNet}
We compare the APMC algorithms with the state-of-the-art end-to-end VarNet~\cite{Sriram.etal2020} in terms of the reconstruction quality. We train two VarNets respectively corresponding to $4\times$ and $8\times$ acceleration by using the same FastMRI brain dataset. 
The performance of VarNet is optimized when the test setup matches the training data.
Table~\ref{Tab:MRI_VarNet} summarizes the numerical results.
As iterative methods using model-agnostic priors, both APMC algorithms demonstrate comparable performance with the optimized VarNet, which relies on a large number of matched measurement-image pairs for training; recall that such a dedicated dataset is not required for training the score network used in our proposed approach.
Furthermore, APMC algorithms are robust to different imaging setups while VarNet shows a big drop in reconstruction quality when data mismatch occurs due to additional subsampling.
A visual comparison is presented in Fig.~\ref{Fig:MRI_VarNet}.

\begin{figure*}[t!]
\centering
\includegraphics[width=0.93\textwidth]{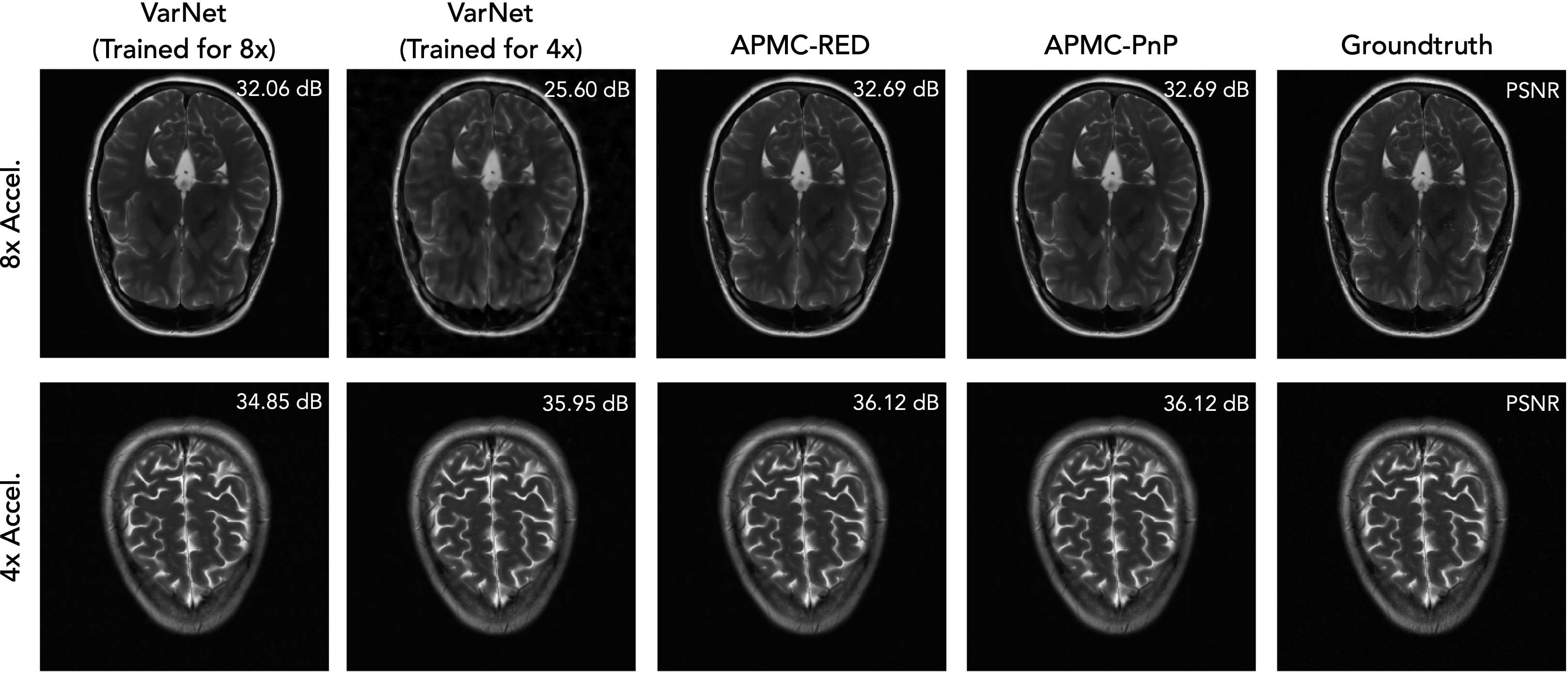}
\caption{
Visual comparison of the reconstructions obtained by the APMC and VarNet. The first and second row visualizes the images reconstructed by each algorithm for $8\times$ and $4\times$ MRI recovery, respectively.
Here, \emph{VarNet (Trained for $8\times$)} denotes the VarNet model trained for $8\times$ MRI recovery, and \emph{VarNet (Trained for $4\times$)} denotes the model trained for $4\times$ MRI recovery.
}
\label{Fig:MRI_VarNet}
\end{figure*}

\subsection{Other supplementary results}

Fig.~\ref{Fig:CS_MMSE} and~\ref{Fig:CS_UQ} visualize the reconstruction and uncertainty quantification (UQ) results for $30\%$ CS recovery.
Visually, APMC significantly outperforms PnP, RED, and DPS. Note how APMC algorithm restore the fine details of hair and hat.
While PnP-ULA-Ori yields visual quality on par with APMC algorithms, but the latter still numerically outperform the former by about $0.4$ dB.
The more accurate estimate of $\xbmbar$ also leads to a better UQ performance by avoiding a large SD.

Fig.~\ref{Fig:MRI_MMSE} and~\ref{Fig:MRI_UQ} visualize the same results for $4\times$ MRI recovery.
First, APMC algorithms significantly outperform the PnP and RED algorithms, achieving a PSNR margin of $\sim2.1$ dB.
This highlights the potential of APMC in solving general MRI reconstruction.
Comparing with other sampling algorithms, APMC algorithms obtain better visual performance in restoring fine details (see the zoom-in regions). This is also corroborated by their higher PSNR values.
In the comparison of UQ, APMC algorithms generate samples with smaller SD while still maintaining a 99\% 3-SD coverage.

We conclude this section by demonstrating the capability of APMC algorithms to perform unconditional image generation in Fig.~\ref{Fig:Uncond_Face}-Fig.~\ref{Fig:Uncond_BH}.
To do this, we simply remove the likelihood term and make APMC algorithms sample from the prior distribution.
Note that APMC-PnP and APMC-RED will share the same formulation (without $g$) and become a standard of SGM based on Langevin diffusion.
We use the hyperparameter setup in~\cite{Song.etal2020b} to configure the algorithm.
Note the wide diversity and high fidelity of the images sampled by the APMC algorithm.

\newpage

\begin{table}[!t]
	\centering
	\tiny
	\setstretch{1}
	\caption{List of algorithmic hyperparameters used in the CS recovery task ($m/n = 0.1$)}
	\begin{tabular*}{470pt}{L{45pt}C{5pt}C{120pt}C{54pt}C{54pt}C{54pt}C{54pt}} 			
		\toprule
		& \multicolumn{2}{c}{\textbf{Hyperparameters}} & \textbf{PMC-PnP/RED} & \textbf{APMC-PnP/RED} & \textbf{PnP-ULA-Ori} & \textbf{DPS} \\	 
		\cmidrule(lr){2-3} \cmidrule(lr){4-7}
		\multirow{4}{*}{\makecell{Common \\ parameters}} & $\xbm^{0}$ & initial point of reconstructions & \multicolumn{4}{c}{i.i.d uniformly drawn from $[\shortneg1,1]^n$} \\ 
		& $\gamma$  & step-size & $1e$-$5$ & $1e$-$5$ & $1e$-$5$ & - \\
		& $\alpha$  & coefficient of the prior score & $1$ & - & $1$ & - \\
		& $\sigma$  & strength of $\Scal_\theta(\xbm, \sigma)$ & $0.015$ & - & $5/255$ & - \\
		\noalign{\vskip 0.5ex}
		\hdashline\noalign{\vskip 0.5ex}
		\multirow{4}{*}{\makecell{Annealing \\ parameters}} & $\xi$ & exponential decay rate & - & $0.99$ & - & - \\
		& $\alpha_0$ & parameter for generating $\{\alphak\}$  & - & $4444$ & - & - \\
		& $\sigma_0$ & initial value of $\{\sigmak\}$ & - & $348$ & - & - \\
		& $\sigma_\mathsf{min}$ & minimal value of $\{\sigmak\}$ & - & $0.015$ & - & - \\
		\noalign{\vskip 0.5ex}
		\hdashline\noalign{\vskip 0.5ex}
		\multirow{2}{*}{DPS parameter} & \multirow{2}{*}{$q$} & the original $\gamma$ parameter in~\cite{Chung.etal2023diffusion} for adjusting the contribution of the likelihood & \multirow{2}{*}{-} & \multirow{2}{*}{-} & \multirow{2}{*}{-} & \multirow{2}{*}{$0.7$} \\
		\noalign{\vskip 0.5ex}
		\hdashline\noalign{\vskip 0.5ex}
		PnP-ULA parameter & \multirow{2}{*}{$\mathsf{C}$} &  \multirow{2}{*}{the range of the projection $\Pi_\mathsf{C}$} & \multirow{2}{*}{-} & \multirow{2}{*}{-} & \multirow{2}{*}{$[\shortneg1,2]^n$} & \multirow{2}{*}{-} \\		
		\bottomrule
	\end{tabular*}
	\label{Tab:CSParam1} 
\end{table}

\begin{table}[!t]
	\centering
	\tiny
	\setstretch{1}
	\caption{List of algorithmic hyperparameters used in the CS recovery task ($m/n = 0.3$)}
	\begin{tabular*}{470pt}{L{45pt}C{5pt}C{120pt}C{54pt}C{54pt}C{54pt}C{54pt}}			
		\toprule
		& \multicolumn{2}{c}{\textbf{Hyperparameters}} & \textbf{PMC-PnP/RED} & \textbf{APMC-PnP/RED} & \textbf{PnP-ULA-Ori} & \textbf{DPS} \\	 
		\cmidrule(lr){2-3} \cmidrule(lr){4-7}
		\multirow{4}{*}{\makecell{Common \\ parameters}} & $\xbm^{0}$ & initial point of reconstructions & \multicolumn{4}{c}{i.i.d uniformly drawn from $[\shortneg1,1]^n$} \\ 
		& $\gamma$  & step-size & $1e$-$5$ & $1e$-$5$ & $2e$-$5$ & - \\
		& $\alpha$  & coefficient of the prior score & $1$ & - & $1$ & - \\
		& $\sigma$  & strength of $\Scal_\theta(\xbm, \sigma)$ & $0.009$ & - & $5/255$ & - \\
		\noalign{\vskip 0.5ex}
		\hdashline\noalign{\vskip 0.5ex}
		\multirow{4}{*}{\makecell{Annealing \\ parameters}} & $\xi$ & exponential decay rate & - & $0.99$ & - & - \\
		& $\alpha_0$ & parameter for generating $\{\alphak\}$  & - & $12345$ & - & - \\
		& $\sigma_0$ & initial value of $\{\sigmak\}$ & - & $348$ & - & - \\
		& $\sigma_\mathsf{min}$ & minimal value of $\{\sigmak\}$ & - & $0.009$ & - & - \\
		\noalign{\vskip 0.5ex}
		\hdashline\noalign{\vskip 0.5ex}
		\multirow{2}{*}{DPS parameter} & \multirow{2}{*}{$q$} & the original $\gamma$ parameter in~\cite{Chung.etal2023diffusion} for adjusting the contribution of the likelihood & \multirow{2}{*}{-} & \multirow{2}{*}{-} & \multirow{2}{*}{-} & \multirow{2}{*}{$2.0$} \\
		\noalign{\vskip 0.5ex}
		\hdashline\noalign{\vskip 0.5ex}
		PnP-ULA parameter & \multirow{2}{*}{$\mathsf{C}$} &  \multirow{2}{*}{the range of the projection $\Pi_\mathsf{C}$} & \multirow{2}{*}{-} & \multirow{2}{*}{-} & \multirow{2}{*}{$[\shortneg1,2]^n$} & \multirow{2}{*}{-} \\	
		\bottomrule
	\end{tabular*}
	\label{Tab:CSParam2} 
\end{table}

\begin{table}[!t]
	\centering
	\tiny
	\setstretch{1}
	\caption{List of algorithmic hyperparameters used in the MRI recovery task ($\text{Accel.} = 8\times$)}
	\begin{tabular*}{470pt}{L{45pt}C{5pt}C{120pt}C{54pt}C{54pt}C{54pt}C{54pt}} 	 			
		\toprule
		& \multicolumn{2}{c}{\textbf{Hyperparameters}} & \textbf{PMC-PnP/RED} & \textbf{APMC-PnP/RED} & \textbf{PnP-ULA-Ori} & \textbf{DPS} \\	 
		\cmidrule(lr){2-3} \cmidrule(lr){4-7}
		\multirow{4}{*}{\makecell{Common \\ parameters}} & $\xbm^{0}$ & initial point of reconstructions & \multicolumn{4}{c}{i.i.d uniformly drawn from $[\shortneg1,1]^n$} \\ 
		& $\gamma$  & step-size & $5e$-$6$ & $5e$-$6$ & $5e$-$5$ & - \\
		& $\alpha$  & coefficient of the prior score & $1$ & - & $1$ & - \\
		& $\sigma$  & strength of $\Scal_\theta(\xbm, \sigma)$ & $0.01$ & - & $5/255$ & - \\
		\noalign{\vskip 0.5ex}
		\hdashline\noalign{\vskip 0.5ex}
		\multirow{4}{*}{\makecell{Annealing \\ parameters}} & $\xi$ & exponential decay rate & - & $0.99$ & - & - \\
		& $\alpha_0$ & parameter for generating $\{\alphak\}$  & - & $10000$ & - & - \\
		& $\sigma_0$ & initial value of $\{\sigmak\}$ & - & $348$ & - & - \\
		& $\sigma_\mathsf{min}$ & minimal value of $\{\sigmak\}$ & - & $0.01$ & - & - \\
		\noalign{\vskip 0.5ex}
		\hdashline\noalign{\vskip 0.5ex}
		\multirow{2}{*}{DPS parameter} & \multirow{2}{*}{$q$} & the original $\gamma$ parameter in~\cite{Chung.etal2023diffusion} for adjusting the contribution of the likelihood & \multirow{2}{*}{-} & \multirow{2}{*}{-} & \multirow{2}{*}{-} & \multirow{2}{*}{$4.0$} \\
		\noalign{\vskip 0.5ex}
		\hdashline\noalign{\vskip 0.5ex}
		PnP-ULA parameter & \multirow{2}{*}{$\mathsf{C}$} &  \multirow{2}{*}{the range of the projection $\Pi_\mathsf{C}$} & \multirow{2}{*}{-} & \multirow{2}{*}{-} & \multirow{2}{*}{$[\shortneg1,2]^n$} & \multirow{2}{*}{-} \\	
		\bottomrule
	\end{tabular*}
	\label{Tab:MRIParam1}
\end{table}

\begin{table}[!t]
	\centering
	\tiny
	\setstretch{1}
	\caption{List of algorithmic hyperparameters used in the MRI recovery task ($\text{Accel.} = 4\times$)}
	\begin{tabular*}{470pt}{L{45pt}C{5pt}C{120pt}C{54pt}C{54pt}C{54pt}C{54pt}} 			
		\toprule
		& \multicolumn{2}{c}{\textbf{Hyperparameters}} & \textbf{PMC-PnP/RED} & \textbf{APMC-PnP/RED} & \textbf{PnP-ULA-Ori} & \textbf{DPS} \\	 
		\cmidrule(lr){2-3} \cmidrule(lr){4-7}
		\multirow{4}{*}{\makecell{Common \\ parameters}} & $\xbm^{0}$ & initial point of reconstructions & \multicolumn{4}{c}{i.i.d uniformly drawn from $[\shortneg1,1]^n$} \\ 
		& $\gamma$  & step-size & $5e$-$6$ & $5e$-$6$ & $5e$-$5$ & - \\
		& $\alpha$  & coefficient of the prior score & $1$ & - & $1$ & - \\
		& $\sigma$  & strength of $\Scal_\theta(\xbm, \sigma)$ & $0.01$ & - & $5/255$ & - \\
		\noalign{\vskip 0.5ex}
		\hdashline\noalign{\vskip 0.5ex}
		\multirow{4}{*}{\makecell{Annealing \\ parameters}} & $\xi$ & exponential decay rate & - & $0.99$ & - & - \\
		& $\alpha_0$ & parameter for generating $\{\alphak\}$  & - & $10000$ & - & - \\
		& $\sigma_0$ & initial value of $\{\sigmak\}$ & - & $348$ & - & - \\
		& $\sigma_\mathsf{min}$ & minimal value of $\{\sigmak\}$ & - & $0.01$ & - & - \\
		\noalign{\vskip 0.5ex}
		\hdashline\noalign{\vskip 0.5ex}
		\multirow{2}{*}{DPS parameter} & \multirow{2}{*}{$q$} & the original $\gamma$ parameter in~\cite{Chung.etal2023diffusion} for adjusting the contribution of the likelihood & \multirow{2}{*}{-} & \multirow{2}{*}{-} & \multirow{2}{*}{-} & \multirow{2}{*}{$6.0$} \\
		\noalign{\vskip 0.5ex}
		\hdashline\noalign{\vskip 0.5ex}
		PnP-ULA parameter & \multirow{2}{*}{$\mathsf{C}$} &  \multirow{2}{*}{the range of the projection $\Pi_\mathsf{C}$} & \multirow{2}{*}{-} & \multirow{2}{*}{-} & \multirow{2}{*}{$[\shortneg1,2]^n$} & \multirow{2}{*}{-} \\	
		\bottomrule
	\end{tabular*}
	\label{Tab:MRIParam2}
\end{table}

\begin{figure*}[!t]
\centering
\includegraphics[width=0.9\textwidth]{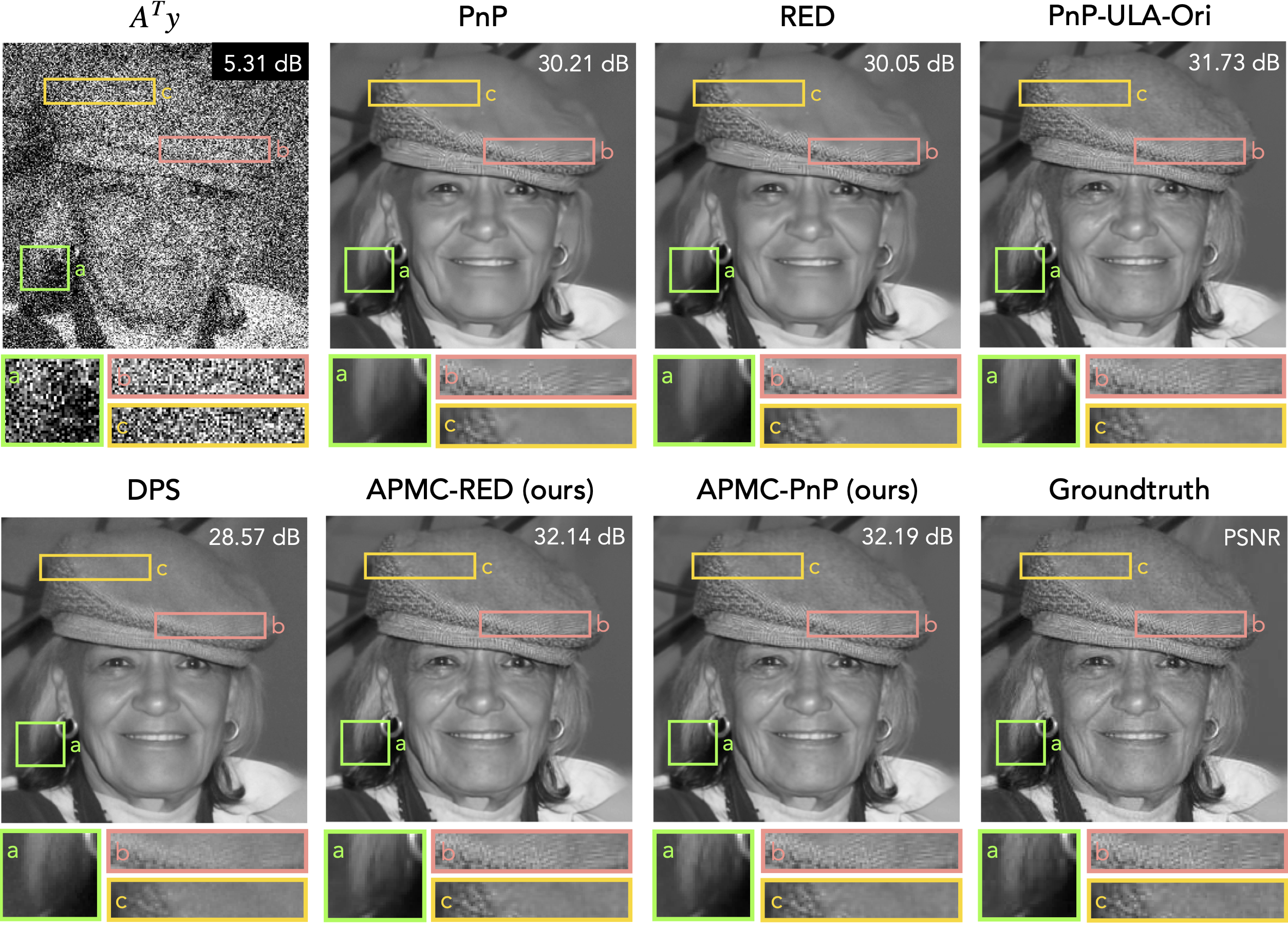}
\caption{
Visual comparison of the reconstructions obtained by the APMC and baseline algorithms for $30\%$ CS recovery.
We include the simple adjoint reconstruction ($\Abm^T\ybm$) to illustrate the ill-posedness of this task.
The final images of the sampling algorithms (PnP-ULA-Ori, DPS, APMC-RED, and APMC-PnP) are obtained by averaging $50$ image samples.
The visual difference is highlighted in the zoom-in images.
Note how APMC algorithms restore the fine textures of the hair and hat.
While PnP-ULA-Ori yields visual quality on par with APMC algorithms, but the latter still numerically outperform the former by about $0.4$ dB.
}
\label{Fig:CS_MMSE}
\end{figure*}

\begin{figure*}[t!]
\centering
\includegraphics[width=0.9\textwidth]{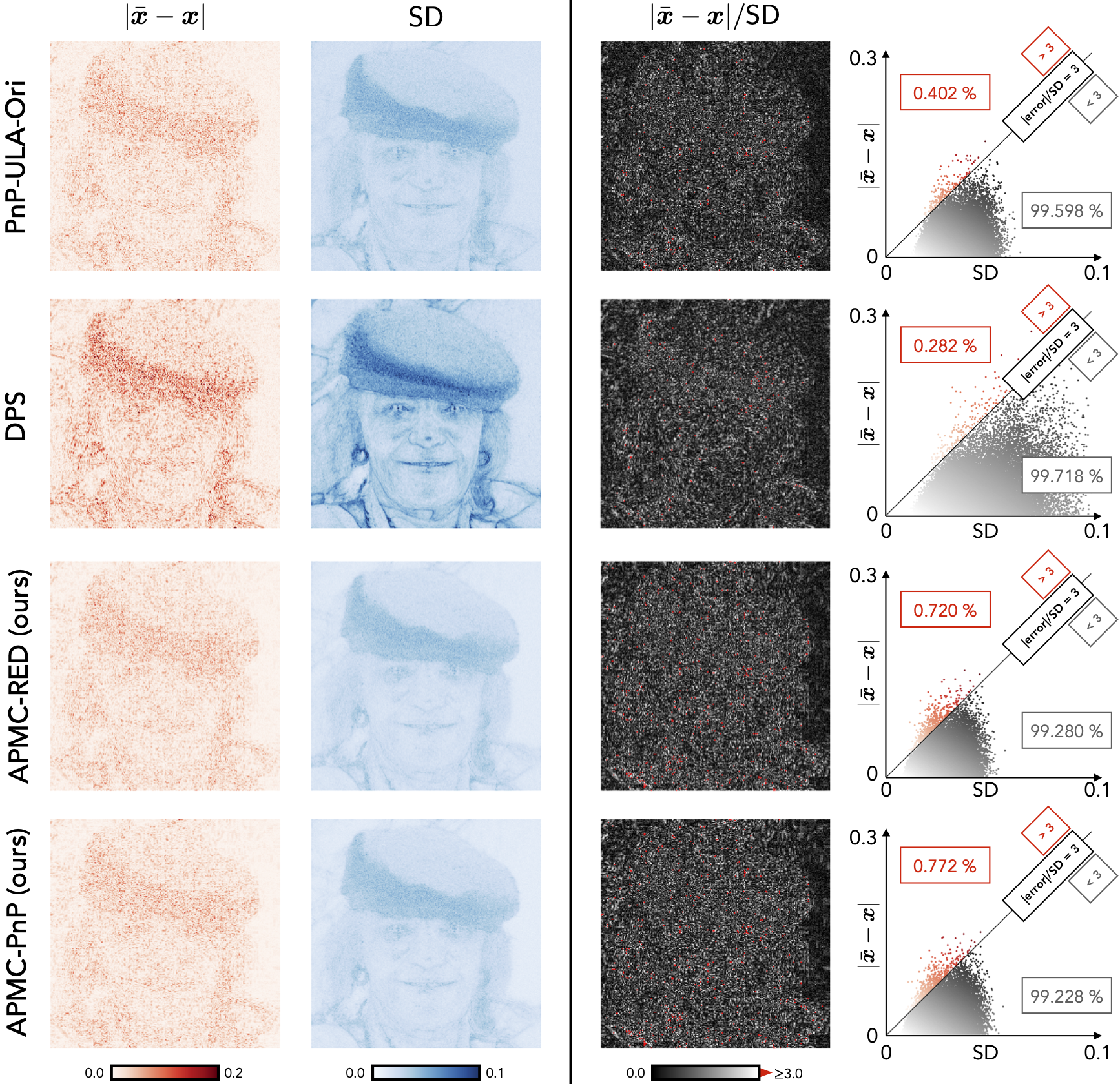}
\caption{
Visualization of the pixel-wise statistics associated with the reconstructions shown in Fig.~\ref{Fig:CS_MMSE}.
The left columns plot the absolute error ($|\xbmbar-\xbm|$) and standard deviation (SD), and the right columns plot the 3-SD credible interval with the outlying pixels highlighted in red.
Note that APMC algorithms lead to a better UQ performance than the baselines by recovering an accurate mean and avoiding large SD.
}
\label{Fig:CS_UQ}
\end{figure*}

\begin{figure*}[t!]
\centering
\includegraphics[width=0.9\textwidth]{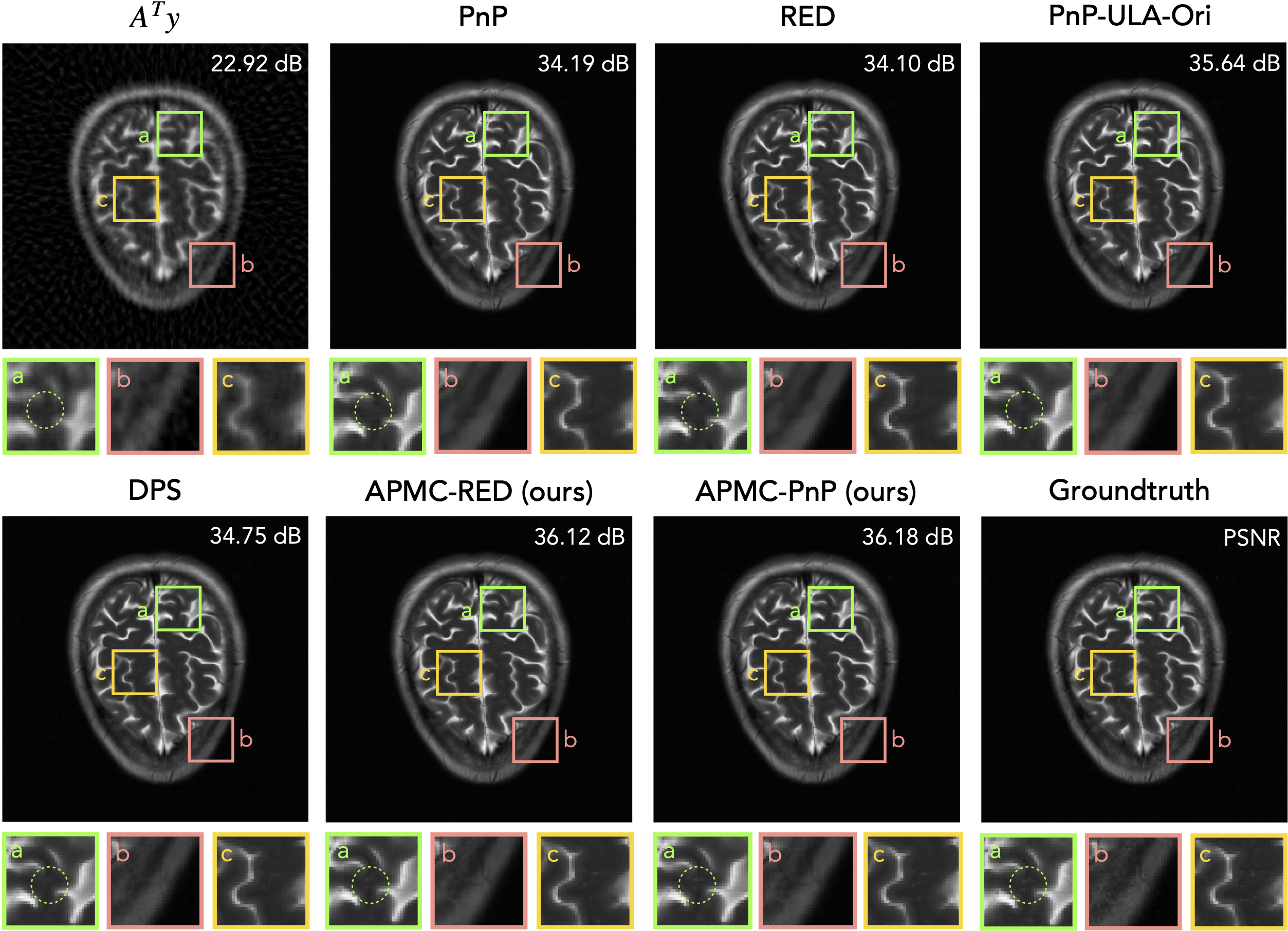}
\caption{
Visual comparison of the reconstructions obtained by the APMC and baseline algorithms for $4\times$ MRI recovery.
We include the simple adjoint reconstruction ($\Abm^T\ybm$) to illustrate the ill-posedness of this task.
The final images of the sampling algorithms (PnP-ULA, DPS, APMC-RED, and APMC-PnP) are obtained by averaging $50$ image samples.
The visual difference is shown in the zoom-in images.
We note that all algorithms yield a relatively high reconstruction quality due to the moderate ill-posedness of this problem.
Note that APMC algorithms still numerically outperform PnP-ULA-Ori.
}
\label{Fig:MRI_MMSE}
\end{figure*}

\begin{figure*}[t!]
\centering
\includegraphics[width=0.9\textwidth]{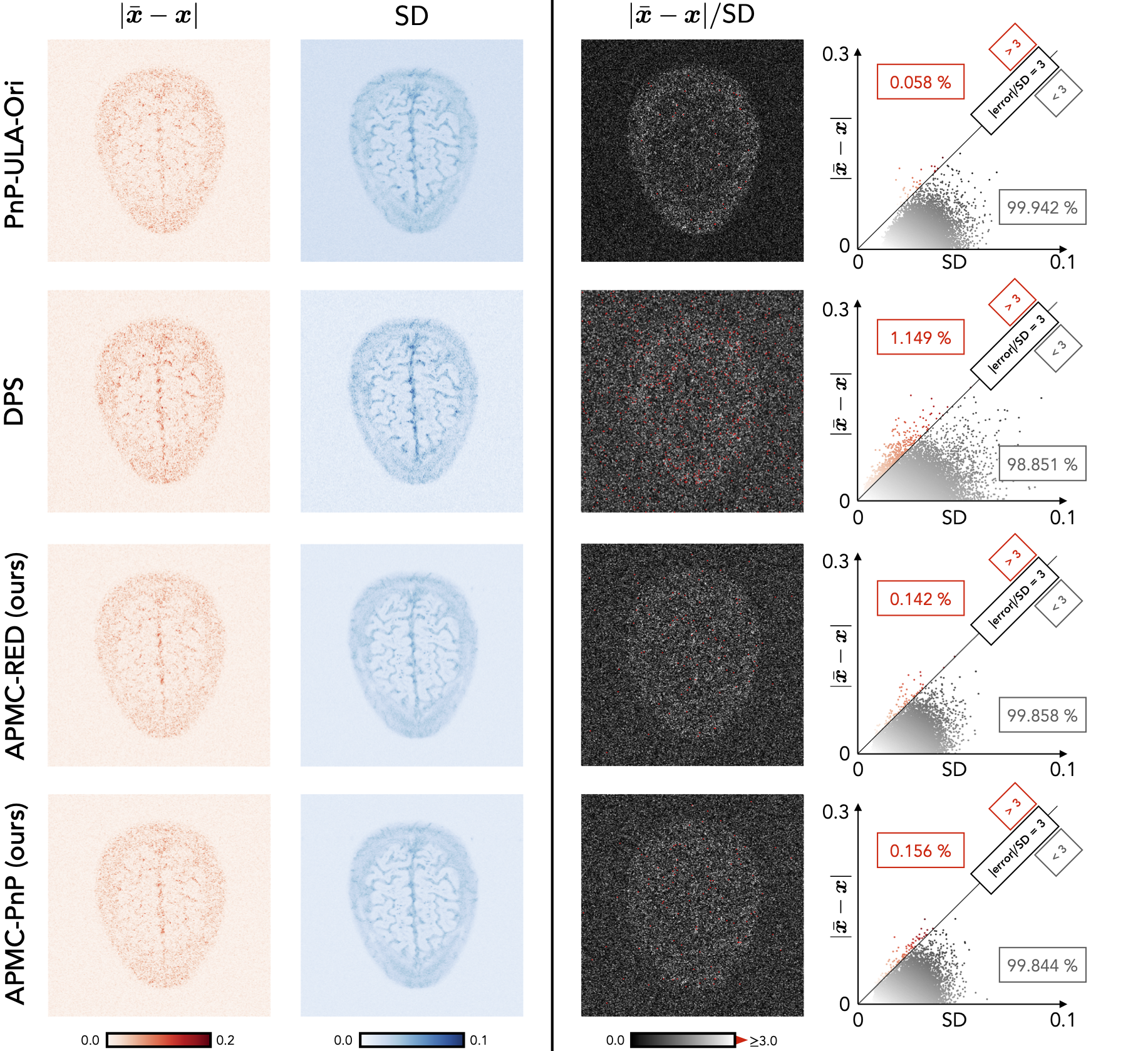}
\caption{
Visualization of the pixel-wise statistics associated with the reconstructions shown in Fig.~\ref{Fig:MRI_MMSE}.
The left columns plot the absolute error ($|\xbmbar-\xbm|$) and standard deviation (SD), and the right columns plot the 3-SD credible interval with the outside pixels highlighted in red.
While all algorithms achieve similar UQ performance due to the mild ill-posedness, we observe APMC algorithms obtain slightly smaller SD while still maintaining 99\% 3-SD coverage. 
}
\label{Fig:MRI_UQ}
\end{figure*}

\begin{figure*}[t!]
\centering
\includegraphics[width=0.9\textwidth]{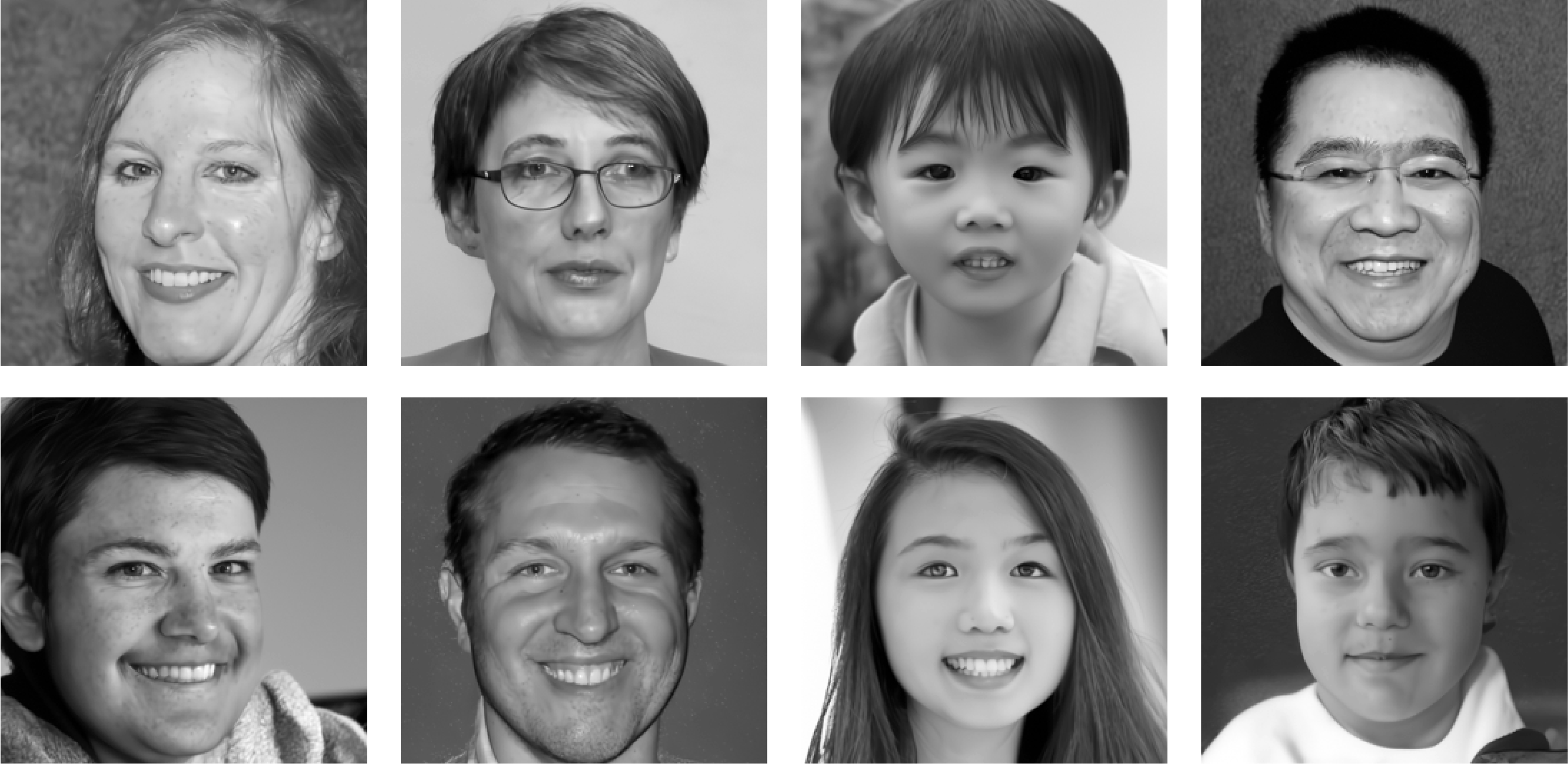}
\caption{
Visualization of the capability of APMC to perform unconditional generation of human face images.
The images are generated using the score network trained on the FFHQ dataset. Note the APMC can generate images with different gender, age, ethnicity, and background.
}
\label{Fig:Uncond_Face}
\end{figure*}

\begin{figure*}[t!]
\centering
\includegraphics[width=0.9\textwidth]{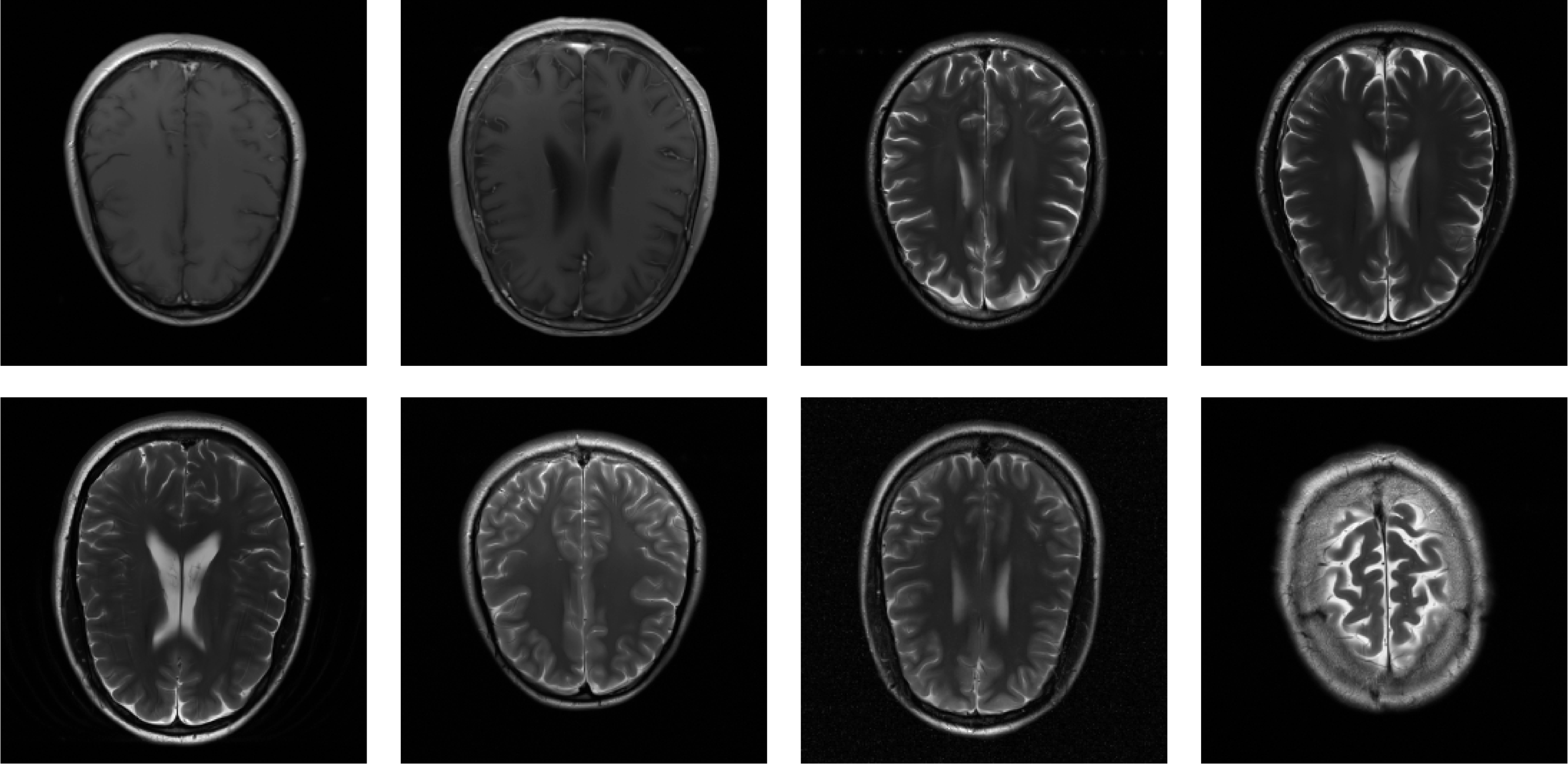}
\caption{
Visualization of the capability of APMC to perform unconditional generation of MRI images.
The images are generated using the score network trained on the FastMRI brain dataset, which images of different contrasts (T1, T2, and FLAIR).
Note that APMC can successfully restore such technical variety with high fidelity.
}
\label{Fig:Uncond_MRI}
\end{figure*}

\begin{figure*}[t!]
\centering
\includegraphics[width=0.9\textwidth]{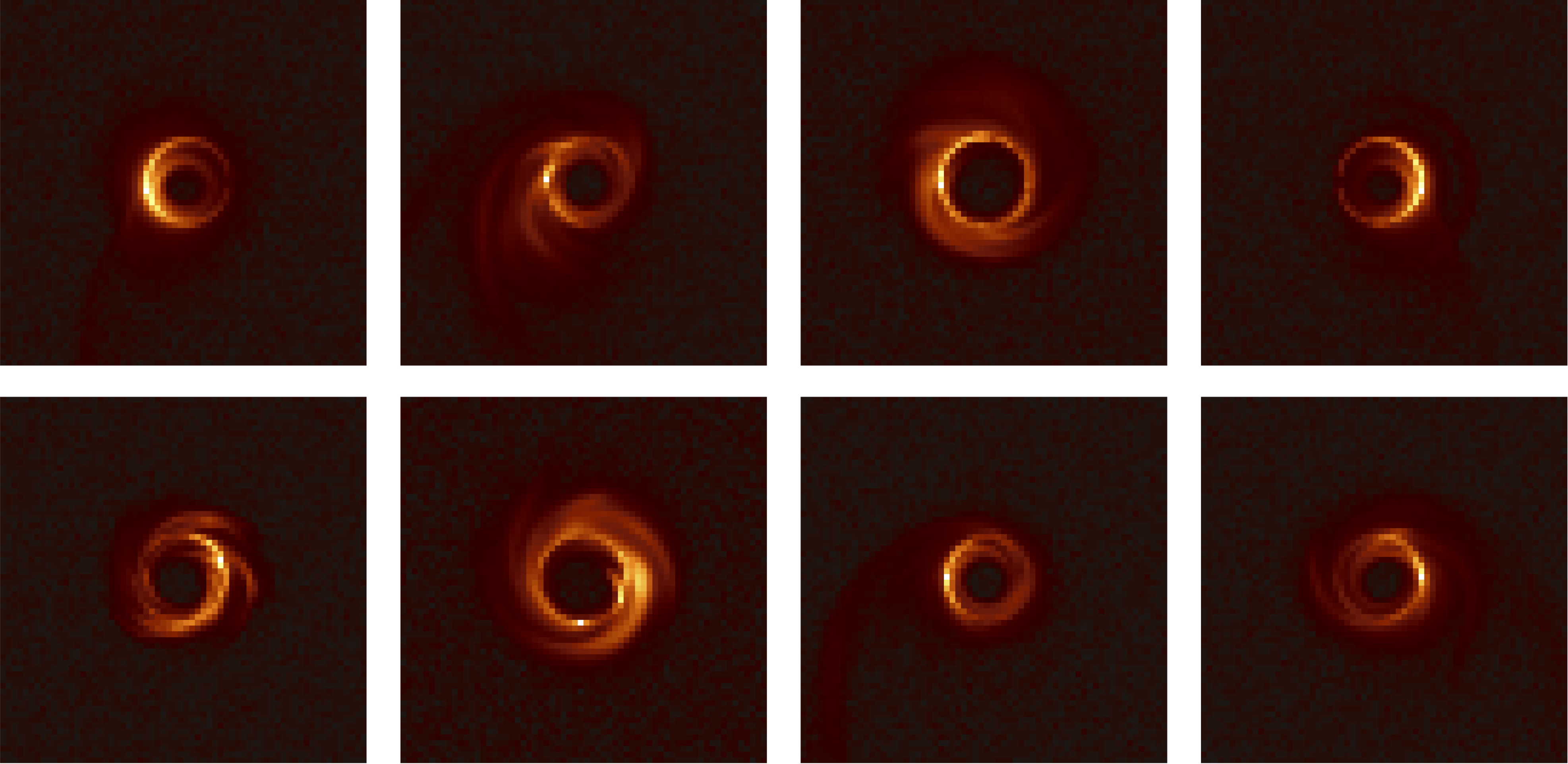}
\caption{
Visualization of the capability of APMC to perform unconditional generation of black hole images.
The images are generated using the score network trained on the GRMHD dataset with data augmentation of flipping and resizing of the black hole.
Note that APMC can generate images with different flux rotations and diameters of the event horizon.
}
\label{Fig:Uncond_BH}
\end{figure*}

\end{document}